\newtheorem{thm}{Theorem}[section]
\newtheorem{lem}[thm]{Lemma}
\newtheorem{conj}[thm]{Conjecture}
\theoremstyle{definition}
\theoremstyle{remark}
\newtheorem{remark}[thm]{Remark}
\numberwithin{equation}{section}
\newcommand{\group}[1]{\mathrm{#1}}
\newcommand{\mon}{\vec{H}}
\newcommand{\Tr}{\operatorname{Tr}}
\newcommand{\End}{\operatorname{End}}
\newcommand{\Cat}{\operatorname{Cat}}
\newcommand{\D}{\mathbf{D}}
\renewcommand{\O}{\mathcal{O}}
\newcommand{\N}{\mathbb{N}}
\renewcommand{\P}{\mathbb{P}}
\newcommand{\Z}{\mathbb{Z}}
\newcommand{\LIS}{\mathrm{LIS}}
\DeclareRobustCommand{\stirling}{\genfrac\{\}{0pt}{}}
\newcommand{\C}{\mathbb{C}}
\newcommand{\R}{\mathbb{R}}
\begin{document}


\title[Topological Expansion of Oscillatory BGW and HCIZ Integrals]{Topological Expansion of Oscillatory BGW and HCIZ Integrals at Strong Coupling}
\keywords{Asymptotic Expansions, Matrix Integrals, Hurwitz Numbers.}
\subjclass{41A60,30E15,05E05}

 
\author{Jonathan Novak}
\address{Department of Mathematics, University of California, San Diego, USA}
\email{jinovak@ucsd.edu}



 \maketitle
 
 
 
 \tableofcontents
 \pagebreak



\section{Introduction}
\label{sec:Intro}

\subsection{Overview}
The purpose of this paper is to prove a longstanding conjecture on 
the asymptotic behavior of two particularly significant
matrix integrals: the Bars-Green/Br\'ezin-Gross-Witten/Wadia integral,

	\begin{equation}
		\label{eqn:BGW}
		I_N^{(1)} = \int_{\mathrm{U}(N)} e^{\sqrt{z}N \mathrm{Tr}(AU + BU^{-1})} \mathrm{d}U,
	\end{equation}

\noindent
and the Harish-Chandra/Itzykson-Zuber integral,

	\begin{equation}
		\label{eqn:HCIZ} 
		I_N^{(2)} = \int_{\mathrm{U}(N)} e^{zN \mathrm{Tr} AUBU^{-1}} \mathrm{d}U.
	\end{equation}
	
\noindent
These are integrals over unitary matrices against unit mass Haar measure which depend on a complex parameter $z$ 
and a pair of complex matrices $A$ and $B$. The BGW and HCIZ integrals arise in many contexts throughout mathematics
and physics, ranging from representation theory and random matrix theory to quantum field theory and statistical mechanics,
and the following conjecture on their $N \to \infty$ asymptotics is widely known; see \cite{Collins,CGM,GGN3,Guionnet:PS,GN} and 
further references below.

	\begin{conj}[Topological Expansion Conjecture]
	\label{conj:Main}
	There exists a positive constant $\varepsilon$ such that, for each 
	nonnegative integer $k$, we have

		\begin{equation*}
			F_N = \sum_{g=0}^k N^{2-2g} F_{Ng} + o(N^{2-2k})
		\end{equation*}
		
	\noindent
	as $N \to \infty$, where $F_N=\log I_N$ and the error term is uniform over complex numbers $z$ of modulus at most $\varepsilon$ and complex
	matrices $A,B$ of spectral radius at most $1,$ with $F_{Ng}$ analytic functions of $z$ and the eigenvalues of $A$ and $B$ whose
	modulus is uniformly bounded in $N$. Moreover, $F_{Ng}$ is a generating function for combinatorial invariants of
	compact connected genus $g$ Riemann surfaces.	
	\end{conj}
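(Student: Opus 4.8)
\emph{Sketch of the strategy.}
The plan is to pass, for $z$ small, from the exact representation-theoretic expansion of each integral to a convergent series of \emph{connected} combinatorial contributions, and then to resum this series by Euler characteristic; the whole argument hinges on a single analytic estimate, uniform in the genus. First I would record the exact expansions. Schur--Weyl duality and the unitary Weingarten calculus give, for $|z|$ small,
	\begin{equation*}
		I_N^{(2)}=\sum_{d\ge0}(zN)^d\sum_{\substack{\lambda\vdash d\\ \ell(\lambda)\le N}}\frac{s_\lambda(A)\,s_\lambda(B)}{\prod_{\square\in\lambda}(N+c(\square))},
	\end{equation*}
with $c(\square)$ the content of a box, and $I_N^{(1)}$ has an analogous expansion over pairs of permutations weighted by $\Wg(\,\cdot\,,N)$. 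Taking logarithms --- the logarithm collecting the connected contributions --- gives $F_N=\sum_{d\ge1}c_d(N)\,z^d$, where $c_d(N)$ is a sum of \emph{connected} monotone Hurwitz numbers (double for HCIZ, of BGW type for $I_N^{(1)}$) weighted by power sums in the eigenvalues of $A$ and $B$; and, as is essentially known from the cited works through the matrix-model combinatorics and the theory of monotone Hurwitz numbers, for $N\ge d$ one has a convergent expansion $c_d(N)=\sum_{g\ge0}h_{d,g}(A,B)\,N^{2-2g}$ in which only the even powers $N^{2-2g}$ occur (orientability of the covers) and, by the Riemann--Hurwitz formula, the coefficient of $N^{2-2g}$ is the connected genus-$g$ piece, a symmetric polynomial in the eigenvalues. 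Write $\Phi_g(z)=\sum_{d\ge1}h_{d,g}(A,B)\,z^d$; the sought expansion should read $F_{Ng}=\Phi_g$.

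Two estimates on connected monotone Hurwitz numbers then carry the argument. The first is a crude a priori bound: the Riemann--Hurwitz formula caps the exponent of $N$ in $c_d(N)$ at $2$, and monotonicity --- the feature suppressing the factorial blow-up of ordinary Hurwitz numbers --- keeps the number of connected degree-$d$ configurations down to $C^d$, for a constant $C$ uniform over matrices with eigenvalues in the closed unit disk; hence $|c_d(N)|\le C^d N^2$ for all $N$ and $d$. Consequently $F_N=\sum_d c_d(N)z^d$ converges on $|z|<\varepsilon:=1/C$ with sum $O(N^2)$, uniformly in $N$; since this series agrees formally with $\log I_N$ and both sides are analytic on the disk, $I_N$ is zero-free there --- so $F_N$ is well defined even for purely imaginary $z$, where the integrals oscillate --- and $F_N$ equals the series. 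The second estimate is the \emph{crux}: \emph{after possibly shrinking $\varepsilon$, there exist constants $A_0,B_0>0$, independent of $N$, with}
	\begin{equation*}
		\sum_{d\ge1}\bigl|h_{d,g}(A,B)\bigr|\,\varepsilon^{d}\le A_0B_0^{g}\qquad\text{for every }g\ge0,
	\end{equation*}
uniformly over $A,B$ with eigenvalues in the closed unit disk; equivalently, each $\Phi_g$ is holomorphic on a \emph{fixed} polydisk, with $\sup|\Phi_g|\le A_0B_0^{g}$.

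Granting both estimates, absolute convergence on $|z|\le\varepsilon$ (for $N$ large) lets one interchange $\sum_d$ and $\sum_{g\le k}$, and since for $d\le N$ the $d$-th coefficient of $F_N$ is exactly $\sum_{g\ge0}h_{d,g}N^{2-2g}$ one obtains
	\begin{equation*}
		F_N-\sum_{g=0}^{k}N^{2-2g}\Phi_g(z)=\sum_{g>k}N^{2-2g}\Phi_g(z)\;+\;\sum_{d>N}z^d\Bigl(c_d(N)-\sum_{g\ge0}h_{d,g}N^{2-2g}\Bigr).
	\end{equation*}
The first sum is at most $\sum_{g>k}N^{2-2g}A_0B_0^{g}=O(N^{-2k})$; in the second, both $|c_d(N)|$ and $\bigl|\sum_g h_{d,g}N^{2-2g}\bigr|$ are bounded by a power of $N$ times an exponential in $d$, so --- the range of summation being $d>N$ and $|z|$ strictly inside the disk --- that part is $O(N^2\rho^N)$ for some $\rho<1$. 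Both are $o(N^{2-2k})$ uniformly over $|z|\le\varepsilon$ and over matrices with eigenvalues in the closed unit disk, so $F_N=\sum_{g=0}^{k}N^{2-2g}\Phi_g(z)+o(N^{2-2k})$. Taking $F_{Ng}:=\Phi_g$ completes the proof: $\Phi_g$ is $N$-independent (hence trivially uniformly bounded in $N$), analytic in $z$ and in the eigenvalues of $A$ and $B$ on the fixed polydisk above, and by construction the generating function of connected genus-$g$ monotone Hurwitz numbers --- enumerative invariants of compact connected genus-$g$ Riemann surfaces.

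The \textbf{main obstacle} is the second estimate, the uniform-in-genus analyticity of the $\Phi_g$. Exponential control in $d$ for a \emph{single} genus is already a substantial fact about monotone Hurwitz numbers --- $\Phi_0$ is in fact algebraic --- but what the theorem requires is that one radius of convergence, and a bound growing only exponentially in $g$, serve \emph{all} genera at once, and this is where the real work lies. I expect one of two routes: a direct analysis of monotone factorizations through the cut-and-join (topological) recursion governing the $\Phi_g$, checking that each recursion step shrinks the polydisk of analyticity by at most a fixed factor; or an analytic bootstrap that extracts the bound on $h_{d,g}$ from Cauchy estimates on $F_N$ in the variable $N$, exploiting the purely even structure of its $1/N$-expansion to isolate the genus-$g$ term. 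By contrast, the exact expansions, the identification of the $h_{d,g}$ with connected monotone Hurwitz numbers, the Riemann--Hurwitz genus grading, the crude a priori bound, the zero-freeness of $I_N$, and the remainder and overflow estimates are comparatively routine once the uniform estimate is established.
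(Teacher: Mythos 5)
Your overall architecture --- identify the coefficients of $\log I_N$ with connected monotone Hurwitz numbers, get zero-freeness from absolute convergence of the formal logarithm, then control the genus and degree tails --- has the right shape, and you have correctly located one crux (genus-uniform summability). But the ``crude a priori bound'' $|c_d(N)|\le C^dN^2$ is a genuine gap, and it is fatal as written. That bound is asserted for all $N$ and $d$, yet the justification offered (Riemann--Hurwitz capping the power of $N$ at $2$, plus an exponential count of connected monotone factorizations) only applies in the stable range $d\le N$, where $c_d(N)$ really is $\sum_g h_{d,g}N^{2-2g}$. For $d>N$ the character sum defining $I_N^d$ is truncated to $\ell(\lambda)\le N$, the content product $\prod_{\Box\in\lambda}(1+c(\Box)/N)^{-1}$ no longer admits a convergent $1/N$-expansion, and the exponential formula linking $\log I_N$ to connected covers breaks down; so the Hurwitz interpretation tells you nothing about $c_d(N)$ there. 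Since both your zero-freeness of $I_N$ and your bound on the overflow $\sum_{d>N}$ rest on this estimate, the argument fails exactly where the problem is hard. The paper's entire large-$N$ section exists to bridge this gap: it never touches $\log I_N$ until nonvanishing has been forced out of the normalized partition function $\Phi_{Nk}=E_{N\overline{k}}^{-1}I_N$ (an entire function), the range $d>tN^2$ is killed by the factorial in the coupling prefactor, the unstable critical range $N<d\le tN^2$ is handled by Plancherel concentration (diagrams with $d<\frac14N^2$ cells fit in the $N\times N$ square up to Plancherel-negligible terms, restoring convergence of the content expansion), and a Borel--Carath\'eodory step converts the multiplicative concentration bound into the additive bound on the logarithm. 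None of this is ``routine overflow estimation.''

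On your second estimate: the paper neither proves nor needs $\sum_d|h_{d,g}|\varepsilon^d\le A_0B_0^g$ with $B_0$ absolute. What it actually uses is (i) the fixed-genus radius of convergence $2/27$ from the monotone Hurwitz literature, and (ii) the large-genus asymptotics $\mon_g^{\bullet d}\sim\mathrm{const}\cdot(d-1)^{2g}$ of the \emph{disconnected simple} numbers in fixed degree, so that the genus tail becomes a geometric series in $((d-1)/N)^2$, summable precisely because $d\le N$ (with extra Plancherel and coupling suppression beyond). The base of that geometric series depends on $d$, not on an absolute constant $B_0$; whether your stronger uniform-in-$(d,g)$ bound is even true is unclear, and the degree-dependent version is both sufficient and already available.
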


In stating Conjecture \ref{conj:Main} we have introduced the following notational convention
used throughout the paper: any declarative sentence in which the symbol $I_N$ appears without a superscript 
holds true for both \eqref{eqn:BGW} and \eqref{eqn:HCIZ}. The same convention is applied to functions of 
these integrals, e.g. $F_N=\log I_N$. The superscript is restored whenever necessary or helpful.
	
The main result of this paper is a proof of Conjecture \ref{conj:Main} --- we show that
the claimed logarithmic asymptotics of the oscillatory integrals $I_N^{(m)}$ do indeed exist,
and enumerate branched covers of the Riemann sphere 
with at most $m$ non-simple branch points. The precise statement is Theorem \ref{thm:Main} below.

The resolution of Conjecture \ref{conj:Main} presented here has many
potential applications. In particular, it should help clear the way for the development of Fourier-analytic 
techniques in the asymptotic spectral analysis of random matrices, and the parallel development of methods in asymptotic representation theory 
and integrable probability based on the orbit method. 
Moreover, the $m=1$ case should be useful for certain calculations 
in $\group{U}(N)$ lattice gauge theory, while the $m=2$ case should have
applications to multimatrix models in random matrix theory.

\subsection{Background}
The integrals \eqref{eqn:BGW} and \eqref{eqn:HCIZ} are analytic continuations of
natural integral transforms: the former was studied by James \cite{James} as the Fourier transform of Haar measure on $\group{U}(N)$,
while the latter was introduced by Harish-Chandra \cite{HC} as 
the Fourier transform of Haar measure on a given coadjoint orbit of $\group{U}(N)$.
Both are Bessel functions of matrix argument, a class of special functions
introduced by Herz \cite{Herz} whose properties at fixed rank $N$ have been studied
from a variety of perspectives \cite{Constantine,EK,GR,Macdonald:hyp}.

The $N \to \infty$ asymptotic behavior of the integrals $I_N$ was first studied 
by theoretical physicists working in quantum field theory, and it was
in this context that the first versions of Conjecture \ref{conj:Main} emerged \cite{Bars,BarsGreen,BG,GW,IZ,Samuel,Wadia}.
In particular, the hypothetical form of the asymptotic expansions claimed in Conjecture \ref{conj:Main}, 
as well as well as their supposed relation to Riemann surfaces,
derives from the interplay between two fundamental approximation schemes used in quantum field theory --- the 
strong coupling expansion \cite{Wilson} and the large $N$ expansion \cite{tHooft}.

To explain further, let us view $I_N$ as the partition function 
of a Gibbs measure on $\group{U}(N)$, with the complex variable $z$ playing the role of an inverse
coupling/temperature parameter, and the complex matrices $A$ and $B$ being viewed as external fields. 
We shall see below that $I_N^{(2)}$
depends on $A$ and $B$ only through their eigenvalues $a_1,\dots,a_N,b_1,\dots,b_N \in \C$,
while $I_N^{(1)}$ depends on $A$ and $B$ only through 
the eigenvalues of their product, so that we may take $B$ to be the identity matrix in \eqref{eqn:BGW}.
Thus $I_N^{(m)}$ may be viewed as an entire function of $mN+1$ complex variables
whose restriction to $\R^{mN+1}$ takes positive values, defining a true 
partition function in the sense of statistical physics. See \cite{LMV} for some recent algorithmic results
on sampling from the corresponding Gibbs measure in the case $m=2$.

Since $I_N=1$ at infinite coupling $z=0$, the free energy 
$F_N=\log I_N$ is defined and holomorphic in a neighborhood of the infinite coupling hyperplane 
$\{z = 0\} \subset \C^{mN+1}$. Take the closed origin-centered polydisc $\D_N$ of unit polyradius
$(1,\dots,1)$ in the phase space $\C^{mN}$ of the external field eigenvalues, embed it in the infinite coupling hyperplane,
and thicken it out to $\D_N(\varepsilon)$, the closed origin-centered polydisc of 
polyradius $(\varepsilon,1,\dots,1)$ in $\C^{mN+1}$, i.e.
the complex variables version of a compact box of height $2\varepsilon$ in the 
inverse coupling dimension and width $2$ in each of the $mN$ eigenvalue dimensions.
By construction, $F_N$ vanishes on the null set $\D_N(0) \subset \C^{mN+1}$, and
Conjecture \ref{conj:Main} posits an approximation of this holomorphic function on the full-dimensional 
compact set $\D_N(\varepsilon)$ in the regime where $N \to \infty$ with $\varepsilon > 0$ fixed.

The first claim made by Conjecture \ref{conj:Main} is that this asymptotic problem is 
well-posed, in the sense that the thickening parameter 
$\varepsilon$ in the above construction may be selected 
irrespective of the dimension parameter $N$. 
It is not at all clear that this is the case, since a priori we know nothing about the 
geometry of the zeros of the partition function:
for any given $\varepsilon > 0$, it may be that the hypersurface $\{I_N=0\}$ 
and the polydisc $\D_N(\varepsilon)$ intersect nontrivially 
in $\C^{mN+1}$ for infinitely many $N \in \N$. For example, it is known \cite{GGN3,TZ,Z} that the zero 
locus of $I_N^{(2)}$ in $\C^{2N+1}$ intersects $\D_N(\pi)=\D_N(3.14159265\dots)$ for all $N>1$,
a fact which invalidates several spurious claims in the physics literature \cite{GrossMat,KW}.
Conjecture \ref{conj:Main} is thus
predicated on the existence of an absolute constant $\delta >0$ guaranteeing nonvanishing of
$I_N$ on $\D_N(\delta)$, for all $N \in \N$. We shall refer to the hypothetical existence of 
such a constant $\delta$ as the stable non-vanishing hypothesis.

Assuming the stable non-vanishing hypothesis holds, the free energy $F_N$ belongs to the Banach algebra 
$(\O_N(\delta),\|\cdot\|_\delta)$ of germs of holomorphic functions on $\D_N(\delta)$
equipped with sup norm, for all $N \in \N$. A natural way to approximate $F_N$ is then via its Maclaurin series, 
which converges $\|\cdot\|_\delta$-absolutely.
The strong coupling expansion of $F_N$ is simply its Maclaurin series, written in the form

	\begin{equation}
	\label{eqn:StrongCouplingExpansion}
		F_N = \sum_{d=1}^\infty \frac{z^d}{d!} F_N^d,
	\end{equation}
	
\noindent
with the coefficients $F_N^d$ being viewed as homogeneous degree $d$ polynomial functions of the external field eigenvalues
(the superscript $d$ is an index, not an exponent). 

According to a fundamental principle in quantum field theory which lifts the apparatus of Feynman diagrams
to matrix integrals \cite{tHooft,BIZ,BIPZ,IZ}, and beyond \cite{EO,KS}, the strong coupling coefficients $F_N^d$ are 
expected to stratify topologically as $N \to \infty$, 

	\begin{equation}
	\label{eqn:LargeNExpansion}
		F_N^d \sim \sum_{g=0}^\infty N^{2-2g} F_{Ng}^d.
	\end{equation}

\noindent
The approximation \eqref{eqn:LargeNExpansion} is a particular instance 
of a general ansatz in quantum field theory and statistical physics 
known variously as the large $N$ expansion, the $1/N$ expansion, the genus expansion, 
the topological expansion, or the 't Hooft expansion; the leading term is called the planar 
limit, the spherical limit, or the 't Hooft limit.
The compendium \cite{BW} contains many fascinating examples of this device in action. 
In some situations, such as $\group{U}(N)$ gauge theory \cite{tHooft}
and Hermitian matrix models \cite{BIPZ},
the series \eqref{eqn:LargeNExpansion} is actually a finite sum.
This is not so in the the case at hand, which is associated with $\group{U}(N)$ 
gauge theory on a lattice --- in this case the expansion \eqref{eqn:LargeNExpansion} is always infinite and almost always divergent \cite{DtH}.

The precise meaning of the expansion \eqref{eqn:LargeNExpansion} in the present situation is that, for any fixed 
$d \in \N$ and fixed $k \in \N_0,$ we should have

	\begin{equation}
	\label{eqn:LargeNExpansionPrecise}
		\lim_{N \to \infty} N^{2k-2} \left\| F_N^d - \sum_{g=0}^k N^{2-2g} F_{Ng}^d \right\| = 0,
	\end{equation}
	
\noindent
where $\|\cdot\|$ is sup norm on bounded functions on the unit polydisc $\D_N \subset \C^{mN}$,
and $F_{Ng}^d$ is a homogeneous degree $d$ polynomial in $mN$ variables which satisfies

	\begin{equation}
	\label{eqn:Uniform}
		\sup_{N \in \N} \|F_{Ng}^d\| < \infty.
	\end{equation}
	
\noindent
The topological feature of the polynomials $F_{Ng}^d$ alluded to above 
is that, when expressed as polynomials in the moments of the empirical eigenvalue
distributions of the external fields, their coefficients are expected to be topological
invariants of compact connected genus $g$ Riemann surfaces. 
Whereas the topological invariants underlying the large $N$ expansion 
in continuum gauge theory and Hermitian matrix models have long been known to be counts of
isotopy classes of embedded graphs, 
a fact which has had many ramifications in mathematical physics \cite{DGZ,Witten} and
algebraic geometry \cite{Looijenga}, the topological invariants 
which presumably underly the large $N$ expansion in nonabelian lattice gauge theories have remained
mysterious. It has even been argued that the leading order $F_{N0}^d$ in
\eqref{eqn:LargeNExpansion} does not consist of purely genus zero information \cite{Weingarten}.
See \cite{OZ2} for further discussion of these issues. 

We will see below that the large $N$ expansion \eqref{eqn:LargeNExpansion} of $F_N^d$ does indeed exist and 
that its coefficients $F_{Ng}^d$ admit a topological interpretation in terms of degree $d$ branched covers of
the Riemann sphere by a compact connected genus $g$ surface, with at most $m$ non-simple branch points. 
We will also see that
the series \eqref{eqn:LargeNExpansion} converges uniformly absolutely 
on compact subsets of $\C^{mN}$ for $d\leq N$ (the stable range)
but not for $d>N$ (the unstable range).
Conjecture \ref{conj:Main} expresses the hope that one can nevertheless plug the
large $N$ expansion \eqref{eqn:LargeNExpansion}
into the strong coupling expansion \eqref{eqn:StrongCouplingExpansion}, 

	\begin{equation}
		F_N=\sum_{d=1}^\infty \frac{z^d}{d!} F_N^d \longrightarrow \sum_{d=1}^\infty \frac{z^d}{d!} \sum_{g=0}^\infty N^{2-2g} F_{Ng}^d,
	\end{equation}

\noindent
change order of summation, 

	\begin{equation}
		\sum_{d=1}^\infty \frac{z^d}{d!} \sum_{g=0}^\infty N^{2-2g} F_{Ng}^d \longrightarrow 
		\sum_{g=0}^\infty N^{2-2g} \sum_{d=1}^\infty \frac{z^d}{d!} F_{Ng}^d,
	\end{equation}

\noindent
and thereby arrive at an $N \to \infty$ asymptotic expansion of the free energy itself, 

	\begin{equation}
	\label{eqn:TopologicalExpansion}
		F_N \sim \sum_{g=0}^\infty N^{2-2g} F_{Ng},
	\end{equation}

\noindent
the coefficients of which are genus-specific generating functions,

	\begin{equation}
	\label{eqn:GenusSpecific}
		F_{Ng} = \sum_{d=1}^\infty \frac{z^d}{d!}F_{Ng}^d.
	\end{equation}
	
\noindent
Conjecture \ref{conj:Main} is thus claiming the existence of three
absolute constants: a $\delta > 0$ such that the integral $I_N$ is non-vanishing on $\D_N(\delta)$ 
for all $N$, a $\gamma > 0$ such that 
the series $F_{Ng}$ converges uniformly absolutely on $\D_N(\gamma)$ for all $N$ and $g$,
and finally an $\varepsilon \in (0,\delta) \cap (0,\gamma)$, such that

	\begin{equation}
	\label{eqn:TopologicalLimit}
		\lim_{N \to \infty} N^{2k-2} \left\| F_N - \sum_{g=0}^k N^{2-2g} F_{Ng} \right\|_\varepsilon = 0
	\end{equation}

\noindent
holds for any fixed $k$. 

The exchangeability of the large $N$ expansion and the strong coupling expansion 
has never been verified, and historically has been viewed
with suspicion \cite{GW}. Instead, asymptotic studies of $F_N=\log I_N$
have proceeded along different lines, employing variational calculus \cite{Matytsin},
large deviation theory \cite{GZ}, asymptotics of Toeplitz 
determinants \cite{Johansson:MRL}, and Schwinger-Dyson ``loop'' equations
\cite{BG,CGM,GN}. These approaches all have a common shortcoming: they only work when 
$I_N$ is restricted to the real domain $\R^{mN+1} \subset \C^{mN+1}$, so that oscillatory 
behavior of $I_N$ is suppressed and probabilistic estimates can be used.
The restriction to real parameters is a substantial drawback, since as mentioned
above the BGW and HCIZ integrals are natural Fourier kernels for Haar unitary random matrices and unitarily invariant
Hermitian random matrices, respectively, which forces the issue of understanding their large $N$ behavior 
for complex coupling and complex external fields. While Fourier methods in random matrix theory and integrable probability 
are in active development \cite{BGH,BufGor,CMZ,Faraut,KR,MN3,Zuber:Horn}, a persistent obstacle 
has been that only non-oscillatory asymptotics for the fundamental kernels \eqref{eqn:BGW} 
and \eqref{eqn:HCIZ} have so far been available.

\subsection{Result}
The main result of this paper is a proof of Conjecture \ref{conj:Main}.
Throughout, $(\O_N(\rho),\|\cdot\|_\rho)$ denotes the Banach algebra of germs of
holomorphic functions on the closed origin-centered polydisc of polyradius 
$(\rho,1,\dots,1)$ in $\C^{mN+1}$, equipped with uniform norm. 
The unadorned norm $\|\cdot\|$ always means sup norm on bounded functions 
on the closed unit polydisc $\D_N$ in eigenvalue phase space $\C^{mN}$.
Given a Young diagram $\alpha$, we write $\alpha \vdash d$ to indicate that
$\alpha$ consists of $d$ cells. For a diagram $\alpha$ with $\ell(\alpha)$ rows
and $\alpha_i$ cells in the $i$th row, let 

	\begin{equation}
		p_\alpha(x_1,\dots,x_N) = \prod_{i=1}^{\ell(\alpha)} \sum_{j=1}^N x_j^{\alpha_i}
	\end{equation}
	
\noindent
denote the corresponding Newton symmetric polynomial in $N$ variables. Our main result is the following.
	
	\begin{thm}
	\label{thm:Main}
	There exists $\varepsilon \in (0,\frac{2}{27})$ such that $I_N$ is nonvanishing
	on $\D_N(\varepsilon)$ for all $N \in \N$. Letting
	$F_N=\log I_N \in \O_N(\varepsilon)$, for each $k \in \N_0$ we have
		
		\begin{equation*}
			\lim_{N \to \infty} N^{2k-2} \bigg{\|} F_N - \sum_{g=0}^k N^{2-2g} F_{Ng} \bigg{\|}_\varepsilon = 0,
		\end{equation*}
		
	\noindent
	where

			\begin{equation*}
			\begin{split}
				F_{Ng}^{(1)} & = \sum_{d=1}^\infty  \frac{z^d}{d!} \sum_{\alpha \vdash d}
				\frac{p_{\alpha}(a_1,\dots,a_N)}{N^{\ell(\alpha)}} (-1)^{\ell(\alpha)+d} \mon_g(\alpha), \\
				F_{Ng}^{(2)} &= \sum_{d=1}^\infty  \frac{z^d}{d!} \sum_{\alpha,\beta \vdash d}
				\frac{p_\alpha(a_1,\dots,a_N)}{N^{\ell(\alpha)}} \frac{p_\beta(b_1,\dots,b_N)}{N^{\ell(\beta)}} (-1)^{\ell(\alpha)+\ell(\beta)} 
				\mon_g(\alpha,\beta),
			\end{split}
			\end{equation*}
		
	\noindent
	and the positive integers $\mon_g(\alpha)$,  $\mon_g(\alpha,\beta)$ are, respectively, the connected monotone 
	single and double Hurwitz numbers of degree $d$ and genus $g$.
	\end{thm}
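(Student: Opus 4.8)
The plan is to work with the explicit Fourier-analytic representation of $I_N$ as a series over irreducible characters of $\group{U}(N)$, and to pass from this to a combinatorial model via the characteristic map. Recall that both $I_N^{(1)}$ and $I_N^{(2)}$ admit character expansions: by Schur--Weyl duality and the orthogonality relations, $I_N^{(m)}$ is a sum over Young diagrams $\lambda$ of products of Schur functions in the eigenvalues weighted by $1/\dim V_\lambda$ (for HCIZ) or by related normalized characters of transpositions (for BGW), together with the relevant power of $z$. The first step is to extract from this the strong coupling coefficients $F_N^d$: expand $\log I_N$ in powers of $z$ and recognize that, after applying the Jucys--Murphy element calculus, the cumulants of the content statistic on $\group{U}(N)$ produce exactly the \emph{connected} monotone Hurwitz numbers $\mon_g(\alpha)$ and $\mon_g(\alpha,\beta)$ via the Gelfand pair / class algebra formalism. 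This is where the positivity of $\mon_g$ and its identification with counts of transitive monotone factorizations enters; I would cite the earlier sections for the precise form of $F_N^d$ as a polynomial in the $p_\alpha$ with these Hurwitz coefficients and the genus-grading coming from the Euler characteristic of the underlying cover.

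Second, I would establish the \emph{stable non-vanishing hypothesis} in the quantitative form $\varepsilon \in (0,\tfrac{2}{27})$. The natural route is a direct estimate on $|I_N - 1|$: write $I_N = 1 + (\text{tail})$, bound the tail in $\|\cdot\|$ on $\D_N$ using the character expansion and the crude bound $|s_\lambda(x_1,\dots,x_N)| \le \binom{N + d - 1}{d}$ for $\lambda \vdash d$ on the unit polydisc, together with the dimension/character denominators, and sum the resulting majorant geometric-type series. The constant $\tfrac{2}{27}$ should emerge as the radius of convergence of a concrete one-variable series — most likely the generating function $\sum_d \Cat_d (3z)^d$-type expression governing monotone Hurwitz numbers in genus zero, whose singularity sits at $z = \tfrac{2}{27} = \tfrac{1}{2}\cdot(\tfrac{2}{3})^3$ or an equivalent avatar. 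Having $|I_N - 1| < 1$ uniformly on $\D_N(\varepsilon)$ gives $F_N = \log I_N \in \O_N(\varepsilon)$ with norm bounds uniform in $N$, which is exactly the Banach-algebra setup demanded by the theorem.

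Third, with $F_N \in \O_N(\varepsilon)$ in hand, I would prove the asymptotic expansion \eqref{eqn:TopologicalLimit} by controlling the two summation interchanges discussed in the introduction. The key analytic input is a \emph{uniform} bound on the genus-$g$ strong coupling polynomials $F_{Ng}^d$ and, more importantly, on the remainder $F_N^d - \sum_{g=0}^k N^{2-2g} F_{Ng}^d$, of the form $O_k(N^{-2k})$ times something summable in $d$ against $z^d/d!$ on $\D_N(\varepsilon)$. Because the genus expansion of $F_N^d$ is a \emph{finite} sum only in the stable range $d \le N$ and genuinely infinite/divergent for $d > N$, I would split the $d$-sum at $d = N$: on the stable range use the exact topological decomposition of $F_N^d$ plus the Hurwitz-number growth estimates to get a clean $N^{-2k}$-type tail bound, and on the unstable range $d > N$ use the crude exponential smallness of $z^d/d!$ (since $\varepsilon$ is a fixed small constant and $d! \ge (N/e)^N$) to show that this entire block contributes $o(N^{2-2k})$ regardless of the messy behavior of the individual coefficients. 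Assembling these, and defining $F_{Ng}$ by \eqref{eqn:GenusSpecific} with the stated closed forms, yields the theorem.

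The main obstacle is the third step, and specifically the unstable range $d > N$: one must show that the ``tail'' of the strong coupling expansion beyond the stable threshold is negligible at every fixed order $k$, even though the large-$N$ expansion of those coefficients diverges. The delicate point is that $\varepsilon$ must be chosen small enough (strictly inside $\tfrac{2}{27}$, and also inside the radius $\gamma$ governing convergence of each $F_{Ng}$) that the super-exponential decay of $z^d/d!$ beats the at-most-exponential growth of the monotone Hurwitz numbers $\sum_{\alpha\vdash d}\mon_g(\alpha)$ \emph{uniformly in $g$ as well as $N$} — this is what forces the numerical value of the admissible $\varepsilon$ and is the crux of why the problem is hard; the character-theoretic and combinatorial identifications, while technically involved, are essentially bookkeeping by comparison. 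A secondary difficulty is making the eigenvalue-dependence estimates genuinely uniform over the full polydisc $\D_N$ (not just its real points), which is precisely the feature that distinguishes this oscillatory result from the earlier non-oscillatory work and is handled by the naive-but-sufficient Schur function bound above rather than by any probabilistic concentration.
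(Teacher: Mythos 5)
Your first step (character expansion, Jucys--Murphy calculus, identification of the connected monotone Hurwitz numbers, and the constant $\tfrac{2}{27}$ as the common radius of convergence of the genus-specific series) matches the paper. But your second step fails outright. You propose to prove stable non-vanishing by bounding $|I_N-1|<1$ uniformly on $\D_N(\varepsilon)$ via a majorant series. This is impossible: the coupling coefficients satisfy $\|I_N^d\|=N^{2d}$ exactly (e.g.\ the fieldless HCIZ integral is literally $e^{zN^2}$), so $I_N$ is of size $e^{\Theta(N^2)}$ on any polydisc of fixed polyradius, and no choice of $N$-independent $\varepsilon>0$ makes $I_N-1$ small. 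The paper circumvents this by never estimating $I_N-1$ at all; instead it forms the topologically normalized entire function $\Phi_{Nk}=E_{N\overline{k}}^{-1}I_N$ (dividing out the explicit leading exponential $e^{\sum_{g\le k}N^{2-2g}F_{Ng}}$, which is well defined without taking any logarithm), proves $\|\Phi_{Nk}-1\|_\xi\le C_kN^{2-2k}$, deduces non-vanishing of $I_N$ from the $k=2$ case of that inequality, and then converts the multiplicative bound into the additive bound on $\Delta_{Nk}=\log\Phi_{Nk}$ via the Borel--Carath\'eodory inequality. Without some such normalization your route to $F_N\in\O_N(\varepsilon)$ with uniform norm bounds has no starting point, and every subsequent Cauchy estimate on $F_N^d$ is circular.

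The second gap is your treatment of the unstable range. You split the $d$-sum at $d=N$ and claim the block $d>N$ is negligible ``regardless of the messy behavior of the individual coefficients'' because $z^d/d!$ is small. It is not: for $N<d\ll N^2$ one has $\frac{\varepsilon^d}{d!}N^{2d}\approx(\varepsilon eN^2/d)^d$, which diverges as $N\to\infty$; the superexponential decay of $z^d/d!$ only dominates $N^{2d}$ once $d>tN^2$ for suitable $t$ depending on $\varepsilon$. This is precisely why the paper's critical range extends to $\lfloor tN^2\rfloor$ rather than $N$, and why the window $N<d\le tN^2$ --- where the coupling coefficients are individually enormous but the topological expansion of Theorem \ref{thm:TopologicalExpansion} no longer converges --- requires a genuine argument. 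The paper supplies one (the ``Plancherel mechanism''): for $d<\tfrac14 N^2$ the Plancherel weight $(\dim\mathsf{V}^\lambda)^2/d!$ suppresses the diagrams $\lambda\vdash d$ with a row or column longer than $N$, so the divergent $1/N$-expansions of $\Omega_{1/N}^{-1}(\lambda)$ occur only on an exponentially negligible set, and the truncated Plancherel expectations can be completed to genuine monotone Hurwitz numbers at controlled cost. Dismissing this range by the size of $z^d/d!$ alone skips the hardest part of the proof.
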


Monotone Hurwitz numbers, introduced in \cite{GGN1,GGN2} and further studied in many papers since, 
are combinatorial variants of the classical Hurwitz numbers familiar from enumerative geometry
as counts of branched covers of the Riemann sphere with specified ramification data.
The enumerative study of branched covers was initiated by Hurwitz in the 19th century \cite{Hurwitz1,Hurwitz2}, 
and is now known as Hurwitz theory.  Hurwitz theory remains relevant in contemporary enumerative geometry, 
being deeply intertwined with more sophisticated approaches to the enumeration of maps from curves to curves \cite{OP}.
Hurwitz theory is the subject of a huge literature; see \cite{EEHS} for a quick introduction,
\cite{CM} for a pedagogical treatment, \cite{Lando:ICM} for a survey of connections to other fields,
and \cite{DYZ} for a more recent reference whose perspective aligns closely with that of the present paper.

Monotone Hurwitz numbers are defined precisely in Section \ref{sec:Finite} below. For now, suffice to say that
they are obtained from classical Hurwitz numbers via a combinatorial desymmetrization, rooted in the representation
theory of the symmetric groups, which leaves all the main structural features of Hurwitz
theory intact. While monotone Hurwitz numbers are not as geometrically natural as their 
classical counterparts, 
they have a major quantitative advantage: they are smaller. Consequently, the asymptotic behavior 
of monotone Hurwitz numbers is quite tame, and their generating functions have robust summability properties \cite{GGN5}, 
making them well-behaved as analytic objects. Monotone Hurwitz theory provides a natural Feynman diagram 
apparatus for $\group{U}(N)$ integrals \cite{Novak:Banach}, yielding a useful
description of the large $N$ expansion in $\group{U}(N)$ lattice gauge theories.

Although interest in monotone Hurwitz numbers has exploded in recent years, expanding the scope of
Hurwitz theory \cite{ACEH,ALS,BGF,CD} and leading to new connections with between enumerative
geometry and matrix models \cite{BK,CDO,GGR1,GGR2,Montanaro}, 
they were originally conceived in order to address the HCIZ case of Conjecture \ref{conj:Main}, with \cite{GGN3} taking the first steps in this direction.
In this paper, the program begun in \cite{GGN3} achieves a considerably enhanced fulfillment of its initial purpose.
	

\section{Finite $N$}
\label{sec:Finite}

	In this section, we analyze the integrals \eqref{eqn:BGW} and \eqref{eqn:HCIZ} with $N \in \N$ arbitrary but fixed,
	and obtain their coupling expansions: absolutely convergent power series in the inverse coupling parameter $z$ whose coefficients are symmetric 
	polynomials in the eigenvalues of the external fields $A$ and $B$. These coupling expansions may be presented 
	either in terms of Schur polynomials (character form), or Newton polynomials (string form). 
	The character form is widely known, but it is from the string form that a link with
	enumerative geometry emerges.
								
	\subsection{Character form}
	Given a Young diagram $\lambda$, let $s_\lambda(x_1,\dots,x_N)$ be the 
	corresponding Schur polynomial in $N$ variables \cite{Macdonald,Stanley:EC2}.	
	The evaluation $s_\lambda(A)$ of $s_\lambda$ on the spectrum of $A \in \group{GL}_N(\C)$ is 
	the character $\Tr S^\lambda(A)$ of $A$ acting in the irreducible polynomial representation $(\mathsf{W}^\lambda,S^\lambda)$
	of the general linear group indexed by $\lambda$.
	The following integration formulas are standard consequences of Schur orthogonality
	together with a density argument; see \cite{Macdonald}.
	
		\begin{lem}
			\label{lem:BasicIntegrals}
			For any Young diagrams $\lambda,\mu$ with at most $N$ rows and any 
			matrices $A,B \in \mathfrak{gl}_N(\C)$, we have				
			
				\begin{equation}
					\int_{\group{U}(N)} s_\lambda(AU) s_\mu(BU^{-1}) \mathrm{d}U
					= \delta_{\lambda\mu}\frac{s_\lambda(AB)}{\dim \mathsf{W}^\lambda}.
				\end{equation}
				
			\noindent
			and
				
				\begin{equation}
					\int_{\group{U}(N)} s_\lambda(AUBU^{-1}) \mathrm{d}U
					= \frac{s_\lambda(A) s_\lambda(B)}{\dim \mathsf{W}^\lambda}
				\end{equation}

		\end{lem}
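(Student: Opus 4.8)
The plan is to reduce both identities to the Schur orthogonality relations for matrix coefficients of irreducible unitary representations of $\group{U}(N)$, and then to propagate the resulting polynomial identities off the unitary group by analytic continuation.

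\emph{First, the unitary case.} For $A,B \in \group{U}(N)$ I would use the representation-theoretic reading of the Schur polynomial recalled above: $s_\lambda(X) = \Tr S^\lambda(X)$ for $X \in \group{GL}_N(\C)$, where $(\mathsf{W}^\lambda,S^\lambda)$ is the irreducible polynomial $\group{GL}_N(\C)$-module indexed by $\lambda$, which may be taken unitary on the subgroup $\group{U}(N)$, so that $S^\lambda(U^{-1}) = S^\lambda(U)^\ast$ there. Since $S^\lambda$ is multiplicative, $s_\lambda(AU) = \Tr\big(S^\lambda(A)S^\lambda(U)\big)$ and $s_\mu(BU^{-1}) = \Tr\big(S^\mu(B)S^\mu(U)^\ast\big)$. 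Expanding both traces in matrix coefficients, the first integral becomes a finite sum of products $S^\lambda(A)_{ij}\,S^\mu(B)_{kl}$ weighted by $\int_{\group{U}(N)} S^\lambda(U)_{ji}\,\overline{S^\mu(U)_{kl}}\,\mathrm{d}U$; Schur orthogonality evaluates this as $\delta_{\lambda\mu}(\dim\mathsf{W}^\lambda)^{-1}\delta_{jk}\delta_{il}$, and resumming the surviving terms contracts the two matrices to give $\delta_{\lambda\mu}(\dim\mathsf{W}^\lambda)^{-1}\Tr\big(S^\lambda(A)S^\lambda(B)\big) = \delta_{\lambda\mu}(\dim\mathsf{W}^\lambda)^{-1}\Tr S^\lambda(AB) = \delta_{\lambda\mu}(\dim\mathsf{W}^\lambda)^{-1}s_\lambda(AB)$. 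The second formula is the same computation with one more matrix inserted: $s_\lambda(AUBU^{-1}) = \Tr\big(S^\lambda(A)S^\lambda(U)S^\lambda(B)S^\lambda(U)^\ast\big)$, expanding in matrix coefficients leaves $\int_{\group{U}(N)} S^\lambda(U)_{jk}\,\overline{S^\lambda(U)_{il}}\,\mathrm{d}U = (\dim\mathsf{W}^\lambda)^{-1}\delta_{ij}\delta_{kl}$, and the two Kronecker deltas split the sum into a product of two independent traces, yielding $(\dim\mathsf{W}^\lambda)^{-1}\Tr S^\lambda(A)\,\Tr S^\lambda(B) = (\dim\mathsf{W}^\lambda)^{-1}s_\lambda(A)s_\lambda(B)$.

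\emph{Second, the density step.} For fixed $\lambda,\mu$ the matrix coefficients of $S^\lambda$ and $S^\mu$ are polynomials of degrees at most $|\lambda|$ and $|\mu|$ in the entries of their argument, so for each fixed $U$ the integrands above are polynomials in the entries of $A$ and of $B$; integrating over the compact group $\group{U}(N)$ (differentiating under the integral sign to see that polynomiality in each matrix entry persists, with bounded degree) shows that both sides of each identity are polynomial, hence holomorphic, functions of $(A,B) \in \mathfrak{gl}_N(\C)\times\mathfrak{gl}_N(\C)$. Since $\group{U}(N)$ is Zariski dense in $\group{GL}_N(\C)$, and $\group{GL}_N(\C)$ is Zariski dense in $\mathfrak{gl}_N(\C)$, any polynomial on $\mathfrak{gl}_N(\C)$ vanishing on $\group{U}(N)$ vanishes identically; applying this in the $A$ variable and then in the $B$ variable upgrades both identities from $\group{U}(N)\times\group{U}(N)$ to all of $\mathfrak{gl}_N(\C)\times\mathfrak{gl}_N(\C)$. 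The hypothesis that $\lambda,\mu$ have at most $N$ rows is exactly what guarantees $\mathsf{W}^\lambda,\mathsf{W}^\mu \ne 0$, so that the right-hand sides make sense.

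I do not expect a genuine obstacle: this is the textbook derivation, reproduced only to fix the normalizations used later. The one place where a slip could occur is the index bookkeeping when the traces are expanded and matched against the precise form of Schur orthogonality, so that the dimension factor lands in the denominator as $\dim\mathsf{W}^\lambda$ and not, say, its square or reciprocal; the analytic continuation is routine once one notes that the integrands are honestly polynomial of bounded degree.
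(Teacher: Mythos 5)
Your argument is correct and is exactly the proof the paper has in mind: the lemma is stated there without proof, cited as a ``standard consequence of Schur orthogonality together with a density argument,'' and your two steps (expanding $\Tr S^\lambda(A)S^\lambda(U)$ and $\Tr S^\lambda(A)S^\lambda(U)S^\lambda(B)S^\lambda(U)^\ast$ in matrix coefficients, applying Schur orthogonality on $\group{U}(N)$, then extending by Zariski density of $\group{U}(N)$ in $\mathfrak{gl}_N(\C)$) fill in precisely that outline with the correct placement of the $\dim\mathsf{W}^\lambda$ factor.
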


		Lemma \ref{lem:BasicIntegrals} leads directly to the character form of the coupling 
		expansions of the BGW and HCIZ integrals. These character expansions are widely 
		known in physics, see e.g. the reviews \cite{Morozov,ZZ}, and seem to have first appeared explicitly in work of James \cite{James} 
		in multivariate statistics. Let $(\mathrm{V}^\lambda,R^\lambda)$ denote
		the irreducible representation of the symmetric group $\group{S}(d)$ associated to a Young diagram 
		$\lambda \vdash d$, and let 
		
			\begin{equation}
				\chi^\lambda_\alpha = \Tr R^\lambda(\pi)
			\end{equation}
			
		\noindent
		be the character of a permutation $\pi$ from the conjugacy class $C_\alpha \subset \group{S}(d)$ 
		acting in $\mathrm{V}^\lambda$.
											
		\begin{thm}
			\label{thm:CharacterForm}
			For any $A,B \in \mathfrak{gl}_N(\C)$, the integrals 
			\eqref{eqn:BGW} and \eqref{eqn:HCIZ} admit power series expansions
						
				\begin{equation*}
					I_N^{(1)} = 1 + \sum_{d=1}^\infty  \frac{z^d}{d!} I_N^{(1)d} \quad\text{ and }\quad
					I_N^{(2)} = 1 + \sum_{d=1}^\infty \frac{z^d}{d!} I_N^{(2)d}
				\end{equation*}
				
			\noindent
			which converge absolutely for all $z \in \C$, and whose coefficients are given by
					
				\begin{equation*}
				I_N^{(1)d} = \frac{N^{2d}}{d!} 
				 \sum_{\substack{\lambda \vdash d \\ \ell(\lambda) \leq N}} 
				s_\lambda(AB) \frac{(\dim \mathsf{V}^\lambda)^2}{\dim \mathsf{W}^\lambda}
				 \quad\text{ and }\quad
				I_N^{(2)d} = N^d
				 \sum_{\substack{\lambda \vdash d \\ \ell(\lambda) \leq N}} 
				 s_\lambda(A)s_\lambda(B) \frac{\dim \mathsf{V}^\lambda}{\dim \mathsf{W}^\lambda}.		
				\end{equation*}
				
			\end{thm}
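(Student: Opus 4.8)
The plan is to deduce Theorem \ref{thm:CharacterForm} from Lemma \ref{lem:BasicIntegrals} by expanding the matrix exponential and projecting each term onto the irreducible characters of the symmetric group. First I would expand the exponentials in \eqref{eqn:BGW} and \eqref{eqn:HCIZ} as power series in $z$; the integrand is entire in all entries of $U$, so termwise integration against the finite Haar measure is justified, and the interchange of sum and integral is legitimate by dominated convergence on the compact group $\group{U}(N)$. This reduces everything to computing, for $I_N^{(2)}$, the moments $\int_{\group{U}(N)} \big(\Tr AUBU^{-1}\big)^d \, \mathrm{d}U$, and for $I_N^{(1)}$, the moments $\int_{\group{U}(N)} \big(\Tr(AU)+\Tr(BU^{-1})\big)^d \, \mathrm{d}U$. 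In the BGW case the extra factor $\sqrt{z}N$ in the exponent together with the binomial expansion of $(\Tr AU + \Tr BU^{-1})^d$ will, after integration, force the two kinds of traces to appear in equal numbers (odd powers integrate to zero), which is where the clean power series in $z$ rather than $\sqrt z$ comes from and where the $N^{2d}$ and the extra $1/d!$ originate.

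The key algebraic step is to rewrite these trace powers using the Schur–Weyl / Frobenius machinery. One expresses $\big(\Tr AUBU^{-1}\big)^d = \Tr_{(\C^N)^{\otimes d}}\big((A U B U^{-1})^{\otimes d}\big)$, insert the decomposition of the identity on $(\C^N)^{\otimes d}$ into isotypic components for the commuting actions of $\group{S}(d)$ and $\group{GL}_N(\C)$, and use that the power-sum/monomial-in-traces expansion dualizes, via Frobenius characteristic, to the Schur expansion. Concretely I would invoke the identity $\big(\Tr M\big)^d \cdot \tfrac{1}{d!}\sum_{\pi} \dots$ — more cleanly, use that $p_1^d = \sum_{\lambda \vdash d} \chi^\lambda_{(1^d)} s_\lambda$ and its ``twisted'' analogue, so that $\tfrac{1}{d!}(\Tr AUBU^{-1})^d$ expands over $\lambda \vdash d$ with coefficient $\tfrac{\dim \mathsf V^\lambda}{d!} \, s_\lambda(AUBU^{-1})$ up to the normalization $\chi^\lambda_{(1^d)} = \dim \mathsf V^\lambda$. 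Then Lemma \ref{lem:BasicIntegrals} integrates each $s_\lambda(AUBU^{-1})$ to $s_\lambda(A)s_\lambda(B)/\dim\mathsf W^\lambda$, and the constraint $\ell(\lambda)\le N$ appears because $s_\lambda$ vanishes identically in $N$ variables when $\ell(\lambda) > N$. Bookkeeping the powers of $N$ from the exponents ($zN$ for HCIZ, $\sqrt z N$ for BGW) yields the stated $N^d$ and $N^{2d}$. For BGW the same scheme applies, using instead the first integral in Lemma \ref{lem:BasicIntegrals}: the binomial expansion of $(\Tr AU + \Tr BU^{-1})^{2d}$ pairs $s_\lambda(AU)$ with $s_\mu(BU^{-1})$, the Kronecker delta $\delta_{\lambda\mu}$ collapses the double sum, one sets $B = \mathrm{id}$ (equivalently keeps $AB$), and the combinatorial factor $\binom{2d}{d}$ together with $(2d)!$ in the denominator and $\chi^\lambda_{(1^d)}$ assembling twice produces $\tfrac{(\dim\mathsf V^\lambda)^2}{d!\,\dim\mathsf W^\lambda}$.

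Finally, absolute convergence for all $z \in \C$ follows from crude bounds: $|s_\lambda(A)| \le \dim\mathsf W^\lambda \cdot \|A\|_{\mathrm{op}}^d$, the Burnside/orthogonality bound $\sum_{\lambda\vdash d}(\dim\mathsf V^\lambda)^2 = d!$, and $\dim\mathsf W^\lambda \ge 1$, so the degree-$d$ coefficient is polynomially (indeed exponentially in $d$) bounded and the series converges uniformly on compacts; the Maclaurin coefficients are then literally the $I_N^{(m)d}$ above, and since the integral is entire this identification is exact. The main obstacle, such as it is, is purely organizational: getting the normalization constants right in the BGW case, where the $\sqrt z$ in the exponent, the binomial coefficient from squaring the sum of traces, and the factorials must be tracked simultaneously — I would double-check the power of $N$ and the $1/d!$ by specializing to $d=1$ and to $A=B=\mathrm{id}$, where both sides can be computed by hand. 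The interchange-of-limits and the Schur-polynomial vanishing for $\ell(\lambda)>N$ are standard and cause no real difficulty.
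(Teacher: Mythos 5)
Your proposal is correct and follows essentially the same route as the paper: expand the exponential, reduce to moments of traces, convert trace powers to Schur polynomials via $p_{(1^d)}=\sum_{\lambda\vdash d}(\dim\mathsf V^\lambda)\,s_\lambda$ (noting that $s_\lambda$ vanishes for $\ell(\lambda)>N$), and integrate term by term using Lemma \ref{lem:BasicIntegrals}, with the balanced term of the binomial expansion surviving in the BGW case to produce the $1/(d!\,d!)$ and $N^{2d}$. The only differences are cosmetic: you spell out the dominated-convergence and convergence estimates that the paper dispatches by observing the integral is entire in $z$.
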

						
		\begin{proof}	
		For any $A,B \in \mathfrak{gl}_N(\C)$, the integral $I_N$
		is an entire functions of $z$ whose derivatives may be computed by differentiating 
		under the integral sign. 
		
		Using the first integration formula in Lemma \ref{lem:BasicIntegrals}, 
		the Maclaurin series of $I_N^{(1)}$ is 
				
			\begin{equation}
			\begin{split}
				 I_N^{(1)} &=  1 + \sum_{d=1}^\infty \frac{z^d}{d!d!} N^{2d}
				\sum_{\substack{\lambda \vdash d \\ \ell(\lambda) \leq N}} 
				\sum_{\substack{\mu \vdash d \\ \ell(\mu) \leq N}} 
				(\dim \mathsf{V}^\lambda)(\dim V^\mu)
				\int_{\group{U}(N)} s_\lambda(AU) s_\mu(BU^{-1}) \mathrm{d}U   \\
				&= 1 + \sum_{d=1}^\infty \frac{z^d}{d!} \frac{N^{2d}}{d!}
				\sum_{\substack{\lambda \vdash d \\ \ell(\lambda) \leq N}} 
				s_\lambda(AB)\frac{(\dim \mathsf{V}^\lambda)^2}{\dim \mathsf{W}^\lambda}.
			\end{split}
			\end{equation}	

		For $I_N^{(2)},$ we first write
		
			\begin{equation}
				I_N^{(2)} = 1 + \sum_{d=1}^\infty \frac{z^d}{d!} N^d \int_{\group{U}(N)} (\Tr AUBU^{-1})^d \mathrm{d}U,
			\end{equation}
			
		\noindent
		 and observe that 
		
			\begin{equation}
				 (\Tr AUBU^{-1})^d = p_{(1^d)}(AUBU^{-1}),
			\end{equation}
		
		\noindent
		where $p_{(1^d)}$ is the Newton symmetric polyomial indexed by the columnar diagram with 
		$d$ cells. For any $\alpha \vdash d$, we have
		
			\begin{equation}
				p_\alpha(x_1,\dots,x_N) = \sum_{\lambda \vdash d} \chi^\lambda_\alpha s_\lambda(x_1,\dots,x_N),
			\end{equation}
			
		\noindent
		and in particular
		
			\begin{equation}
				p_{(1^d)}(x_1,\dots,x_N) = \sum_{\lambda \vdash d} (\dim \mathsf{V}^\lambda) s_\lambda(x_1,\dots,x_N).
			\end{equation}
		
		\noindent
		Noting that $s_\lambda(x_1,\dots,x_N)$ is the zero polynomial if $\ell(\lambda)>N,$ we thus have that
		
			\begin{equation}
			\begin{split}
				\int_{\group{U}(N)} p_{(1^d)}(AUBU^{-1}) \mathrm{d}U &= N^d
				\sum_{\substack{\lambda \vdash d \\ \ell(\lambda) \leq N}}
				(\dim \mathsf{V}^\lambda) \int_{\group{U}(N)} s_\lambda(AUBU^{-1}) \mathrm{d}U \\
				&=  N^d
				\sum_{\substack{\lambda \vdash d \\ \ell(\lambda) \leq N}}
				s_\lambda(A) s_\lambda(B)\frac{\dim \mathsf{V}^\lambda}{\dim \mathsf{W}^\lambda},
			\end{split}
			\end{equation}

		\noindent
		by the second integration formula in Lemma \ref{lem:BasicIntegrals}. 		
		\end{proof}

	Note that the adjective ``strong'' is not necessary here, since these 
	coupling expansions are globally convergent. It is clear that
	we may take $B$ to be the identity
	in the definition of $I_N^{(1)}$ without loss in generality,
	and we do so going forward.
		
	\subsection{Strong coupling approximation}
	An important consequence of Theorem \ref{thm:CharacterForm} is that $I_N$ is exponentially well approximated by
	its coupling expansion to $\Theta(N^2)$ terms, provided the external fields
	have spectral radius bounded independently of $N$ and the 
	coupling is sufficiently strong. More precisely, given $t>0$ let 
	
		\begin{equation}
			P_N = 1 + \sum_{d=1}^{\lfloor tN^2 \rfloor} \frac{z^d}{d!} I_N^d
		\end{equation}
		
	\noindent
	be the truncation of the coupling expansion of $I_N$ at degree $d=\lfloor tN^2 \rfloor$.
	We omit the dependence of $P_N$ on the truncation parameter $t$ in order
	to lighten the notation; in Section \ref{sec:Large}, we will fix $t$ at a specific value.
	Define
	
		\begin{equation}
			u(x) = \frac{1}{1-\frac{ex}{t}} \quad\text{ and } \quad
			v(x) = t \log \frac{t}{ex},
		\end{equation}
		
	\noindent
	and observe that 
	
		\begin{equation}
			\lim_{x \to 0^+}u(x) = 1 \quad\text{ and }\quad
			\lim_{x \to 0^+}v(x)  = \infty.
		\end{equation}
		
	\noindent
	We then have the following comparison inequality.
	
		\begin{thm}
		\label{thm:StrongCouplingApproximation}
		For any $\rho < \frac{t}{e},$ we have
			
			\begin{equation*}
				\|I_N-P_N\|_\rho < u(\rho)e^{-v(\rho)N^2}.
			\end{equation*}
			
		\end{thm}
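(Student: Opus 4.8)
The plan is to bound the tail of the coupling expansion of $I_N$ directly from the character formulas in Theorem \ref{thm:CharacterForm}. Writing $I_N - P_N = \sum_{d > \lfloor tN^2\rfloor} \frac{z^d}{d!} I_N^d$, the key is a uniform estimate on $\|I_N^d\|$ as a function of $d$ and $N$. First I would record the elementary bound $\|s_\lambda\| \le \binom{N + d - 1}{d} \le N^d$ for $\lambda \vdash d$ on the unit polydisc (there are at most $\binom{N+d-1}{d}$ monomials, each of modulus at most $1$, or more simply $|s_\lambda(x)| \le h_d(1,\dots,1)$ when $|x_i|\le 1$), and the classical representation-theoretic inequalities $\dim \mathsf{V}^\lambda \le \sqrt{d!}$ together with $\dim \mathsf{W}^\lambda \ge 1$ and the count $|\{\lambda \vdash d\}| = p(d)$. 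Feeding these into the formulas
\begin{equation*}
I_N^{(1)d} = \frac{N^{2d}}{d!}\sum_{\ell(\lambda)\le N} s_\lambda(AB)\frac{(\dim\mathsf{V}^\lambda)^2}{\dim\mathsf{W}^\lambda}, \qquad I_N^{(2)d} = N^d \sum_{\ell(\lambda)\le N} s_\lambda(A)s_\lambda(B)\frac{\dim\mathsf{V}^\lambda}{\dim\mathsf{W}^\lambda},
\end{equation*}
one gets $\|I_N^d\| \le N^{2d}\cdot C_d$ for both $m=1,2$, with $C_d$ at most something like $p(d)\sqrt{d!}\,N^d \cdot N^{-d} = p(d)\sqrt{d!}$ after the $N^d$ from $s_\lambda$ cancels a factor (being careful to track that in the BGW case the $\frac{1}{d!}$ and $(\dim\mathsf{V}^\lambda)^2$ combine to give $\le p(d)$). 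The upshot should be a clean bound of the shape $\frac{\|I_N^d\|}{d!} \le N^{2d}\, \theta_d$ where $\sum_d \theta_d \rho^d$ has the right tail.

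Next, the decisive point is the combinatorial identity making the $N^{2d}/d!$ growth manageable: Stirling's inequality $d! \ge (d/e)^d$ gives $\frac{N^{2d}}{d!} \le \left(\frac{eN^2}{d}\right)^d$, and for $d > tN^2$ we have $\frac{eN^2}{d} < \frac{e}{t}$, so $\frac{N^{2d}}{d!}\rho^d < \left(\frac{e\rho}{t}\right)^d$ — this is exactly where the hypothesis $\rho < t/e$ and the truncation point $d = \lfloor tN^2\rfloor$ enter. Then
\begin{equation*}
\|I_N - P_N\|_\rho \le \sum_{d > \lfloor tN^2\rfloor} \frac{\|I_N^d\|}{d!}\rho^d \le \sum_{d > tN^2} \left(\frac{e\rho}{t}\right)^d \cdot (\text{polynomial-in-}d\text{ factors}),
\end{equation*}
and after absorbing the polynomial (really subexponential, $p(d)\sqrt{d!}$-type) factors into a slightly enlarged geometric ratio — or, more carefully, by noting $\left(\frac{eN^2}{d}\right)^d$ decays superexponentially in $d$ once $d$ exceeds $tN^2$ by a fixed multiple, which dominates any polynomial — the sum is at most a geometric series with first term $\left(\frac{e\rho}{t}\right)^{\lceil tN^2\rceil}$ and ratio $\frac{e\rho}{t} < 1$. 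Summing the geometric series produces the prefactor $\frac{1}{1 - e\rho/t} = u(\rho)$, and the first term is $\left(\frac{e\rho}{t}\right)^{tN^2} = e^{-tN^2\log(t/(e\rho))} = e^{-v(\rho)N^2}$, which is precisely the claimed bound $u(\rho)e^{-v(\rho)N^2}$.

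The main obstacle I anticipate is not any single estimate but the bookkeeping needed to make the subexponential factors $p(d)$, $\sqrt{d!}$, and the $s_\lambda$-norm bound genuinely negligible against the superexponential decay of $\left(\frac{eN^2}{d}\right)^d$ for $d$ slightly beyond $tN^2$, while still landing on the \emph{exact} constants $u(\rho)$ and $v(\rho)$ rather than merely "$u(\rho)+o(1)$". The cleanest route is probably to prove the sharper pointwise bound $\frac{\|I_N^d\|}{d!}\rho^d \le \left(\frac{e\rho}{t}\right)^d$ for \emph{every} $d > \lfloor tN^2\rfloor$ by showing that the representation-theoretic sum $\sum_{\lambda\vdash d}(\dim\mathsf{V}^\lambda)^j/\dim\mathsf{W}^\lambda$ (for $j=1,2$) times the $s_\lambda$ contribution is bounded by $N^{-d}\cdot d!/d!$-worth of slack that is already swallowed by the gap between $\frac{eN^2}{d}$ and $\frac{e}{t}$ — equivalently, one wants a bound $\|I_N^d\|\le N^{2d}$ on the nose (for BGW) and $\|I_N^d\|\le d!\,N^d/ N^d = $ manageable (for HCIZ). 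If such a crude-but-exact bound holds — and it plausibly does, since $\sum_\lambda (\dim\mathsf{V}^\lambda)^2 = d!$ exactly and $\dim\mathsf{W}^\lambda \ge \dim\mathsf{V}^\lambda$ for $\ell(\lambda)\le N$ — then no slack is lost and the geometric summation gives the stated inequality verbatim. I would organize the write-up around establishing that exact per-term bound first, then summing.
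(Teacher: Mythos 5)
Your overall strategy is the paper's: bound the tail $\sum_{d>tN^2}\frac{\rho^d}{d!}\|I_N^d\|$ by a per-term estimate $\|I_N^d\|\le N^{2d}$, apply $d!>(d/e)^d$ to get $(\rho e/t)^d$ in the range $d>tN^2$, and sum the geometric series to land exactly on $u(\rho)e^{-v(\rho)N^2}$; that second half of your argument is verbatim what the paper does. The gap is that everything hinges on the per-term bound, and your proposal leaves it at ``plausibly holds.'' Your first-pass estimates would not rescue it: carrying $\dim\mathsf{V}^\lambda\le\sqrt{d!}$, $\dim\mathsf{W}^\lambda\ge 1$, and $p(d)$ terms leaves a factor of order $\sqrt{d!}$ together with an extra power of $N^d$ from the Schur norms, and $\rho^d N^{3d}\sqrt{d!}/d!\approx(\rho N^2\sqrt{e/t})^d$ does not tend to zero for $d\asymp tN^2$ --- these are not subexponential bookkeeping terms, they break the estimate. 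Moreover the bound $\|s_\lambda\|\le\binom{N+d-1}{d}$ is actually false, because monomials occur in $s_\lambda$ with Kostka multiplicities: for instance $s_{(2,2)}$ in $N=10$ variables has $s_{(2,2)}(1,\dots,1)=825>\binom{13}{4}=715$.

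The correct mechanism, which your last paragraph half-identifies, is the exact evaluation $\|s_\lambda\|=s_\lambda(1,\dots,1)=\dim\mathsf{W}^\lambda$ (monomial positivity), which cancels the $\dim\mathsf{W}^\lambda$ in the denominator on the nose. For BGW this reduces the sum to $\frac{N^{2d}}{d!}\sum_\lambda(\dim\mathsf{V}^\lambda)^2\le N^{2d}$ by the Plancherel identity $\sum_{\lambda\vdash d}(\dim\mathsf{V}^\lambda)^2=d!$, which you correctly name; for HCIZ it reduces to $N^d\sum_\lambda\dim\mathsf{V}^\lambda\dim\mathsf{W}^\lambda=N^{2d}$ by Schur--Weyl duality, which is the ingredient you are missing (your suggested route via $\dim\mathsf{W}^\lambda\ge\dim\mathsf{V}^\lambda$ is neither needed nor obviously easier to justify). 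With that substitution your write-up becomes the paper's proof.
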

		
		\begin{proof}
			Since the Schur polynomials are monomial positive, we have
			
				\begin{equation}
					\|s_\lambda(x_1,\dots,x_N)\|=s_\lambda(1,\dots,1) = \dim \mathsf{W}^\lambda,
				\end{equation}
				
			\noindent
			and hence
			
				\begin{equation}
					\|I_N^{(1)}\| =\frac{N^{2d}}{d!} 
					\sum_{\substack{\lambda \vdash d \\ \ell(\lambda) \leq N}}
					s_\lambda(1,\dots,1)\frac{(\dim \mathsf{V}^\lambda)^2}{\dim \mathsf{W}^\lambda} \\
					= \frac{N^{2d}}{d!} 
					\sum_{\substack{\lambda \vdash d \\ \ell(\lambda) \leq N}}
					(\dim \mathsf{V}^\lambda)^2.
				\end{equation}
				
			\noindent
			The function 
			
				\begin{equation}
					\lambda \mapsto \frac{(\dim \mathsf{V}^\lambda)^2}{d!}
				\end{equation}
				
			\noindent
			is the mass function of the the Plancherel measure, i.e. the 
			probability measure on Young diagrams canonically associated
			to the Fourier isomorphism
			
				\begin{equation}
					\C\group{S}(d) \longrightarrow \bigoplus_{\lambda \vdash d} \End \mathrm{V}^\lambda.
				\end{equation}
				
			\noindent
			Consequently, we have
			
				\begin{equation}
					\|I_N^{(1)}\| \leq N^{2d},
				\end{equation}
				
			\noindent
			with equality if and only if $d \leq N$.

			In the case $m=2$, the same argument gives
			
				\begin{equation}
				\begin{split}
					\|I_N^{(2)d}\|&= N^d 
					\sum_{\substack{\lambda \vdash d \\ \ell(\lambda) \leq N}}
					s_\lambda(1,\dots,1) s_\lambda(1,\dots,1)\frac{\dim \mathsf{V}^\lambda}{\dim \mathsf{W}^\lambda} \\
					&= N^d \sum_{\substack{\lambda \vdash d \\ \ell(\lambda) \leq N}}
					\dim \mathsf{V}^\lambda \dim \mathsf{W}^\lambda \\
					&=  N^{2d}, 
				\end{split}
				\end{equation}
		
			\noindent
			where the final equality follows from the Schur-Weyl isomorphism \cite{Macdonald}
	
			\begin{equation}
				\bigoplus\limits_{\substack{\lambda \vdash d \\ \ell(\lambda) \leq N}}
				\mathsf{V}^\lambda \otimes \mathsf{W}^\lambda \longrightarrow (\C^N)^{\otimes d}.
			\end{equation}
			
			Since 
											
			\begin{equation}
				\|I_N^{(1)}\| \leq \|I_N^{(2)}\| = N^{2d},
			\end{equation}
			
		\noindent
		we have the bound
						
			\begin{equation}
				\|I_N-P_N\|_\rho \leq \sum_{d> tN^2} \frac{\rho^d}{d!} N^{2d}.
			\end{equation}
			
		\noindent
		The result thus follows from the elementary estimate
				
			\begin{equation}
				d! > \frac{d^d}{e^d},
			\end{equation}		
			
		\noindent
		which for $d > tN^2$ gives
		
			\begin{equation}
				\frac{\rho^d}{d!} N^{2d} <(\rho e)^d \left( \frac{N^2}{d} \right)^d 
				< \left( \frac{\rho e}{t} \right)^d,
			\end{equation}
			
		\noindent
		so that for $\rho < \frac{t}{e}$ we have
		
			\begin{equation}
				\sum_{d> tN^2} \frac{\rho^{d}}{d!} N^{2d} <\left( \frac{\rho e}{t} \right)^{\lfloor tN^2 \rfloor +1}
				\sum_{k=0}^\infty  \left( \frac{\rho e}{t} \right)^k < \frac{1}{1-\frac{\rho e}{t}}
				\left( \frac{\rho e}{t} \right)^{ tN^2}.
			\end{equation}
		\end{proof}
		
	\begin{remark} 
	The argument above shows that in the absence of external fields the HCIZ integral 
	$I_N^{(1)}$ degenerates to the exponential function
	
		\begin{equation}
			E_N = 1 + \sum_{d=1}^\infty \frac{z^d}{d!} N^{2d}.
		\end{equation}
		
	\noindent
	This is clear from the integral representation \eqref{eqn:HCIZ}, and the proof
	of Theorem \ref{thm:StrongCouplingApproximation} reproduces this obvious fact algebraically.
	On the other hand, the fieldless BGW integral
	
		\begin{equation}
			\label{fieldless:BGW}
				L_N= \int_{\mathrm{U}(N)} e^{\sqrt{z}N \mathrm{Tr}(U + U^{-1})} \mathrm{d}U
		\end{equation}

	\noindent
	and its coupling expansion 
	
		\begin{equation}
			L_N = 1 + \sum_{d=1}^\infty \frac{z^d}{d!} N^{2d}\sum_{\substack{\lambda \vdash d \\ \ell(\lambda) \leq N}}
				\frac{(\dim \mathsf{V}^\lambda)^2}{d!}
		\end{equation}
		
	\noindent
	are non-trivial objects of substantial interest.
	Via the Robinson-Schensted correspondence \cite{Stanley:EC2},
	we have
	
		\begin{equation}
			\sum_{\substack{\lambda \vdash d \\ \ell(\lambda) \leq N}}
				\frac{(\dim \mathsf{V}^\lambda)^2}{d!}= N^{2d} \P[\LIS_d \leq N],
		\end{equation}
		
	\noindent
	where $\P[\LIS_d \leq N]$ is the probability that a uniformly random 
	permutation from $\group{S}(d)$ has maximal increasing subsequence length
	at most $N$, and thus
	
		\begin{equation}
		\label{eqn:Rains}
			L_N = 1 + \sum_{d=1}^\infty \frac{z^d}{d!} N^{2d}\P[\LIS_d \leq N].
		\end{equation}
		
	\noindent
	The remarkable fact that the coupling expansion of the 
	fieldless BGW integral is a generating function for the cumulative distribution 
	function of $\LIS_d$ was first explicitly recognized by Rains \cite{Rains}; see \cite{BR} for extensions to
	other groups, and \cite{Novak:EJC,Novak:IMRN} for generalizations to integrals over truncated unitary matrices.
	Via the Heine-Szeg\H{o} formula \cite{Bump}, the expansion \eqref{eqn:Rains} is equivalent to a Toeplitz determinant representation 
	of this series earlier derived by Gessel \cite{Gessel}, without mentioning matrix integrals. Interestingly, the determinant representation 
	of the BGW integral was known to the physicists Bars and Green earlier still \cite{BarsGreen}, though without 
	the connection to Plancherel measure or increasing subsequences. The power series
	expansion \eqref{eqn:Rains} of the the fieldless BGW integral was utilized by Johansson
	\cite{Johansson:MRL} and Baik-Deift-Johansson \cite{BDJ} to analyze the $d \to \infty$ behavior of the random variable 
	$\LIS_d$. We shall return to this in Section \ref{sec:Large} below. For more on the
	fascinating topic of longest increasing subsequences in random permutations, see \cite{AD,Deift,Romik,Stanley:ICM}.
	\end{remark}
		
	\subsection{Another one}
	Via Theorem \ref{thm:CharacterForm}, we may view $I_N^{(m)}$, $m=1,2$,
	as entire functions on $\C^{mN+1}$ defined by globally convergent power series. 
	This will be our point of view going forward, and the initial definitions of 
	these functions as matrix integrals will play no further role in our analysis.
	Without further ado, we introduce another series $I_N^{(0)}$ 
	which fits naturally with $I_N^{(1)}$ and $I_N^{(2)}$, but for which we 
	claim no matrix integral representation.
	
	Recall the standard dimension formulas \cite{Macdonald,Stanley:EC2}
	
		\begin{equation}
		\label{eqn:DimensionFormulas}
			\dim \mathsf{V}^\lambda = \frac{d!}{\prod_{\Box \in \lambda} h(\Box)}
			\quad\text{ and }\quad 
			\dim \mathsf{W}^\lambda = \prod_{\Box \in \lambda} \frac{N+c(\Box)}{h(\Box)},
		\end{equation}
		
	\noindent
	where $c(\Box)$ and $h(\Box)$ denote the content and hook length, respectively,
	of a given cell $\Box \in \lambda$. Applying these formulas, the coupling expansions
	of the HCIZ and BGW integrals in character form become
		
		\begin{equation}
		\begin{split}
			I_N^{(1)} &= 1+\sum_{d=1}^\infty z^d N^d \sum_{\substack{\lambda \vdash d \\ \ell(\lambda) \leq N}}
			s_\lambda(a_1,\dots,a_N) \prod_{\Box \in \lambda} \frac{1}{h(\Box)(1+\frac{c(\Box)}{N})} \\
			I_N^{(2)} &= 1+ \sum_{d=1}^\infty z^d \sum_{\substack{\lambda \vdash d \\ \ell(\lambda) \leq N}}
			s_\lambda(a_1,\dots,a_N) s_\lambda(b_1,\dots,b_N) \prod_{\Box \in \lambda} \frac{1}{1+\frac{c(\Box)}{N}}
		\end{split}
		\end{equation}
		
	\noindent
	Extrapolating, we define the univariate entire function 
	
		\begin{equation}
			I_N^{(0)} = 1+\sum_{d=1}^\infty z^d N^{2d}\sum_{\substack{\lambda \vdash d \\ \ell(\lambda) \leq N}} \prod_{\Box \in \lambda} \frac{1}{h(\Box)^2(1+\frac{c(\Box)}{N})}.
		\end{equation}
		
	\noindent
	As in the cases $m=1,2$, we denote by $I_N^{(0)d}$ the coefficient of $z^d/d!$ in the series $I_N^{(0)}$.
	We extend the practice of referring to any/all of these three functions as simply 
	$I_N$ in statements which hold uniformly for $I_N^{(m)}$, $m \in \{0,1,2\}.$

	\subsection{String form}	
	Newton polynomials are much more elementary objects than Schur polynomials:
	when evaluated on matrix eigenvalues they are trace invariants rather than characters. Thus, 
	by expressing the coupling coefficients of $I_N$ in terms of Newton polynomials 
	rather than Schur polynomials we will be expressing them in terms of the moments
	of the empirical eigenvalue distributions of the external fields.
	Newton polynomials are the preferred system of 
	symmetric polynomials for coupling expansions in gauge theory, where they are referred 
	to as string states \cite{BT}.
	
	The string form of $I_N^d$ involves the central characters of the symmetric group 
	$\group{S}(d)$, whose definition we now recall.
	Given a diagram $\alpha \vdash d$, we identify the corresponding conjugacy class 
	$C_\alpha$ in the symmetric group $\group{S}(d)$ with the formal sum of its elements, so that
	it becomes an element of the group algebra $\mathbb{C}\group{S}(d)$ and
	$\{C_\alpha \colon \alpha \vdash d\}$, is a linear basis for the center $\mathcal{Z}(d)$ of
	$\C\group{S}(d)$. By Schur's Lemma, $C_\alpha$ acts as a scalar operator in any irreducible 
	representation $(\mathsf{V}^\lambda,R^\lambda)$ of $\mathbb{C}\group{S}(d)$:
	we have
	
		\begin{equation}
			R^\lambda(C_\alpha) = \omega_\alpha(\lambda) I_{\mathsf{V}^\lambda}
		\end{equation}
		
	\noindent
	where 
	
		\begin{equation}
			\omega_\alpha(\lambda) = \frac{|C_\alpha| \chi^\lambda_\alpha}{\dim \mathsf{V}^\lambda}
		\end{equation}
		
	\noindent
	and $I_{\mathsf{V}^\lambda} \in \mathrm{End} \mathsf{V}^\lambda$ is the identity operator. 
	The traditional representation-theoretic perspective is to view the central characters as functions
	of the representation, i.e. as the homomorphisms $\mathcal{Z}(d) \longrightarrow \C$ which 
	send each central element to its eigenvalue acting in a given irreducible representation. However, in 
	the representation theory of the symmetric group it is useful to invert this perspective and think of central
	characters as conjugacy class indexed functions of the representation.
	It is an important theorem of Kerov and Olshanski \cite{OV} that $\omega_\alpha(\lambda)$ is a symmetric function
	of the contents of $\lambda$; see also \cite{CGS,Olshanski}.
	
	In addition to the central characters, we need a second family
	$\Omega_\hbar$ of functions on Young diagrams indexed by a parameter $\hbar \in \C$. These are defined
	explicitly as 
	
		\begin{equation}
		\label{eqn:OmegaPoints}
			\Omega_\hbar(\lambda) = \prod_{\Box \in \lambda} (1+\hbar c(\Box)).
		\end{equation}
		
	\noindent
	The functions $\Omega_\hbar^{\pm 1}$ play an important role in 2D Yang-Mills theory, where they are referred to as
	Omega points \cite{CMR,GT2}. 
											
	\begin{thm}
	\label{thm:StringForm}
		For each $d \in \N$, we have
		
			\begin{equation*}
			\begin{split}
				I_N^{(0)d} &= N^{2d} \sum_{\substack{\lambda \vdash d \\ \ell(\lambda) \leq N}}
					\Omega_{\frac{1}{N}}^{-1}(\lambda)  \frac{(\dim \mathsf{V}^\lambda)^2}{d!},\\
				I_N^{(1)d} &= N^d \sum_{\alpha \vdash d} p_\alpha(a_1,\dots,a_N) \sum_{\substack{ \lambda \vdash d \\ \ell(\lambda) \leq N}} \omega_\alpha(\lambda) 
				\Omega_{\frac{1}{N}}^{-1}(\lambda) \frac{(\dim \mathsf{V}^\lambda)^2}{d!}, \\
				I_N^{(2)d} &= \sum_{\alpha,\beta \vdash d} p_\alpha(a_1,\dots,a_N) p_\beta(b_1,\dots,b_N)
				\sum_{\substack{ \lambda \vdash d \\ \ell(\lambda) \leq N}}\omega_\alpha(\lambda) \Omega_{\frac{1}{N}}^{-1}(\lambda) \omega_\beta(\lambda)
				\frac{(\dim \mathsf{V}^\lambda)^2}{d!}.
			\end{split}
			\end{equation*}
		
	\end{thm}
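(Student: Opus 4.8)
The plan is purely computational: start from the hook--content forms of the coupling coefficients established above --- the expansions of $I_N^{(1)}$ and $I_N^{(2)}$ derived from Theorem~\ref{thm:CharacterForm} and \eqref{eqn:DimensionFormulas}, together with the definition of $I_N^{(0)}$ --- rewrite the hook and content products in terms of $\dim\mathsf{V}^\lambda$, $\dim\mathsf{W}^\lambda$ and $\Omega_{\frac{1}{N}}$, and then replace each Schur polynomial by its Newton-polynomial expansion, collecting the coefficient of each $p_\alpha$ (respectively, each $p_\alpha p_\beta$).

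The two bookkeeping identities that do all the work are the following consequences of \eqref{eqn:DimensionFormulas}, valid for $\lambda\vdash d$: since $\prod_{\Box\in\lambda}h(\Box)^{-1}=\dim\mathsf{V}^\lambda/d!$ and $\prod_{\Box\in\lambda}(1+c(\Box)/N)=\Omega_{\frac{1}{N}}(\lambda)$, one has
\[
\dim\mathsf{W}^\lambda=\prod_{\Box\in\lambda}\frac{N+c(\Box)}{h(\Box)}=N^{d}\,\frac{\dim\mathsf{V}^\lambda}{d!}\,\Omega_{\frac{1}{N}}(\lambda)
\quad\text{and}\quad
\prod_{\Box\in\lambda}\frac{1}{h(\Box)^{2}\bigl(1+\tfrac{c(\Box)}{N}\bigr)}=\frac{(\dim\mathsf{V}^\lambda)^{2}}{(d!)^{2}}\,\Omega_{\frac{1}{N}}^{-1}(\lambda).
\]
Substituting the second identity into the definition of $I_N^{(0)}$ and extracting the coefficient of $z^{d}/d!$ gives the first assertion at once. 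For $m=1$ and $m=2$, using $\dim\mathsf{W}^\lambda=N^{d}(\dim\mathsf{V}^\lambda/d!)\Omega_{\frac{1}{N}}(\lambda)$ in the hook--content expansions turns them into the intermediate forms $I_N^{(1)d}=N^{d}\sum_{\lambda}s_\lambda(a)\,\Omega_{\frac{1}{N}}^{-1}(\lambda)\dim\mathsf{V}^\lambda$ and $I_N^{(2)d}=d!\sum_{\lambda}s_\lambda(a)s_\lambda(b)\,\Omega_{\frac{1}{N}}^{-1}(\lambda)$, the $\lambda$-sums being restricted to $\ell(\lambda)\le N$.

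It remains to pass from Schur to Newton polynomials. I would invert the transition $p_\alpha=\sum_{\lambda}\chi^\lambda_\alpha s_\lambda$ used in the proof of Theorem~\ref{thm:CharacterForm}, i.e.\ substitute $s_\lambda=\sum_{\alpha\vdash d}z_\alpha^{-1}\chi^\lambda_\alpha p_\alpha$ with $z_\alpha=d!/|C_\alpha|$ the order of the centralizer of a permutation of cycle type $\alpha$; for $m=2$ this is applied to both $s_\lambda(a)$ and $s_\lambda(b)$. Interchanging the $\lambda$-sum with the $\alpha$-sum (respectively, the $(\alpha,\beta)$-sum) and then eliminating the ordinary character via $\chi^\lambda_\alpha=|C_\alpha|^{-1}\omega_\alpha(\lambda)\dim\mathsf{V}^\lambda=z_\alpha(d!)^{-1}\omega_\alpha(\lambda)\dim\mathsf{V}^\lambda$, every factor $z_\alpha^{-1}\chi^\lambda_\alpha\dim\mathsf{V}^\lambda$ collapses to $(d!)^{-1}\omega_\alpha(\lambda)(\dim\mathsf{V}^\lambda)^{2}$, and the displayed formulas follow with the stated prefactors $N^{d}$ ($m=1$) and $1$ ($m=2$). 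The condition $\ell(\lambda)\le N$ needs no comment: it is inherited from Theorem~\ref{thm:CharacterForm}, and since $s_\lambda(x_1,\dots,x_N)$ is identically zero when $\ell(\lambda)>N$, the expansion $s_\lambda=\sum_\alpha z_\alpha^{-1}\chi^\lambda_\alpha p_\alpha$ may be inserted term by term into the truncated sum.

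There is no genuine obstacle here; the statement is an identity among symmetric functions rather than an estimate. The one point demanding care is the simultaneous normalization of $z_\alpha$, $|C_\alpha|$, $d!$ and the hook/content products --- a single misplaced factorial spoils every prefactor --- so I would write out the case $m=1$ in full and then observe that $m=0$ and $m=2$ follow by, respectively, deleting and duplicating the external-field Schur factor.
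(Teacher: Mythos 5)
Your proposal is correct and follows essentially the same route as the paper's proof: the same hook--content identities reduce the character forms to the intermediate expressions $N^{d}\sum_\lambda s_\lambda(A)\,\Omega_{\frac{1}{N}}^{-1}(\lambda)\dim\mathsf{V}^\lambda$ and $d!\sum_\lambda s_\lambda(A)s_\lambda(B)\,\Omega_{\frac{1}{N}}^{-1}(\lambda)$, and the Schur-to-Newton transition followed by the substitution $\chi^\lambda_\alpha=|C_\alpha|^{-1}\omega_\alpha(\lambda)\dim\mathsf{V}^\lambda$ is exactly the paper's computation, just written with $z_\alpha^{-1}$ in place of $|C_\alpha|/d!$. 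The normalizations all check out, and your remark on inserting the inverse character expansion term by term into the $\ell(\lambda)\le N$-truncated sum is a point the paper leaves implicit.
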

	
	\begin{proof}	
	In the case $m=0$, we immediately have
	
		\begin{equation}
			I_N^{(0)d} = d! N^{2d}\sum_{\substack{\lambda \vdash d \\ \ell(\lambda) \leq N}} \prod_{\Box \in \lambda} \frac{1}{h(\Box)^2(1+\frac{c(\Box)}{N})}
			= N^{2d} \sum_{\substack{\lambda \vdash d \\ \ell(\lambda) \leq N}}
					\Omega_{\frac{1}{N}}^{-1}(\lambda)  \frac{(\dim \mathsf{V}^\lambda)^2}{d!},
		\end{equation}
		
	\noindent
	as claimed. For the cases $m=1,2,$ from the dimension formulas
	\eqref{eqn:DimensionFormulas} we have
		
		\begin{equation}
			\frac{N^d}{d!} \frac{\dim \mathsf{V}^\lambda}{\dim \mathsf{W}^\lambda} = \Omega_{\frac{1}{N}}^{-1}(\lambda).
		\end{equation}
		
	\noindent	
	From Theorem \ref{thm:CharacterForm}, we thus have
	
			\begin{align*}
				I_N^{(1)d} &= N^d\sum_{\substack{\lambda \vdash d \\ \ell(\lambda) \leq N}} 
				s_\lambda(A) \Omega_{\frac{1}{N}}^{-1}(\lambda) \dim \mathsf{V}^\lambda \\
				&=  N^d\sum_{\substack{\lambda \vdash d \\ \ell(\lambda) \leq N}} \left(
				\sum_{\alpha \vdash d} \frac{|C_\beta|\chi_\alpha(\lambda)}{d!}
				p_\alpha(A) \right)
				\Omega_{\frac{1}{N}}^{-1}(\lambda) \dim \mathsf{V}^\lambda \\
				&= N^d\sum_{\alpha \vdash d} p_\alpha(A)
				\sum_{\substack{\lambda \vdash d \\ \ell(\lambda) \leq N}}
				\omega_\alpha(\lambda) \Omega_{\frac{1}{N}}^{-1}(\lambda)\frac{(\dim \mathsf{V}^\lambda)^2}{d!},
			\end{align*}
	
	\noindent
	and
	
			\begin{align*}
				I_N^{(2)d} &= d!\sum_{\substack{\lambda \vdash d \\ \ell(\lambda) \leq N}}
				s_\lambda(A) s_\lambda(B) \Omega_{\frac{1}{N}}^{-1}(\lambda)  \\
				&= d!\sum_{\substack{\lambda \vdash d \\ \ell(\lambda) \leq N}}
				\left( \sum_{\alpha \vdash d} \frac{|C_\alpha| \chi_\alpha(\lambda)}{d!} p_\alpha(A) \right)
				\left( \sum_{\beta \vdash d} \frac{|C_\beta| \chi_\beta(\lambda)}{d!} p_\beta(B) \right) \Omega_{\frac{1}{N}}^{-1}(\lambda)  \\
				&= \sum_{\alpha,\beta \vdash d} p_\alpha(A) p_\beta(B)
				\sum_{\substack{\lambda \vdash d \\ \ell(\lambda) \leq N}} 
				\omega_\alpha(\lambda)  \Omega_{\frac{1}{N}}^{-1}(\lambda)\omega_\beta(\lambda)
				\frac{(\dim \mathsf{V}^\lambda)^2}{d!}.
			\end{align*}
	\end{proof}
	
	The string form of the coupling coefficients reveals a feature that is not obvious from the character form: 
	$I_N^{(m-1)d}$ is obtained from $I_N^{(m)d}$ by extraction and specialization. To obtain the 
	polynomial $I_N^{(1)d}$, extract the coefficient of
	
		\begin{equation}
			p_{(1^d)}(b_1,\dots,b_N) = (b_1+\dots + b_N)^d,
		\end{equation}
		
	\noindent
	in $I_N^{(2)d}$ and set $b_i=1$.
	To obtain the number $I_N^{(0)d}$ from the polynomial $I_N^{(1)d}$, extract all
	multiples of 
	
		\begin{equation}
			p_{(1^d)}(a_1,\dots,a_N) = (a_1+\dots + a_N)^d
		\end{equation}
		
	\noindent
	and set $a_i=1$.

	\subsection{Statistical expansion}
	A basic feature of the Plancherel measure is that the corresponding
	expectation functional 
	
		\begin{equation}
			\langle X \rangle = \sum_{\lambda \vdash d} X(\lambda) 
			\frac{(\dim \mathsf{V}^\lambda)^2}{d!}
		\end{equation}
	 
	 \noindent
	implements the normalized character of the regular representation.
	More precisely, if $C \in \mathcal{Z}(d)$ is a central element whose eigenvalue 
	in $\mathsf{V}^\lambda$ is $\omega_C(\lambda)$, then by the Fourier isomorphism
	
		\begin{equation}
			\C\group{S}(d) \longrightarrow \bigoplus_{\lambda \vdash d} \End \mathrm{V}^\lambda
		\end{equation}
		
	\noindent
	we have that the normalized character of $C$ in the regular representation is given by
		
		\begin{equation}
			\langle \omega_C \rangle = \sum_{\lambda \vdash d} \omega_C(\lambda)
			\frac{(\dim \mathsf{V}^\lambda)^2}{d!},
		\end{equation}
		
	\noindent
	the expected value of $\omega_C(\cdot)$ under Plancherel measure.
	An immediate consequence of Theorem \ref{thm:StringForm} is that the first $N$
	coupling coefficients of $I_N^{(m)}$ are symmetric polynomials in $mN$
	variables whose coefficients are Plancherel expectations.
	
		\begin{thm}
		\label{thm:PlancherelExpectations}
			For any $1 \leq d \leq N,$ we have
			
				\begin{equation*}
				\begin{split}
					I_N^{(0)d} &= N^{2d} \langle \Omega_{\frac{1}{N}}^{-1} \rangle, \\
					I_N^{(1)d} &= N^d \sum_{\alpha \vdash d} p_\alpha(a_1,\dots,a_N) \langle \omega_\alpha \Omega_{\frac{1}{N}}^{-1} \rangle, \\
					I_N^{(2)d}	&= \sum_{\alpha,\beta \vdash d} p_\alpha(a_1,\dots,a_N) p_\beta(b_1,\dots,b_N)
				\langle \omega_\alpha \Omega_{\frac{1}{N}}^{-1} \omega_\beta \rangle.
				\end{split}
			\end{equation*}
		\end{thm}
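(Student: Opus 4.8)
The plan is to deduce Theorem~\ref{thm:PlancherelExpectations} directly from the string form of the coupling coefficients established in Theorem~\ref{thm:StringForm}; the only thing to do is to remove the length restriction $\ell(\lambda)\leq N$ from the inner sums over Young diagrams. First I would recall that, by definition, the Plancherel expectation
\[
\langle X \rangle = \sum_{\lambda \vdash d} X(\lambda)\,\frac{(\dim \mathsf{V}^\lambda)^2}{d!}
\]
is an \emph{unrestricted} sum over all partitions of $d$, whereas each of the three formulas in Theorem~\ref{thm:StringForm} presents $I_N^{(m)d}$ as precisely the same kind of average but with the sum over $\lambda$ truncated to those diagrams having at most $N$ rows.

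The key observation is purely combinatorial: a Young diagram with $d$ cells has at most $d$ rows, so whenever $d \leq N$ one automatically has $\ell(\lambda) \leq d \leq N$ for every $\lambda \vdash d$. Hence in the stable range $1 \leq d \leq N$ the condition $\ell(\lambda)\leq N$ is vacuous and may simply be dropped, converting each restricted sum in Theorem~\ref{thm:StringForm} into a full sum over $\lambda \vdash d$. Matching these full sums against the definition of $\langle\cdot\rangle$ then gives the three asserted identities directly: the $m=0$ formula becomes $N^{2d}\langle \Omega_{\frac{1}{N}}^{-1}\rangle$; for $m=1$ the inner $\lambda$-sum, with $\alpha$ held fixed, is exactly $\langle \omega_\alpha \Omega_{\frac{1}{N}}^{-1}\rangle$, and summing over $\alpha$ against $p_\alpha(a_1,\dots,a_N)$ reproduces the claimed expression; and for $m=2$ the inner $\lambda$-sum is exactly $\langle \omega_\alpha \Omega_{\frac{1}{N}}^{-1}\omega_\beta\rangle$, summed over $\alpha,\beta$ against $p_\alpha(a_1,\dots,a_N)p_\beta(b_1,\dots,b_N)$.

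There is no genuine obstacle here: the mathematical content is entirely contained in Theorem~\ref{thm:StringForm}, and the present statement merely records the clean probabilistic reformulation that becomes available once $N$ is large enough relative to $d$ for truncation effects to disappear. The one point worth flagging is that this reformulation \emph{fails} in the unstable range $d > N$, where the discarded diagrams with $\ell(\lambda) > N$ contribute nonzero terms $\Omega_{\frac{1}{N}}^{-1}(\lambda)\,(\dim \mathsf{V}^\lambda)^2/d!$ to the Plancherel average but not to $I_N^{(m)d}$; this boundary between $d \leq N$ and $d > N$ is exactly the stable/unstable dichotomy emphasized in the introduction, so it deserves a remark even though it plays no role in the proof itself.
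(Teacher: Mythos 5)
Your proof is correct and is exactly the argument the paper intends: the paper presents Theorem \ref{thm:PlancherelExpectations} as an ``immediate consequence'' of Theorem \ref{thm:StringForm}, the point being precisely that for $d\leq N$ every $\lambda\vdash d$ satisfies $\ell(\lambda)\leq d\leq N$, so the restricted sums are already the full Plancherel expectations (and, as you note, all contents then exceed $-N$ in value, so $\Omega_{\frac{1}{N}}^{-1}(\lambda)$ is finite and the expectation is well defined). Your closing remark about the failure in the unstable range $d>N$ matches the discussion the paper gives immediately after the theorem statement.
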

	
	We refer to the the parameter range $1 \leq d \leq N$
	as the stable range. In the unstable range, where
	$d > N,$ the string form of the coupling coefficients involves an incomplete sum over Young diagrams
	$\lambda \vdash d$ against the Plancherel weight due to the restriction $\ell(\lambda) \leq N$, which is needed to
	ensure that the product
	
		\begin{equation}
			\Omega_{\frac{1}{N}}^{-1}(\lambda) = \prod_{\Box \in \lambda} 
			\frac{1}{1+\frac{c(\Box)}{N}}
		\end{equation}
		
	\noindent
	does not contain $\frac{1}{0}$ among its factors.
				
	Observe that, for any diagram $\lambda \vdash d$, the product 
	
		\begin{equation}
			\Omega_\hbar(\lambda) = \prod_{\Box \in \lambda} (1+\hbar c(\Box))
		\end{equation}
		
	\noindent
	is a degree $d$ polynomial in $\hbar$ whose roots are the reciprocals of the
	nonzero contents of the diagram $\lambda$. That is, we have
	
		\begin{equation}
			\Omega_\hbar(\lambda) = \sum_{r=0}^d \hbar^r e_r(\lambda),
		\end{equation}
		
	\noindent
	where $e_r(\lambda)$ denotes the complete symmetric function of degree $r$ 
	evaluated on the content alphabet of $\lambda$. It follows that 
	$\Omega_\hbar^{-1}(\lambda)$ is a holomorphic function of $\hbar$ on the disc 
	$|\hbar|<\frac{1}{d-1}$ whose Macluarin expansion is
	
		\begin{equation}
			\Omega_\hbar^{-1}(\lambda) = \sum_{r=0}^\infty (-\hbar)^r f_r(\lambda),
		\end{equation}
		
	\noindent
	where $f_r(\lambda)$ is the complete homogeneous symmetric function of degree $r$
	evaluated on the content alphabet of $\lambda$. We thus have the following 
	absolutely convergent $1/N$ expansions of the stable coupling coefficients of
	$I_N$ in terms of Plancherel averages.
			
		\begin{thm}
		\label{thm:StatisticalExpansion}
		For any $1 \leq d \leq N$, we have
		the absolutely convergent series expansions
		
			\begin{equation*}
			\begin{split}	
				I_N^{(0)d} &= N^{2d} \sum_{r=0}^\infty \left( -\frac{1}{N} \right)^r \langle f_r  \rangle, \\
				I_N^{(1)d} &= N^d \sum_{\alpha \vdash d} p_\alpha(a_1,\dots,a_N)
					\sum_{r=0}^\infty \left( -\frac{1}{N} \right)^r \langle \omega_\alpha f_r  \rangle, \\	
				I_N^{(2)d} &= \sum_{\alpha,\beta \vdash d} p_\alpha(a_1,\dots,a_N)p_\beta(b_1,\dots,b_N)
					\sum_{r=0}^\infty \left( -\frac{1}{N} \right)^r \langle \omega_\alpha f_r \omega_\beta \rangle.
			\end{split}
			\end{equation*}
		
		\end{thm}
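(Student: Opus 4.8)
The plan is to obtain this directly from the Plancherel-average formulas of Theorem~\ref{thm:PlancherelExpectations} by substituting, termwise, the Maclaurin expansion $\Omega_{\frac{1}{N}}^{-1}(\lambda)=\sum_{r\ge 0}\left(-\frac{1}{N}\right)^{r}f_r(\lambda)$ recalled just above. First I would note that in the stable range $1\le d\le N$ every diagram $\lambda\vdash d$ has $\ell(\lambda)\le d\le N$ and $\lambda_1\le d\le N$, so that all contents of cells of $\lambda$ satisfy $|c(\Box)|\le d-1<N$; hence no factor $1+\frac{c(\Box)}{N}$ vanishes, the numbers $\Omega_{\frac{1}{N}}^{-1}(\lambda)$ are well defined for every $\lambda\vdash d$, and the three Plancherel sums in Theorem~\ref{thm:PlancherelExpectations} are genuine finite sums over the whole set $\{\lambda:\lambda\vdash d\}$.

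The substantive point is the legitimacy of the termwise substitution, and it is here that the stable-range hypothesis enters essentially. As recalled above, for each $\lambda\vdash d$ the function $\hbar\mapsto\Omega_\hbar^{-1}(\lambda)=\sum_{r\ge 0}(-\hbar)^{r}f_r(\lambda)$ is holomorphic on the disc $|\hbar|<\frac{1}{d-1}$, its singularities being the reciprocals of the nonzero contents of $\lambda$, all of absolute value at most $d-1$. Since $d\le N$ forces $d-1<N$, the point $\hbar=\frac{1}{N}$ lies strictly inside this disc, so the power series converges absolutely there, i.e. $\sum_{r\ge 0}\frac{1}{N^{r}}|f_r(\lambda)|<\infty$ for each of the finitely many $\lambda\vdash d$. (The case $d=1$ is trivial: the content alphabet is $\{0\}$, so $f_0=1$ and $f_r=0$ for $r\ge 1$.)

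With this in hand the rest is bookkeeping. Fixing $d$ and $\alpha,\beta\vdash d$, I would write $\langle\omega_\alpha\Omega_{\frac{1}{N}}^{-1}\omega_\beta\rangle=\sum_{\lambda\vdash d}\omega_\alpha(\lambda)\,\omega_\beta(\lambda)\,\frac{(\dim\mathsf{V}^\lambda)^2}{d!}\,\Omega_{\frac{1}{N}}^{-1}(\lambda)$, expand each factor $\Omega_{\frac{1}{N}}^{-1}(\lambda)$ by its Maclaurin series, and interchange the finite $\lambda$-sum with the infinite $r$-sum --- trivially permissible, the $\lambda$-sum being finite --- to get $\langle\omega_\alpha\Omega_{\frac{1}{N}}^{-1}\omega_\beta\rangle=\sum_{r\ge 0}\left(-\frac{1}{N}\right)^{r}\langle\omega_\alpha f_r\omega_\beta\rangle$. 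This $1/N$-series converges absolutely, since
\[
\sum_{r\ge 0}\frac{1}{N^{r}}\bigl|\langle\omega_\alpha f_r\omega_\beta\rangle\bigr|\le\sum_{\lambda\vdash d}|\omega_\alpha(\lambda)|\,|\omega_\beta(\lambda)|\,\frac{(\dim\mathsf{V}^\lambda)^2}{d!}\sum_{r\ge 0}\frac{1}{N^{r}}|f_r(\lambda)|<\infty
\]
by the previous paragraph. Plugging this into the $m=2$ line of Theorem~\ref{thm:PlancherelExpectations} gives the stated expansion of $I_N^{(2)d}$, and the $m=1$ case (drop $\omega_\beta$, keep the prefactor $N^d$) and $m=0$ case (drop both central characters, keep $N^{2d}$) go through verbatim. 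I do not expect a real obstacle here: the only content of the statement is the recognition that the stable range $d\le N$ is exactly the range in which $\hbar=\frac{1}{N}$ lies inside the common disc of holomorphy of the functions $\Omega_\hbar^{-1}(\lambda)$, $\lambda\vdash d$, which is what makes the termwise expansion and its absolute convergence valid.
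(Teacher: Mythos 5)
Your proposal is correct and follows exactly the route the paper takes: the paper states this theorem without a separate proof environment precisely because the two paragraphs preceding it carry out the same argument — substitute the Maclaurin expansion $\Omega_\hbar^{-1}(\lambda)=\sum_{r\ge 0}(-\hbar)^r f_r(\lambda)$, valid on $|\hbar|<\tfrac{1}{d-1}$ since the singularities are reciprocals of nonzero contents, into the Plancherel-expectation formulas of Theorem \ref{thm:PlancherelExpectations}, noting that $d\le N$ puts $\hbar=\tfrac{1}{N}$ strictly inside that disc. Your explicit justification of the (finite-sum) interchange and of absolute convergence is a faithful, slightly more detailed write-up of the paper's intended argument.
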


	We remark that although the series 
	
		\begin{equation}
			 \sum_{r=0}^\infty \left( -\frac{1}{N} \right)^r \langle f_r  \rangle
		\end{equation}
		
	\noindent
	appears to be an alternating series, this is in fact not the case,
	because the terms corresponding to odd values of $r$ are equal to zero. Indeed,
	for any Young diagram $\lambda$, the content alphabet of the conjugate diagram $\lambda^*$
	is the negative of the content alphabet of $\lambda$, so that 
	
		\begin{equation}
			f_r(\lambda^*) = (-1)^r f_r(\lambda)
		\end{equation}
		
	\noindent
	by the homogeneity of $f_r$. We will see shortly that a similar statement holds in general for the series 
	
		\begin{equation}
			\sum_{r=0}^\infty \left( -\frac{1}{N} \right)^r \langle \omega_\alpha f_r \omega_\beta \rangle,
		\end{equation}
		
	\noindent 
	whose nonzero terms are either all positive or all negative.
		
	\subsection{Graphical expansion}
	Theorem \ref{thm:StatisticalExpansion} may be parlayed into a graphical expansion 
	in the manner of Feynman diagrams. This is done by inverting the Fourier transform on $\group{S}(d)$.	
	
	The unknown central element in $\C\group{S}(d)$ whose normalized trace in the regular representation is 
	the expectation $\langle \omega_\alpha f_r \omega_\beta \rangle$ must be of the form $C_\alpha F_r C_\beta,$
	where $F_r \in \mathcal{Z}(d)$ acts in $\mathsf{V}^\lambda$ as multiplication by $f_r(\lambda).$
	The central element $F_r$ may be expressed in terms of the Jucys-Murphy elements of the 
	group algebra $\C\group{S}(d)$,

		\begin{equation}
		\label{eqn:JMelements}
			X_t = \sum_{s < t} (s\ t), \quad 1 \leq t \leq d.
		\end{equation}
		
	\noindent
	Here $(s\ t) \in \group{S}(d)$ denotes the transposition interchanging the points $s<t$ in $\{1,\dots,d\}$.
	These special elements play a fundamental role in the representation theory of $\C\group{S}(d)$, see \cite{DG,OV}.
    	While the transposition sums \eqref{eqn:JMelements} commute with one another, they are not themselves central.
    	However, for any symmetric polynomial $p$, the group algebra element $p(X_1,\dots,X_d)$ lies in the center 
    	$\mathcal{Z}(d)$, and acts in $\mathsf{V}^\lambda$ as multiplication by the scalar $p(\lambda)$ obtained by
    	evaluating $p$ on the content alphabet of $\lambda$.
	We conclude that the Plancherel expectation $\langle \omega_\alpha f_r \omega_\beta \rangle$ is the normalized
	character of the central element 
		
		\begin{equation}
		\label{eqn:JMproduct}
			C_\alpha f_r(X_1,\dots,X_d) C_\beta
		\end{equation}
		
	\noindent
	acting the regular representation of $\C\group{S}(d)$, whence 
	$\langle \omega_\alpha f_r \omega_\beta \rangle$ is 
	the coefficient of the identity $\iota \in \group{S}(d)$ when the product
	\eqref{eqn:JMproduct} is expressed as a linear combination of conjugacy classes.
	
		\begin{figure}
			\includegraphics{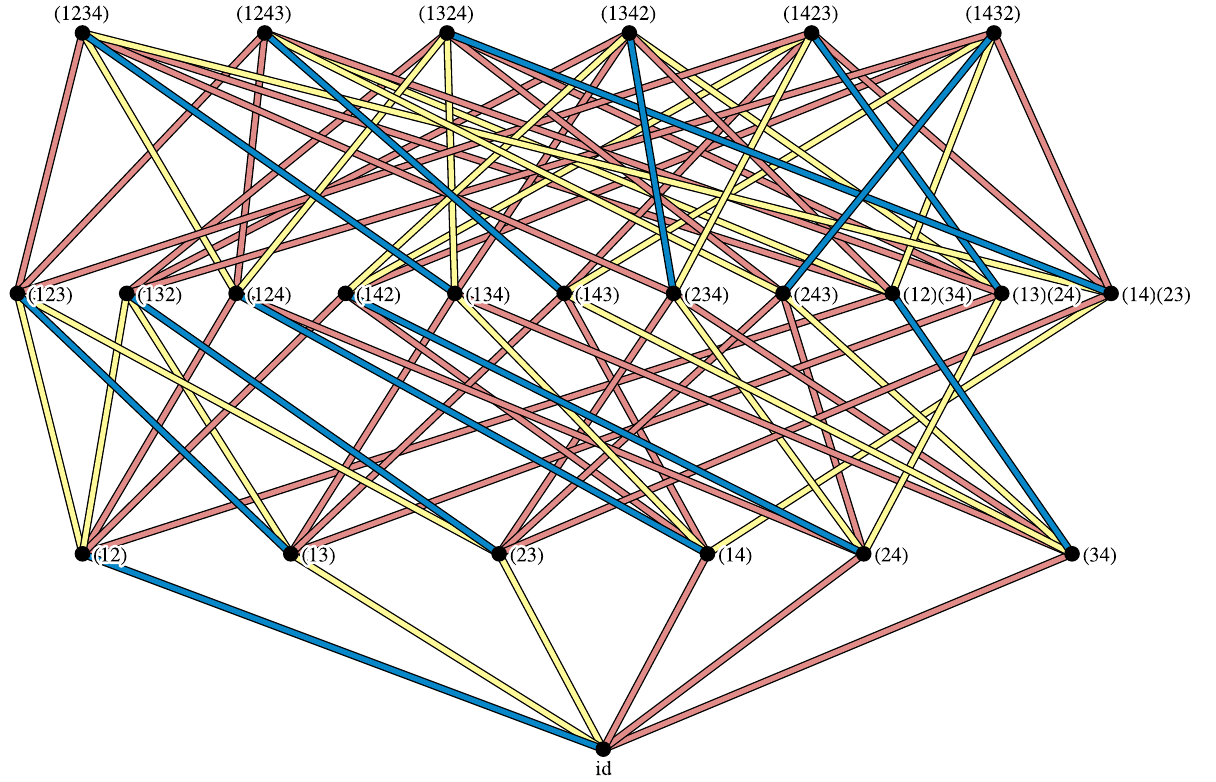}
			\caption{\label{fig:Cayley} Edge labeled Cayley graph of $\group{S}(4).$ Figure by M. LaCroix.}
		\end{figure}
		
	In order to visualize the coefficient of $\iota$ in \eqref{eqn:JMproduct},
	we consider the Cayley graph of $\group{S}(d)$, as generated by the conjugacy class of transpositions.
	Let us mark each edge of the Cayley graph corresponding to the transposition $(s\ t)$ with $t,$ 
	the larger of the two numbers
	interchanged. Figure \ref{fig:Cayley} shows the case $d=4$, with $2$-edges in blue, $3$-edges in yellow, and 
	$4$-edges in red. Let us call a walk on the Cayley graph \emph{monotone} if the labels of the edges it traverses form
	a weakly increasing sequence. Thus, monotone walks are virtual histories of the evolution of
	a particle on $\group{S}(d)$ which learns from experience --- once it travels along an edge of 
	value $t,$ it refuses to 
	traverse edges of lesser value.	Let $\vec{W}^r(\alpha,\beta)$ denote the number of monotone $r$-step 
	walks on the Cayley graph which 
	begin at a permutation of cycle type $\alpha$, and end at a permutation of cycle type $\beta$,
	i.e. the total number of monotone walks of length $r$ from the conjugacy class $C_\alpha$ to
	the conjugacy class $C_\beta$.
	Then, by definition of the complete symmetric polynomials and the Jucys-Murphy elements, 
	we see that 
		
		\begin{equation}
			\langle \omega_\alpha f_r \omega_\beta \rangle = \vec{W}^r(\alpha,\beta).
		\end{equation}
		
	\noindent
	Thus, Theorem \ref{thm:StatisticalExpansion} gives the following graphical expansion.
	
		\begin{thm}
		\label{thm:GraphicalExpansion}
		For any $1 \leq d \leq N$, we have
		the absolutely convergent series expansions
		
			\begin{equation*}
			\begin{split}	
				I_N^{(0)d} &= N^{2d} \sum_{r=0}^\infty \left( -\frac{1}{N} \right)^r \vec{W}^r(1^d,1^d), \\
				I_N^{(1)d} &= N^d \sum_{\alpha \vdash d} p_\alpha(a_1,\dots,a_N)
					\sum_{r=0}^\infty \left( -\frac{1}{N} \right)^r \vec{W}^r(\alpha,1^d), \\	
				I_N^{(2)d} &= \sum_{\alpha,\beta \vdash d} p_\alpha(a_1,\dots,a_N)p_\beta(b_1,\dots,b_N)
					\sum_{r=0}^\infty \left( -\frac{1}{N} \right)^r \vec{W}^r(\alpha,\beta).
			\end{split}
			\end{equation*}
		
		\end{thm}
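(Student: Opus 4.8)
The plan is to obtain Theorem~\ref{thm:GraphicalExpansion} directly from Theorem~\ref{thm:StatisticalExpansion}. The two sets of series differ only by the replacement of each Plancherel average $\langle \omega_\alpha f_r \omega_\beta \rangle$ with the walk count $\vec{W}^r(\alpha,\beta)$, and absolute convergence for $1 \le d \le N$ is already provided by Theorem~\ref{thm:StatisticalExpansion}; so the whole matter reduces to the single combinatorial identity
\[
	\langle \omega_\alpha f_r \omega_\beta \rangle = \vec{W}^r(\alpha,\beta), \qquad \alpha,\beta \vdash d,\ r \ge 0 .
\]
The formulas for $m=0$ and $m=1$ are then the specializations in which one or both of $\alpha,\beta$ is the trivial class $(1^d)$, where $C_{(1^d)} = \iota$ and $\omega_{(1^d)} \equiv 1$, so that $\vec{W}^r$ counts monotone walks beginning and/or ending at the identity.

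To prove the identity I would first translate Plancherel averages into group-algebra coefficients. Since $\langle X \rangle = \sum_{\lambda \vdash d} X(\lambda)\,(\dim \mathsf{V}^\lambda)^2/d!$ implements the normalized character of the regular representation, and that character is supported at $\iota$, any central element $C \in \mathcal{Z}(d)$ acting in $\mathsf{V}^\lambda$ as the scalar $\omega_C(\lambda)$ satisfies $\langle \omega_C \rangle = $ the coefficient of $\iota$ when $C$ is expanded in the group basis of $\C\group{S}(d)$. By the Jucys--Murphy property recalled above, $f_r(X_1,\dots,X_d)$ is central and acts in $\mathsf{V}^\lambda$ as $f_r(\lambda)$, so $C_\alpha f_r(X_1,\dots,X_d) C_\beta$ is central and acts as $\omega_\alpha(\lambda) f_r(\lambda) \omega_\beta(\lambda)$; hence $\langle \omega_\alpha f_r \omega_\beta \rangle$ is the coefficient of $\iota$ in $C_\alpha f_r(X_1,\dots,X_d) C_\beta$.

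The last step is to count that coefficient. Expanding $f_r(X_1,\dots,X_d) = \sum_{1 \le i_1 \le \dots \le i_r \le d} X_{i_1}\cdots X_{i_r}$ and then each $X_{i_j} = \sum_{s < i_j} (s\ i_j)$ exhibits $f_r(X_1,\dots,X_d)$ as the sum $\sum_w w$ over all monotone words $w = (s_1\ t_1)\cdots(s_r\ t_r)$ of length $r$ (each appearing once). Thus $C_\alpha f_r(X_1,\dots,X_d) C_\beta = \sum_{\sigma, w, \tau} \sigma w \tau$ over $\sigma \in C_\alpha$, monotone $w$, and $\tau \in C_\beta$, and the coefficient of $\iota$ is the number of triples with $\sigma w \tau = \iota$. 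Because conjugacy classes in $\group{S}(d)$ are inverse-closed, each pair $(\sigma, w)$ with $\sigma w \in C_\beta$ admits exactly one valid $\tau = (\sigma w)^{-1} \in C_\beta$, so the coefficient equals the number of pairs $(\sigma, w)$ with $\sigma \in C_\alpha$ and $\sigma w \in C_\beta$. Reading $w$ as the step sequence of a walk launched from $\sigma$ identifies this with $\vec{W}^r(\alpha,\beta)$. Substituting $\langle \omega_\alpha f_r \omega_\beta \rangle = \vec{W}^r(\alpha,\beta)$ and its specializations into Theorem~\ref{thm:StatisticalExpansion} gives the three displayed expansions.

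There is no serious obstacle here --- no estimate and no representation theory beyond the Jucys--Murphy property and the regular-character formula enter --- so the ``hard part'' is really only a matter of bookkeeping: one must fix conventions so that a walk-step right-multiplies by a transposition (making the word $w$ read off in the same order as the factors of $f_r(X_1,\dots,X_d)$), use the inverse-closedness of $C_\beta$ to absorb the trailing factor without an index mismatch, and track the sign $(-1/N)^r$ exactly as it appears in Theorem~\ref{thm:StatisticalExpansion}.
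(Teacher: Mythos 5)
Your proposal is correct and follows essentially the same route as the paper: the discussion preceding Theorem~\ref{thm:GraphicalExpansion} likewise inverts the Fourier transform to identify $\langle \omega_\alpha f_r \omega_\beta \rangle$ with the coefficient of $\iota$ in the central element $C_\alpha f_r(X_1,\dots,X_d) C_\beta$, and then reads that coefficient off as the monotone walk count $\vec{W}^r(\alpha,\beta)$ before substituting into Theorem~\ref{thm:StatisticalExpansion}. Your explicit absorption of the trailing $C_\beta$ factor via inverse-closedness of conjugacy classes is a detail the paper leaves implicit, but it is exactly the intended bookkeeping.
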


		Theorem \ref{thm:GraphicalExpansion} says that the 
		stable string coefficients $I_N^d$ are generating functions 
		for monotone walks on the symmetric group with endpoints in given 
		conjugacy classes: 
		with the case $m=0$ corresponds to loops based at a given point of $\group{S}(d)$, 
		the case $m=1$ corresponding to anchored walks issuing from the 
		identity and ending in a given class, and the
		case $m=2$ allowing arbitrary conjugacy classes as boundary conditions.		
		As a consistency check on the graphical expansion, observe that since 
	
			\begin{equation}
				\lim_{N \to \infty} \Omega_{\frac{1}{N}}(\lambda)=1
			\end{equation}
		
		\noindent
		for each fixed $\lambda \vdash d$, we have
	
			\begin{equation}
				 \lim_{N \to \infty} \langle \omega_\alpha
				\Omega_{\frac{1}{N}}^{-1}\omega_\beta\rangle = \delta_{\alpha\beta} |C_\alpha|
			\end{equation}
		
		\noindent
		for each fixed $\alpha,\beta \vdash d$, where the second equality is the orthogonality of irreducible characters.
		This comports with the combinatorially obvious enumeration of $0$-step walks from 
		$C_\alpha$ to $C_\beta$,
		
			\begin{equation}
				 \vec{W}^0(\alpha,\beta) = \delta_{\alpha\beta} |C_\alpha|.
			\end{equation}
		
		\subsection{Topological expansion}
		The fact that monotone walks on symmetric groups play the role of Feynman diagrams for integration against the 
		Haar measure on $\group{U}(N)$ was discovered in \cite{Novak:Banach}, and further developed
		in \cite{MN1,MN2}. It was subsequently understood that these trajectories admit a natural topological interpretation involving branched covers
		of the sphere \cite{GGN1,GGN2}.
								
		To see this, let us recall the relationship between the number $W^r(\alpha,\beta)$ of not necessarily monotone $r$-step walks on $\group{S}(d)$ 
		between conjugacy classes $C_\alpha$ and $C_\beta$ and maps to the sphere. Applied in reverse, Hurwitz's classical monodromy construction \cite{Hurwitz1}
		interprets $\frac{1}{d!}W^r(\alpha,\beta)$ as a weighted count of isomorphism classes of pairs $(\mathbf{X},f)$ consisting of a compact Riemann surface
		$\mathbf{X}$ together with a degree $d$ holomorphic map $f \colon \mathbf{X} \to \mathbf{P}^1$ to 
		the Riemann sphere with ramification profiles $\alpha,\beta$ over $\infty,0 \in \mathbf{P}^1$, and the simplest non-trivial
		branching over the $r$th roots of unity on the sphere. The normalized counts $\frac{1}{d!}W^r(\alpha,\beta)$
		were called the disconnected double Hurwitz numbers in \cite{Okounkov:MRL}; for our purposes, it 
		is more convenient to assign this name to the raw count $W^r(\alpha,\beta)$.				
		By the Riemann-Hurwitz formula, the disconnected double Hurwitz number $W^r(\alpha,\beta)$
		is zero unless 
				
			\begin{equation}
			\label{eqn:RiemannHurwitz}
				r = 2g-2+\ell(\alpha)+\ell(\beta),
			\end{equation}
			
		\noindent
		where $g=g(\mathbf{X})$ is the genus of $\mathbf{X}$. Note that if 
				
			\begin{equation}
				\mathbf{X} = \mathbf{X}_1 \sqcup \dots \sqcup \mathbf{X}_c
			\end{equation}
			
		\noindent
		is a disjoint union of $c$ connected components, then
		
			\begin{equation}
				g(\mathbf{X}) = \left(g(\mathbf{X}_1)-1\right) + \dots +  \left(g(\mathbf{X}_c)-1\right) + 1
			\end{equation}
		
		\noindent
		by additivity of the Euler characteristic, so that the genus of a disconnected 
		surface may be negative but is subject to the lower bound
		
			\begin{equation}
			\label{eqn:LowerGenusBound}
				g(\mathbf{X})  \geq -c + 1.
			\end{equation}
			
		\noindent
		In view of \eqref{eqn:RiemannHurwitz}, 
		we may re-index the disconnected double Hurwitz numbers by genus, setting
						
			\begin{equation}
			\label{eqn:DisconnectedHurwitzNumbers}
				H_g^\bullet(\alpha,\beta) := W^{2g-2+\ell(\alpha) + \ell(\beta)}(\alpha,\beta),
			\end{equation}
			
		\noindent
		where the bullet indicates disconnected. 
		The case $\beta=1^d$ corresponds to covers unramified over $0 \in \mathbf{P}^1$,
		and it is customary to write
		
			\begin{equation}
				H_g^\bullet(\alpha) := H_g^\bullet(\alpha,1^d) 			
			\end{equation}
			
		\noindent
		and call these the disconnected single Hurwitz numbers; these were the numbers
		originally studied by Hurwitz \cite{Hurwitz1,Hurwitz2}. We may further define 
		the disconnected simple Hurwitz numbers by 
		
			\begin{equation}
				H_g^{\bullet d}:= H_g^\bullet(1^d).
			\end{equation}
			
		\noindent
		A good reference on the combinatorial features of simple Hurwitz numbers is \cite{DYZ}.
								
		By analogy with the above, the monotone walk counts $\vec{W}^r(\alpha,\beta)$ 
		and $\vec{W}^r(\alpha,1^d)$ 
		were termed the disconnected monotone double and single Hurwitz numbers in \cite{GGN1,GGN2,GGN3}. 
		Since these nonnegative integers are, by construction, smaller than their classical
		non-monotone counterparts, the Riemann-Hurwitz formula imposes the 
		same vanishing condition on monotone Hurwitz numbers as it does on classical 
		Hurwitz numbers, and accordingly we may re-index monotone Hurwitz numbers by genus, 
		setting 
		
			\begin{equation}
			\label{eqn:DisconnectedMonotoneHurwitzNumbers}
				\mon_g^\bullet(\alpha,\beta) := \vec{W}^{2g-2+\ell(\alpha)+\ell(\beta)}(\alpha,\beta)
				\quad\text{ and }\quad \mon_g^\bullet(\alpha) := \mon_g^\bullet(\alpha,1^d).
			\end{equation}
							
		\noindent
		We further define the disconnected monotone simple Hurwitz numbers by
		
			\begin{equation}
				\mon_g^{\bullet d}:= \mon_g^\bullet(1^d).
			\end{equation}

		Topologically, monotone Hurwitz numbers may be viewed as corresponding to 
		a signed enumeration of the the same class of covers
		counted by classical Hurwitz numbers; despite being signed, this count is always
		nonnegative. The classical and monotone constructions have been unified within a more
		general theory of weighted Hurwitz numbers \cite{ACEH}, but remain 
		the two most important and useful instances of this generalization. 		
		In particular, we may now interpret the graphical expansions given in Theorem \ref{thm:GraphicalExpansion} as topological expansions.

			\begin{thm}
			\label{thm:TopologicalExpansion}
			For any integers $1 \leq d \leq N$, we have
			
				\begin{equation*}
					I_N^d = \sum_{g=-d+1}^\infty N^{2-2g} I_{Ng}^d,
				\end{equation*}
				
			\noindent
			where the series is $\|\cdot\|$-absolutely convergent, and the 
			coefficients $I_{Ng}^d=I_{Ng}^{(m)d}$ are homogeneous degree
			$d$ polynomials in $mN$ variables given by
		
			\begin{equation*}
			\begin{split}
				I_N^{(0)d} &=  \mon_g^{\bullet d}, \\	
				I_N^{(1)d} &=  \sum_{\alpha \vdash d} \frac{p_\alpha(a_1,\dots,a_N)}{N^{\ell(\alpha)}} (-1)^{\ell(\alpha)+d}
					 \mon_g^\bullet(\alpha), \\	
				I_N^{(2)d} &=\sum_{\alpha,\beta \vdash d} \frac{p_\alpha(a_1,\dots,a_N)}{N^{\ell(\alpha)}}
				\frac{p_\beta(b_1,\dots,b_N)}{N^{\ell(\beta)}} (-1)^{\ell(\alpha)+\ell(\beta)}
					 \mon_g^\bullet(\alpha,\beta).
			\end{split}
			\end{equation*}
			\end{thm}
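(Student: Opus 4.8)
The plan is to derive Theorem \ref{thm:TopologicalExpansion} by re-indexing the graphical expansions of Theorem \ref{thm:GraphicalExpansion} by genus. The key structural input is that the Riemann--Hurwitz relation \eqref{eqn:RiemannHurwitz} applies verbatim to the monotone walk counts: $\vec{W}^r(\alpha,\beta) = 0$ unless $r = 2g-2+\ell(\alpha)+\ell(\beta)$ for some integer $g$. Setting $r_g := 2g-2+\ell(\alpha)+\ell(\beta)$, the assignment $g \mapsto r_g$ is a bijection of $\Z$ onto the set of integers of the same parity as $\ell(\alpha)+\ell(\beta)$; and since right-multiplication by a transposition always changes the parity of the number of cycles of a permutation, this set of integers contains the entire support of $r \mapsto \vec{W}^r(\alpha,\beta)$. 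Hence, for fixed $\alpha,\beta \vdash d$, re-indexing by $g$ and applying the definition \eqref{eqn:DisconnectedMonotoneHurwitzNumbers} converts $\sum_{r=0}^\infty (-1/N)^r\,\vec{W}^r(\alpha,\beta)$ into $\sum_{g \in \Z} (-1/N)^{r_g}\,\mon_g^\bullet(\alpha,\beta)$, neither omitting an admissible walk nor adding a nonzero spurious term.

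First I would substitute this into the $m=2$ identity of Theorem \ref{thm:GraphicalExpansion} and simplify. Because $r_g$ has the fixed parity of $\ell(\alpha)+\ell(\beta)$, one has $(-1)^{r_g} = (-1)^{\ell(\alpha)+\ell(\beta)}$, while $N^{-r_g} = N^{2-2g}\cdot N^{-\ell(\alpha)-\ell(\beta)}$. Therefore
\[
	I_N^{(2)d} \;=\; \sum_{\alpha,\beta \vdash d} p_\alpha(a_1,\dots,a_N)\,p_\beta(b_1,\dots,b_N) \sum_{g \in \Z} N^{2-2g}\,\frac{(-1)^{\ell(\alpha)+\ell(\beta)}}{N^{\ell(\alpha)+\ell(\beta)}}\,\mon_g^\bullet(\alpha,\beta),
\]
and since the outer sum over $\alpha,\beta$ is finite it may be exchanged with the sum over $g$, yielding $I_N^{(2)d} = \sum_g N^{2-2g} I_{Ng}^{(2)d}$ with $I_{Ng}^{(2)d}$ precisely the polynomial claimed in the statement. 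The cases $m=1$ and $m=0$ are the same computation with $\beta$, respectively $\alpha$ and $\beta$, frozen at $1^d$: the prefactor $N^{d}$ (resp. $N^{2d}$) of Theorem \ref{thm:GraphicalExpansion} cancels against $N^{-\ell(1^d)} = N^{-d}$ (resp. two such factors), leaving $N^{2-2g}$ out front and, when $m=1$, the single denominator $N^{\ell(\alpha)}$; the sign is recovered via $(-1)^{\ell(1^d)} = (-1)^d$. (Equivalently, the $m=1$ and $m=0$ formulas follow from the $m=2$ formula by the extraction-and-specialization procedure noted after Theorem \ref{thm:StringForm}.)

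It then remains to identify the genus range and to verify $\|\cdot\|$-absolute convergence. If $\vec{W}^{r_g}(\alpha,\beta) \neq 0$ then $r_g \ge 0$, so $2g-2 \ge -\ell(\alpha)-\ell(\beta) \ge -2d$ using $\ell(\alpha),\ell(\beta) \le d$, whence $g \ge -d+1$ --- this is \eqref{eqn:LowerGenusBound} for the at most $d$ sheets of a degree-$d$ cover. Thus $\mon_g^\bullet(\alpha,\beta) = 0$ for $g < -d+1$ and each $g$-sum may start at $-d+1$. For convergence, on the unit polydisc $\D_N$ one has $\|p_\alpha(a_1,\dots,a_N)\| \le N^{\ell(\alpha)}$, so $\|I_{Ng}^{(2)d}\| \le \sum_{\alpha,\beta \vdash d} \mon_g^\bullet(\alpha,\beta)$, and therefore, undoing the re-indexing,
\[
	\sum_{g \ge -d+1} N^{2-2g}\,\|I_{Ng}^{(2)d}\| \;\le\; \sum_{\alpha,\beta \vdash d} N^{\ell(\alpha)+\ell(\beta)} \sum_{r=0}^\infty N^{-r}\,\vec{W}^r(\alpha,\beta) \;<\; \infty
\]
by the absolute convergence already recorded in Theorem \ref{thm:GraphicalExpansion}; the bounds for $m=0,1$ are identical. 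Finally, each $I_{Ng}^d$ is a finite $\C$-linear combination of the polynomials $p_\alpha(a_1,\dots,a_N)$ (times $p_\beta(b_1,\dots,b_N)$) over $\alpha,\beta \vdash d$, hence a homogeneous polynomial of degree $d$ in the $mN$ eigenvalue variables. There is no deep obstacle here: the content of the theorem is already present in Theorem \ref{thm:GraphicalExpansion}, and the only step requiring genuine attention is the bookkeeping of the first paragraph --- that the passage from the $r$-sum to the $g$-sum is a bijection on the support of $\vec{W}^\bullet(\alpha,\beta)$ --- which is entirely controlled by the cycle-parity remark.
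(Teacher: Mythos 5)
Your proposal is correct and follows essentially the same route as the paper, which states Theorem \ref{thm:TopologicalExpansion} as an immediate consequence of Theorem \ref{thm:GraphicalExpansion} after the genus re-indexing $\mon_g^\bullet(\alpha,\beta) = \vec{W}^{2g-2+\ell(\alpha)+\ell(\beta)}(\alpha,\beta)$ introduced just before it, with the same appeal to the Riemann--Hurwitz parity/vanishing condition and the lower genus bound \eqref{eqn:LowerGenusBound}. Your explicit verification of the signs, the powers of $N$, and the $\|\cdot\|$-absolute convergence in the stable range simply spells out the bookkeeping the paper leaves implicit.
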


	\section{Infinite $N$}
	\label{sec:Infinite}
	From Section \ref{sec:Finite}, we know that for sufficiently strong coupling the 
	large $N$ behavior of $I_N$ is captured by its critical coupling coefficients
	
		\begin{equation}
			I_N^1,\dots,I_N^{\lfloor tN^2\rfloor}.
		\end{equation}
	
	\noindent
	However, only the first $N$ of these, the stable coupling coefficients
	
		\begin{equation}
			I_N^1,\dots,I_N^N,
		\end{equation}
		
	\noindent
	admit convergent topological expansions. We will bridge the gap between 
	the stable and critical ranges analytically, as $N \to \infty$, in Section \ref{sec:Large}. 
	In this section we do so formally, by setting $N=\infty$.
	
	\subsection{Stable partition functions}
	From the matrix integral perspective, setting $N=\infty$ in our problem
	corresponds to replacing the integrals \eqref{eqn:BGW} and 
	\eqref{eqn:HCIZ} with 
				
		 \begin{equation}
		    \label{eqn:StableBGW}
			I^{(1)} = \int_{\group{U}} e^{\sqrt{z}\hbar^{-1} \Tr (AU + BU^{-1})} \mathrm{d}U,
		\end{equation}
		
	\noindent
	and

		\begin{equation}
		\label{eqn:StableHCIZ}
			I^{(2)} = \int_{\group{U}} e^{z\hbar^{-1} \Tr AUBU^{-1}} \mathrm{d}U,
		\end{equation}

	\noindent
	where the integration is over the stable unitary group \cite{Bott,SV}
	
		\begin{equation}
		\label{eqn:StableUnitaryGroup}
			U = \varinjlim \group{U}(N)
		\end{equation}
		
	\noindent
	with $\hbar$ an infinitely small parameter and $A,B$ infinitely large matrices.
	However, as $\group{U}$ is not locally compact and does not support a Haar measure,
	and these are ill-defined functional integrals. 
	
	Another approach to stability is to set $N = \infty$ in the coupling expansion
	of $I_N$, which means that we consider power series 
	
		\begin{equation}
			I^{(m)} = 1+ \sum_{d=1}^\infty \frac{z^d}{d!} I^{(m)d}, \quad m=0,1,2
		\end{equation}
		
	\noindent
	in a formal coupling constant $z$ whose coefficients $I^{(m)d}$ are themselves
	formal series in an indeterminate $\hbar$ representing the parameter $1/N$ at $N = \infty$. 
	More precisely, the coupling coefficients $I^{(m)d}$ may be constructed as statistical expansions 
	which are $N=\infty$ versions of the statistical expansions of $I_N^{(m)d}$ 
	given by Theorem \ref{thm:StatisticalExpansion},

		\begin{equation}
		\label{eqn:StableCouplingStatistical}
		\begin{split}
			I^{(0)d} &= \hbar^{-2d} \sum_{r=0}^\infty (-\hbar)^r \langle f_r \rangle\\
			I^{(1)d} &= \hbar^{-d}\sum_{\alpha \vdash d} p_\alpha(A) \hbar^{-d}
				\sum_{r=0}^\infty (-\hbar)^r \langle \omega_\alpha f_r \rangle \\
			I^{(2)d} &=  \sum_{\alpha,\beta \vdash d} p_\alpha(A) p_\beta(B)
				\sum_{r=0}^\infty (-\hbar)^r \langle \omega_\alpha f_r \omega_\beta \rangle,
		\end{split}
		\end{equation}

	\noindent
	or as graphical expansions which are $N=\infty$ versions of the graphical expansions of $I_N^{(m)d}$ 
	given by Theorem \ref{thm:GraphicalExpansion},
			
		\begin{equation}
		\label{eqn:StableCouplingCombinatorial}
		\begin{split}
			I^{(0)d} &= \hbar^{-2d} \sum_{r=0}^\infty (-\hbar)^r \vec{W}^r(1^d,1^d)\\
			I^{(1)d} &= \hbar^{-d}\sum_{\alpha \vdash d} p_\alpha(A) \hbar^{-d}
				\sum_{r=0}^\infty (-\hbar)^r \vec{W}^r(\alpha,1^d) \\
			I^{(2)d} &=  \sum_{\alpha,\beta \vdash d} p_\alpha(A) p_\beta(B)
				\sum_{r=0}^\infty (-\hbar)^r \vec{W}^r(\alpha,\beta),
		\end{split}
		\end{equation}
		
	\noindent
	or as topological expansions which are $N=\infty$ versions of the topological expansions of $I_N^{(m)d}$ 
	given by Theorem \ref{thm:GraphicalExpansion},

		\begin{equation}
		\label{eqn:StableCouplingTopological}
		\begin{split}
			I^{(0)d} &= \sum_{g=-d+1} \hbar^{2g-2} \mon_g^{\bullet d} \\
			I^{(1)d} &=  \sum_{\alpha \vdash d} \frac{p_\alpha(A)}{\hbar^{-\ell(\alpha)}} (-1)^{\ell(\alpha)+d}
				\sum_{g=-d+1}^\infty \hbar^{2g-2} \mon_g^\bullet(\alpha) \\
			I^{(2)d} &= \sum_{\alpha \vdash d} \frac{p_\alpha(A)}{\hbar^{-\ell(\alpha)}} \frac{p_\alpha(B)}{\hbar^{-\ell(B)}} (-1)^{\ell(\alpha)+\ell(\beta)}
				\sum_{g=-d+1}^\infty \hbar^{2g-2} \mon_g^\bullet(\alpha,\beta).
		\end{split}
		\end{equation}
		
	\noindent
	Here 
	
		\begin{equation}
			p_\alpha(A) = \prod_{i=1}^{\ell(\alpha)} \sum_{j=1}^\infty a_j^{\alpha_i} \quad\text{ and }\quad
			p_\beta(B) = \prod_{i=1}^{\ell(\beta)} \sum_{j=1}^\infty b_j^{\beta_i} 
		\end{equation}
		
	\noindent
	are the Newton power sum symmetric functions in two
	countably infinite alphabets $A=\{a_1,a_2,\dots\}$ and $B=\{b_1,b_2,\dots\}$ of formal 
	variables playing the role of the external matrix fields in the integrals
	\eqref{eqn:BGW} and \eqref{eqn:HCIZ}. Observe that the signs which 
	appear in these expressions could be eliminated simply by replacing
	$\hbar$ with $-\hbar$, but we refrain from doing this in order to maintain notational 
	consistency with Section \ref{sec:Finite}, and because we wish to emphasize that 
	$\hbar$ is a formal replacement for $1/N$, not for $-1/N$. We view the generating functions
	$I^{(1)}$ and $I^{(2)}$ as meaningful algebraic versions of the ill-defined functional integrals 
	\eqref{eqn:StableBGW} and \eqref{eqn:StableHCIZ}.

	\subsection{Stable free energy}
	A subtle but important feature of the total generating function $I$ for disconnected
	monotone Hurwitz theory is that the variable $\hbar$ is an 
	ordinary marker for Euler characteristic, whereas in classical 
	Hurwitz theory one uses an exponential marker for this statistic \cite{GJV,Okounkov:MRL}.
	In a sense, this particularity explains the inevitability of monotone Hurwitz theory: the large $N$ expansion \eqref{eqn:LargeNExpansion}
	presents as an ordinary generating function in $1/N$, not an exponential one,
	so that Hurwitz theory must be desymmetrized in order to match it with the 't Hooft expansion.
	Crucially, we have the following theorem from \cite{GGN1,GGN2}, which says that even though
	$I$ is a mixed ordinary/exponential generating function, taking the 
	logarithm does indeed extract connected information.

		\begin{thm}
		\label{thm:ExponentialFormula}
			We have
						
				\begin{equation*}
					\log I^{(m)} = F^{(m)},
				\end{equation*}

			\noindent
			where 
			
				\begin{align*}
					F^{(0)} &= \sum_{d=1}^\infty \frac{z^d}{d!} \sum_{g=0}^\infty \hbar^{2g-2} \mon_g^d\\
					F^{(1)} &= \sum_{d=1}^\infty \frac{z^d}{d!} 
						\sum_{\alpha \vdash d} \frac{p_\alpha(A)}{\hbar^{-\ell(\alpha)}} (-1)^{\ell(\alpha)+d}
							\sum_{g=0}^\infty \hbar^{2g-2} \mon_g(\alpha) \\
					F^{(2)} &=\sum_{d=1}^\infty \frac{z^d}{d!} 
					\sum_{\alpha \vdash d} \frac{p_\alpha(A)}{\hbar^{-\ell(\alpha)}} \frac{p_\alpha(B)}{\hbar^{-\ell(B)}} (-1)^{\ell(\alpha)+\ell(\beta)}
					\sum_{g=0}^\infty \hbar^{2g-2} \mon_g(\alpha,\beta).
				\end{align*}
				
			\noindent
			are total generating functions for the connected monotone simple, single, and double
			Hurwitz numbers in all degrees and genera.
		\end{thm}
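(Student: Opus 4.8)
The plan is to recognise $\log I^{(m)} = F^{(m)}$ as an instance of the exponential formula, the only real subtlety being that $\hbar$ enters here as an \emph{ordinary} rather than an exponential marker for the Euler characteristic. I would carry out the argument in full for $m=2$ and then obtain the cases $m=1,0$ for free by the extraction-and-specialisation procedure recorded after Theorem \ref{thm:StringForm} (set $b_i\mapsto 1$ and extract, then $a_i\mapsto 1$ and extract), which visibly commutes with taking logarithms.

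First I would recall, from \eqref{eqn:StableCouplingCombinatorial}, that $I^{(2)d}$ is the generating polynomial whose $p_\alpha(A)p_\beta(B)(-\hbar)^r$-coefficient is the number $\vec W^r(\alpha,\beta)$ of monotone $r$-step walks on $\group{S}(d)$ from $C_\alpha$ to $C_\beta$, i.e.\ of pairs $(\sigma,(\tau_1,\dots,\tau_r))$ with $\sigma\in C_\alpha$, each $\tau_i$ a transposition, $\sigma\tau_1\cdots\tau_r\in C_\beta$, and the larger of the two points moved by the $\tau_i$ forming a weakly increasing sequence. I would call such a configuration \emph{connected} when $\langle\sigma,\tau_1,\dots,\tau_r\rangle$ acts transitively on $\{1,\dots,d\}$, let $\vec W^{r,\circ}(\alpha,\beta)$ count those, and note that Riemann--Hurwitz confines the connected count to $r=2g-2+\ell(\alpha)+\ell(\beta)$ with $g\geq 0$, this value being $\mon_g(\alpha,\beta)$.

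The heart of the argument is a decomposition lemma: restricting a configuration to the orbits of $\langle\sigma,\tau_1,\dots,\tau_r\rangle$ should set up a bijection between monotone walks on $\{1,\dots,d\}$ from $C_\alpha$ to $C_\beta$ and the data of a set partition $\{1,\dots,d\}=B_1\sqcup\dots\sqcup B_c$ together with a connected monotone walk on each $B_j$, the cycle types of $\sigma$ and of $\sigma\tau_1\cdots\tau_r$ restricting to ramification profiles $\alpha^{(j)},\beta^{(j)}$ whose juxtapositions recover $\alpha,\beta$, and $r=\sum_j r_j$. Injectivity and surjectivity reduce to the claim that the $c$ sub-walks reassemble into a global monotone walk in exactly one way: one concatenates all transpositions and sorts them by their larger moved point, breaking ties in favour of the within-block order. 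I expect this to be the main obstacle, and it is precisely where monotonicity bites --- the rule is well defined only because two transpositions with the same larger moved point necessarily lie in a common orbit, so the ambiguous ties never straddle two blocks, and the sorted sequence is then monotone and restricts correctly, while every monotone walk is recovered from its restrictions this way. (In the unconstrained classical case the reassembly can be carried out in $\binom{r}{r_1,\dots,r_c}$ ways, which is exactly why classical Hurwitz theory uses an exponential marker; the disappearance of this multinomial is what makes $\hbar$ an ordinary marker here.)

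Finally I would observe that all the gradings are additive along the decomposition --- $d=\sum_j|B_j|$ and $r=\sum_j r_j$, so $(-\hbar)^r=\prod_j(-\hbar)^{r_j}$, while $p_\alpha(A)=\prod_j p_{\alpha^{(j)}}(A)$ and $p_\beta(B)=\prod_j p_{\beta^{(j)}}(B)$ by multiplicativity of power sums under juxtaposition of partitions. Hence the decomposition lemma is literally the assertion that $I^{(2)}=1+\sum_{d\geq 1}\frac{z^d}{d!}I^{(2)d}$ is the exponential, in the labelled-structures sense, of the generating function of connected configurations; reindexing the connected count by genus via Riemann--Hurwitz and pulling the factors $(-1)^{\ell(\alpha)+\ell(\beta)}$ and $\hbar^{\ell(\alpha)+\ell(\beta)}$ out of $(-\hbar)^r$ matches this exponent term-by-term with the displayed formula for $F^{(2)}$, and the cases $m=1,0$ follow by the specialisations above. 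As a cross-check one could instead note that the right-hand sides of \eqref{eqn:StableCouplingTopological} assemble into the hypergeometric tau function $\sum_\lambda z^{|\lambda|}s_\lambda(A)s_\lambda(B)\prod_{\Box\in\lambda}(1+\hbar c(\Box))^{-1}$, whose logarithm is connected by a standard KP/Toda argument; but the combinatorial route is self-contained and has the advantage of explaining the ordinary-marker phenomenon directly.
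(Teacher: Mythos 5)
The paper gives no proof of this theorem: it is imported verbatim from \cite{GGN1,GGN2}, so there is no in-paper argument to compare against. Your proof is correct and is essentially the one in those references. The decisive point is exactly the one you isolate: in the merge of the per-orbit transposition sequences, any two transpositions sharing the same larger moved point both move that point and hence lie in a common orbit, so ties never straddle blocks, the monotone interleaving is unique, and the multinomial coefficient that forces an exponential marker in classical Hurwitz theory disappears. One small remark: rather than asserting that the extraction-and-specialisation in the $B$-alphabet ``visibly'' commutes with $\log$ (it does, but only because that operation is the algebra homomorphism sending $p_j(B)\mapsto 0$ for $j\geq 2$ and $p_1(B)$ to a constant, the power sums being algebraically independent generators), it is cleaner to note that your orbit bijection preserves the condition $\beta=1^d$ --- equivalently, that all $\beta^{(j)}=1^{|B_j|}$ --- so the $m=1$ and $m=0$ cases are literally the restrictions of the $m=2$ bijection to walks ending at the identity.
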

		
	An immediate corollary of Theorem \ref{thm:ExponentialFormula} is that a formal version
	of Conjecture \ref{conj:Main} holds at $N=\infty$, i.e. the stable free energy $F=\log I$
	admits a topological expansion.
	
		\begin{thm}
		\label{thm:StableTopologicalExpansion}
			We have 
			
				\begin{equation*}
					\log I = \sum_{g=0}^\infty \hbar^{2g-2} F_g,
				\end{equation*}
				
			\noindent
			where 
			
				\begin{align*}
					F_g^{(0)} &= \sum_{d=1}^\infty \frac{z^d}{d!} \mon_g^d\\
					F_g^{(1)} &= \sum_{d=1}^\infty \frac{z^d}{d!} 
						\sum_{\alpha \vdash d} \frac{p_\alpha(A)}{\hbar^{-\ell(\alpha)}} (-1)^{\ell(\alpha)+d}\mon_g(\alpha) \\
					F_g^{(2)} &=\sum_{d=1}^\infty \frac{z^d}{d!} 
					\sum_{\alpha \vdash d} \frac{p_\alpha(A)}{\hbar^{-\ell(\alpha)}} \frac{p_\alpha(B)}{\hbar^{-\ell(B)}} (-1)^{\ell(\alpha)+\ell(\beta)}
					\mon_g(\alpha,\beta).
				\end{align*}
				
			\noindent
			are generating functions for the connected monotone simple, single, and double
			Hurwitz numbers in specified genus $g \geq 0$.		
		\end{thm}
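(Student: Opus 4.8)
The plan is to read off Theorem~\ref{thm:StableTopologicalExpansion} from Theorem~\ref{thm:ExponentialFormula} by regrouping the defining sum for $F^{(m)}=\log I^{(m)}$ according to the genus index $g$. Fix the ambient ring to be $\field{Q}((\hbar))[[z]]$ tensored with the ring of symmetric functions in $A$ (and $B$, if $m=2$); in each fixed $z$-degree $d$ every series in play is a $\field{Q}((\hbar))$-linear combination of the finitely many monomials $p_\alpha(A)$ (resp.\ $p_\alpha(A)p_\beta(B)$) with $\alpha,\beta\vdash d$. First I would observe that, in the expression for $F^{(m)}$ supplied by Theorem~\ref{thm:ExponentialFormula}, the genus $g$ enters only through the scalar factor $\hbar^{2g-2}\mon_g(\alpha)$ (resp.\ $\hbar^{2g-2}\mon_g(\alpha,\beta)$, or $\hbar^{2g-2}\mon_g^d$ when $m=0$), the remaining weight $\frac{p_\alpha(A)}{\hbar^{-\ell(\alpha)}}(-1)^{\ell(\alpha)+d}$ being independent of $g$. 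Pulling the sum $\sum_{g\geq 0}\hbar^{2g-2}$ to the outside then produces precisely $F^{(m)}=\sum_{g=0}^\infty \hbar^{2g-2}F_g^{(m)}$ with the $F_g^{(m)}$ displayed in the statement. I would stress at this point that the factor $\hbar^{\ell(\alpha)}$ (and $\hbar^{\ell(\alpha)+\ell(\beta)}$) hidden in $p_\alpha(A)/\hbar^{-\ell(\alpha)}$ is deliberately \emph{not} merged with the $\hbar^{2g-2}$ prefactor, so that the assertion is a splitting by combinatorial genus and not a regrouping of $\log I$ by total powers of $\hbar$; this is consistent with \eqref{eqn:StableCouplingTopological}.

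The only genuine content is to verify that this manipulation is legal in the ambient ring, i.e.\ that each $F_g^{(m)}$ is well defined and that $\sum_{g\geq 0}\hbar^{2g-2}F_g^{(m)}$ converges, within each fixed $z$-degree, in the $\hbar$-adic topology. Well-definedness of $F_g^{(m)}$ is immediate since the partition sum in each $z$-degree $d$ is finite. For convergence, the relevant estimate is the elementary bound $1\leq\ell(\alpha)\leq d$ on the number of parts: for $m=1,2$ the $z^d$-coefficient of $\hbar^{2g-2}F_g^{(m)}$ has $\hbar$-valuation at least $2g-2+\ell(\alpha)\geq 2g-1\to\infty$, while for $m=0$ the prefactors $\hbar^{2g-2}$ are pairwise distinct and bounded below by $\hbar^{-2}$, so the $g$-series is a genuine Laurent series. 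Hence the interchange of the $g$-summation with the $d$- and $\alpha$-summations is justified, and the resulting series has, in each $z$-degree, the same coefficient as $F^{(m)}=\log I^{(m)}$. Since $g\geq 0$ throughout, this already exhibits the formal topological expansion of the stable free energy predicted by Conjecture~\ref{conj:Main}.

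I do not anticipate a real obstacle: once Theorem~\ref{thm:ExponentialFormula} is in hand, Theorem~\ref{thm:StableTopologicalExpansion} is a bookkeeping corollary, and the subtlest point is simply to pin down the formal ring and its topology so that both the rearrangement and the $\hbar$-adic limit make unambiguous sense. This is exactly why the present statement keeps $\hbar$ formal --- a marker for Euler characteristic rather than a small complex number --- with the analytic incarnation of the genus expansion postponed to Section~\ref{sec:Large}.
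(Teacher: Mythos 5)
Your proposal is correct and follows the paper's own route: the paper states Theorem \ref{thm:StableTopologicalExpansion} as an immediate corollary of Theorem \ref{thm:ExponentialFormula}, obtained by exactly the regrouping of the $g$-summation that you describe. Your additional care in pinning down the formal ring and the $\hbar$-adic convergence of the interchange is a sound (if unstated in the paper) elaboration of the same argument.
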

		
	\noindent
	We emphasize that even this formal algebraic result has not previously appeared in the literature in this entirety.
	The case $m=2$ is implicit in \cite{GGN3}. The total generating function for monotone single Hurwitz numbers
	was studied in \cite{GGN1,GGN2} where low-genus explicit formulas and rational parameterizations were found, 
	without the understanding that what was being analyzed was the stable free energy
	of the BGW model.

	\subsection{Topological expansion as topological factorization}
	Our approach to Conjecture \ref{conj:Main} is based on converting the stable topological expansion of $F=\log I$ given by Theorem \ref{thm:StableTopologicalExpansion},
	which is purely algebraic, into an analytically meaninfgul $N \to \infty$ asymptotic expansion of
	$F_N=\log I_N$ which holds at sufficiently strong coupling. However, this cannot be done 
	directly at the free energy level, since we do not know whether or not the stable nonvanishing hypothesis holds, 
	and a priori it is not clear that we can view $F_N$ as an element of $\O_N(\delta)$ for $\delta>0$ an absolute
	constant. We therefore have to work with $I_N$ itself, and consequently it is beneficial to 
	reformulate topological expansion of the stable free energy $F$ as topological factorization of the stable partition function $I$.
	More precisely, Theorem \ref{thm:StableTopologicalExpansion} is equivalent to the 
	statement that, for each $k \in \N_0$, we have the factorization
	
		\begin{equation}
		\label{eqn:TopologicalFactorization}
			I = E_{\overline{k}} E_{\underline{k+1}},
		\end{equation} 
		
	\noindent
	where 
		
		\begin{equation}
			E_{\overline{k}} = e^{\sum_{g=0}^k \hbar^{2g-2}F_g} = 1+ \sum_{d=1}^\infty \frac{z^d}{d!} E_{\overline{k}}^d
		\end{equation}
		
	\noindent
	and 
	
		\begin{equation}
			E_{\underline{k+1}} = e^{\sum_{g=k+1}^\infty \hbar^{2g-2}F_g} = 1+ \sum_{d=1}^\infty \frac{z^d}{d!} E_{\underline{k+1}}^d
		\end{equation}
		
	\noindent
	are generating functions for disconnected covers built from connected components of genus at most $k$
	and at least $k+1$, respectively. For example, in the notationally simplest case $m=0$, we have
	
		\begin{equation}
			E_{\overline{k}}^d = \sum_{g=-d+1}^\infty \hbar^{2g-2} \mon_{g\overline{k}}^{\bullet d}
		\end{equation}
		
	\noindent
	with $\mon_{g\overline{k}}^{\bullet d}$ the number of degree $d$ 
	disconnected simple covers of genus $g$ all of whose connected components
	have genus at most $k$, while 
	
		\begin{equation}
			E_{\underline{k+1}}^d = \sum_{g=k+1}^\infty \hbar^{2g-2} \mon_{g\underline{k+1}}^{\bullet d}
		\end{equation}
		
	\noindent
	is a generating function for the number $\mon_{g\underline{k+1}}^{\bullet d}$ of 
	disconnected simple covers of genus $g$ all of whose connected components
	have genus at least $k+1$, and the sum starts at $g=k+1$ because any such 
	cover must itself have genus $k+1$. At the level of coupling coefficients,
	Theorem \ref{thm:StableTopologicalExpansion} is 
	equivalent to a binomial convolution formula
	describing the construction of an arbitrary degree $d$ cover as a shuffle of 
	two smaller-degree disconnected covers built from connected components
	of genus at most $k$ and at least $k+1$, respectively.
		
		\begin{thm}
		\label{thm:StableTopologicalConvolution}
		For any $d \in \N$ and $k \in \N_0$, we have
			
			\begin{equation*}
				I^d = \sum_{c=0}^d {d \choose c} E_{\overline{k}}^c E_{\underline{k+1}}^{d-c}.
			\end{equation*}
		\end{thm}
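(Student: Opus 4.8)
The plan is to obtain Theorem~\ref{thm:StableTopologicalConvolution} directly from the topological factorization~\eqref{eqn:TopologicalFactorization}, reducing it to a routine extraction of coupling coefficients. First I would make~\eqref{eqn:TopologicalFactorization} explicit. By Theorem~\ref{thm:StableTopologicalExpansion} we have $\log I = \sum_{g=0}^\infty \hbar^{2g-2}F_g$, and since each $F_g = \sum_{d\ge 1}\frac{z^d}{d!}(\cdots)$ has vanishing constant term in $z$, so do the partial sums $\sum_{g=0}^k \hbar^{2g-2}F_g$ and $\sum_{g=k+1}^\infty \hbar^{2g-2}F_g$. Splitting the logarithm at genus $k$ and using $\exp(S_1+S_2)=\exp(S_1)\exp(S_2)$ gives $I = e^{\sum_{g=0}^k \hbar^{2g-2}F_g}\,e^{\sum_{g=k+1}^\infty \hbar^{2g-2}F_g} = E_{\overline{k}}\,E_{\underline{k+1}}$, which is~\eqref{eqn:TopologicalFactorization}. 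I would fix at the outset the coefficient ring $R$ in which everything lives --- formal Laurent series in $\hbar$ over $\C$ when $m=0$, over the polynomial ring in the power sums $p_i(A)$ when $m=1$, over the polynomial ring in the $p_i(A)$ and $p_j(B)$ when $m=2$ --- and carry out all manipulations inside $R[[z]]$.

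Next I would extract the coefficient of $z^d$. Writing
\[
	I = 1 + \sum_{d=1}^\infty \frac{z^d}{d!}I^d,\qquad
	E_{\overline{k}} = \sum_{d=0}^\infty \frac{z^d}{d!}E_{\overline{k}}^d,\qquad
	E_{\underline{k+1}} = \sum_{d=0}^\infty \frac{z^d}{d!}E_{\underline{k+1}}^d
\]
with the conventions $E_{\overline{k}}^0 = E_{\underline{k+1}}^0 = 1$, and multiplying the last two series in $R[[z]]$, the coefficient of $z^d$ on the right of~\eqref{eqn:TopologicalFactorization} is the finite sum $\sum_{c=0}^d \frac{1}{c!\,(d-c)!}E_{\overline{k}}^c E_{\underline{k+1}}^{d-c}$. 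Matching this with $I^d/d!$ and multiplying through by $d!$ yields $I^d = \sum_{c=0}^d \binom{d}{c}E_{\overline{k}}^c E_{\underline{k+1}}^{d-c}$, which is the assertion.

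I do not expect a substantive obstacle: the identity is the classical product rule for exponential generating functions applied to~\eqref{eqn:TopologicalFactorization}, and the only reformulation used is the trivial $\exp(S_1+S_2)=\exp(S_1)\exp(S_2)$. The one point I would take care to flag is that $E_{\overline{k}}^d$ and $E_{\underline{k+1}}^d$ are genuine Laurent series in $\hbar$, not Taylor series --- already for $m=0$ the coefficient $E_{\overline{k}}^d = \sum_{g=-d+1}^\infty \hbar^{2g-2}\mon_{g\overline{k}}^{\bullet d}$ has a pole at $\hbar=0$ of order $2d$ arising from the genus $-d+1$ disconnected covers (and analogously for $m=1,2$ after accounting for the $\hbar^{-\ell(\alpha)}$, $\hbar^{-\ell(\beta)}$ prefactors) --- so one genuinely works over a Laurent-series coefficient ring; this changes nothing, since for each fixed $d$ only finitely many pairs $(c,g)$ contribute to either side. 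As a consistency check I would also record the combinatorial reading: $I^d$ collects all degree-$d$ disconnected monotone covers together with a labelling of the $d$ sheets, and $\binom{d}{c}$ counts the ways to split that labelled set into a sub-cover on $c$ sheets all of whose components have genus at most $k$ and a sub-cover on the remaining $d-c$ sheets all of whose components have genus at least $k+1$ --- precisely the partitional-product interpretation of multiplying the two exponential generating functions in $I = E_{\overline{k}}E_{\underline{k+1}}$.
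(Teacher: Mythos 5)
Your proposal is correct and follows essentially the same route as the paper, which presents Theorem \ref{thm:StableTopologicalConvolution} as the coupling-coefficient form of the factorization $I = E_{\overline{k}}E_{\underline{k+1}}$ obtained by splitting the stable topological expansion of $\log I$ (Theorem \ref{thm:StableTopologicalExpansion}) at genus $k$ and exponentiating; extracting the coefficient of $z^d$ from the product of the two exponential generating functions gives the binomial convolution exactly as you describe. Your remarks on the Laurent-series coefficient ring and the partitional-product interpretation are sound (with the small caveat that infinitely many genera contribute in total, but only finitely many to each fixed power of $\hbar$, which is all that is needed for the product of Laurent series with bounded-below exponents to be well defined).
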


	\subsection{Topological factorization as topological concentration}
	Viewing the stable partition function $I$ as a formal replacement for an infinite-dimensional
	functional integral (e.g. \eqref{eqn:StableBGW} or \eqref{eqn:StableHCIZ}), 
	it is natural to further recast the topological factorization \eqref{eqn:TopologicalFactorization} 
	as a ``concentration inequality'' for the corresponding integral. More precisely, 
	let us define the $k$th order topological normalization of the stable partition function
	$I$ to be the series

		\begin{equation}
			\Phi_k = E_{\underline{k}}^{-1} I.
		\end{equation}
		
	\noindent
	In the case $m=1$, this is a formal replacement for the topological normalization of the stable 
	BGW integral,
	
		\begin{equation}
		    \label{eqn:TopologicallyNormalizedStableBGW}
			\Phi_k^{(1)} = e^{-\sum_{g=0}^k \hbar^{2g-2} F_g^{(1)}}\int_{\group{U}} e^{\sqrt{z}\hbar^{-1} \Tr (AU + BU^{-1})} \mathrm{d}U,
		\end{equation}
		
	\noindent
	and in the case $m=2$ it is a formal algebraic version of the topologically normalized
	HCIZ integral,
	
		\begin{equation}
		\label{eqn:StableHCIZ}
			\Phi_k^{(2)} = e^{-\sum_{g=0}^k \hbar^{2g-2} F_g^{(2)}}\int_{\group{U}} e^{z\hbar^{-1} \Tr AUBU^{-1}} \mathrm{d}U.
		\end{equation}

	\noindent
	The topological factorization of $I$ is then equivalent to the identity 
	
		\begin{equation}
			\Phi_k = E_{\underline{k+1}},
		\end{equation}
		
	\noindent
	which at the level of coupling coefficients is equivalent to the ``topological cancellation''
	identity
	
		\begin{equation}
			\label{eqn:StableTopologicalCancellation}
				\Phi_k^d = \sum_{c=0}^d {d \choose c} I^c (E_k^{-1})^{d-c} = E_{k+1}^d,
		\end{equation}
		
	\noindent 
	where 
	
		\begin{equation}
			E_k^{-1} = e^{-\sum_{g=0}^k \hbar^{2g-2}F_g} = 1 + \sum_{d=1}^\infty \frac{z^d}{d!} (E_k^{-1})^d
		\end{equation}
		
	\noindent
	the coupling expansion of the topological normalization factor $E_k^{-1}$.
	Since $E_{\underline{k+1}}$ is a generating function for covers built from
	connected components of genus at least $k+1$, this implies the algebraic 
	concentration inequality
		
		\begin{equation}
		\label{eqn:StableTopologicalConcentration}
			\Phi_k-1 = O(\hbar^{2k}),
		\end{equation}
	
	\noindent
	We will see in Section \ref{sec:Large} that proving Conjecture \ref{conj:Main}
	can be reduced to establishing an $N \to \infty$ version of \eqref{eqn:StableTopologicalConcentration}
	which holds at sufficiently strong coupling.

	\subsection{Summability in fixed genus}
	The following theorem from \cite{GGN5} forms the bridge between the 
	stable world of formal power series and the analytic world of holomorphic functions.
	
		\begin{thm}
		\label{thm:SimpleSummability}
			For each $g \in \N_0$, the generating function $F_g^{(0)}$
			for connected monotone simple Hurwitz numbers of genus $g$ has
			radius of convergence equal to $2/27$.
		\end{thm}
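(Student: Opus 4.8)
The plan is to prove the two-sided estimate $R_g \le \tfrac{2}{27}$ and $R_g \ge \tfrac{2}{27}$ for the radius of convergence $R_g$ of $F_g^{(0)} = \sum_{d \ge 1} \frac{z^d}{d!}\mon_g^d$, the first by an elementary injection and the second by an analytic argument built on the genus-$g$ structure theory of monotone simple Hurwitz numbers. Since the $\mon_g^d$ are nonnegative one could instead appeal to Pringsheim's theorem to pin down $R_g$ as a singular point, but the two-sided estimate is cleaner and makes Pringsheim unnecessary.

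For $R_g \le \tfrac{2}{27}$, recall first the genus-zero case: by the explicit genus-zero analysis of \cite{GGN1,GGN2}, $R_0 = \tfrac{2}{27}$, equivalently $\limsup_d(\mon_0^d/d!)^{1/d} = \tfrac{27}{2}$. For general $g$ I would produce an injection witnessing $\mon_g^d \ge (d-1)^g \mon_0^d$ for all $d$. Given a connected monotone simple factorization $\iota = \tau_1 \cdots \tau_{2d-2}$ of the identity in $\S(d)$ --- i.e. a degree-$d$ genus-zero monotone simple cover --- and any tuple $(a_1,\dots,a_g) \in \{1,\dots,d-1\}^g$, append the transpositions $(a_1\ d),(a_1\ d),\dots,(a_g\ d),(a_g\ d)$. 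The product is unchanged, transitivity is inherited, and the monotone condition survives because $d$ is the largest label available in $\S(d)$; by the Riemann--Hurwitz count \eqref{eqn:RiemannHurwitz} the enlarged factorization has genus $g$. Deleting the last $2g$ transpositions recovers the data, so the map is injective, and since $(d-1)^{g/d} \to 1$ we get $\limsup_d(\mon_g^d/d!)^{1/d} \ge \tfrac{27}{2}$, i.e. $R_g \le \tfrac{2}{27}$.

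For $R_g \ge \tfrac{2}{27}$ --- the substantive direction --- I would invoke the genus-$g$ structure theorem of \cite{GGN1,GGN2}: there is a single algebraic function $\xi = \xi(z)$, the genus-zero solution, obeying a Lagrange-type functional equation $\xi = z\,\Phi(\xi)$ with $\Phi$ analytic near the origin and $\Phi(0) \ne 0$, such that for every $g \ge 0$ the stable free energy $F_g^{(0)}$ is a rational function of $\xi$ (with, when $g = 1$, an additional logarithmic summand) whose only possible poles --- and, for $g = 1$, the only possible zero of the argument of the logarithm --- occur at the critical value $\xi_c := \xi(2/27)$ of $\Phi$. Granting this, the rest is elementary function theory: $\xi$ has nonnegative Taylor coefficients, its radius of convergence is $\tfrac{2}{27}$ (Lagrange inversion; the dominant singularity is the square-root branch point at the critical point of $w \mapsto w/\Phi(w)$, which one computes to lie at $z = \tfrac{2}{27}$), and $\xi$ extends continuously to $\{|z| \le \tfrac{2}{27}\}$ with $\xi(2/27) = \xi_c$. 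Hence for $|z| < \tfrac{2}{27}$ we have $|\xi(z)| \le \xi(|z|) < \xi_c$, so $\xi(z)$ avoids the critical value on the open disc, the rational-in-$\xi$ expression for $F_g^{(0)}$ has no pole there (and the logarithm, when $g = 1$, is the logarithm of a non-vanishing function), and therefore $F_g^{(0)}$ is holomorphic on $\{|z| < \tfrac{2}{27}\}$. This gives $R_g \ge \tfrac{2}{27}$, and combined with the previous paragraph, $R_g = \tfrac{2}{27}$.

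I expect the main obstacle to be exactly the pole-localization assertion: one must know, uniformly in $g$, that no spurious singularity of $F_g^{(0)}$ lurks inside the disc. I would obtain it from the spectral-curve / topological-recursion description of monotone simple Hurwitz numbers --- the stable free energies, viewed as functions on the curve, have poles only at the ramification point --- developed in \cite{GGN1,GGN2}; alternatively, running the cut-and-join recursion for the $\mon_g^d$ and checking by induction on $2g$ that $F_g^{(0)}$ is a polynomial in $\xi$ and $(1 - \xi/\xi_c)^{-1}$ (times the genus-one logarithm) would also do the job, with the bonus of producing the coefficient bound $\mon_g^d/d! = O_g\big(\mathrm{poly}(d)\,(27/2)^d\big)$ directly and hence $R_g \ge \tfrac{2}{27}$ without any reference to the boundary behavior of $\xi$. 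The remaining ingredients --- positivity of the coefficients of $\xi$, the computation locating the critical point at $\tfrac{2}{27}$, and continuity of $\xi$ up to the bounding circle --- are routine.
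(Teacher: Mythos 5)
Your outline is essentially correct, but it takes a genuinely different route from the paper, which in fact offers no proof of this statement at all: Theorem \ref{thm:SimpleSummability} is quoted from \cite{GGN5}, where it is deduced from the explicit formulas of \cite{GGN2} by expressing $F_g^{(0)}$ in terms of the classical Gaussian hypergeometric function, whose singularity theory locates the dominant singularity at $z=2/27$ and yields both bounds on the radius simultaneously. Your decomposition into $R_g\leq 2/27$ and $R_g\geq 2/27$ is sound. The injection for the upper bound is correct and is a nice self-contained addition: appending the pairs $(a_i\ d),(a_i\ d)$ preserves the product, transitivity, and monotonicity (all appended edges carry the maximal label $d$), and recovers the data by truncation, so $\mon_g^d\geq (d-1)^g\mon_0^d$ and the upper bound follows from the genus-zero growth rate $(27/2)^d$, which is indeed available in closed form from \cite{GGN1}. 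The lower bound is where all the substance lies, and as you yourself flag, it rests entirely on the pole-localization structure theorem --- that $F_g^{(0)}$ is rational in the Lagrangian series $\xi$ with poles confined to the critical value $\xi_c=\xi(2/27)$ --- which you invoke but do not prove; this is exactly the content that \cite{GGN2,GGN5} supply in explicit (hypergeometric) form, so your argument is correct conditional on the same external input the paper relies on, just packaged as a rational parametrization rather than a ${}_2F_1$ representation. The trade-off is that your version isolates an elementary combinatorial mechanism for the ``radius is at most $2/27$'' half and makes transparent why no spurious singularity can occur inside the disc (via $|\xi(z)|\leq\xi(|z|)<\xi_c$ and positivity of coefficients), whereas the hypergeometric route is shorter once the closed form is in hand and does not require treating the two inequalities separately.
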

		
	Theorem \ref{thm:SimpleSummability} is deduced in \cite{GGN5} from the results of \cite{GGN2}, which 
	imply that the series $F_g^{(0)}$ can be expressed in terms of the classical Gaussian hypergeometric function.
	Interestingly, the constant $2/27$ also appears in the asymptotic enumeration of finite groups \cite{Pyber}.
	For other appearances of $2/27$ see \cite{KT} and \cite{CMZagier}, the latter reference being 
	more directly (but still not transparently) related to the present context.
	
	Theorem \ref{thm:SimpleSummability} can be lifted to summability 
	of the analogous genus-specific generating functions for the connected 
	single and double monotone Hurwitz numbers, via the connected version 
	of Theorem \ref{thm:DisconnectedSorting}, also proved in \cite{GGN5}.
				
		\begin{thm}
		\label{thm:ConnectedSorting}
			For any $d \in \N$ and $g \in \N_0$, 
			we have nd the same relation holds in the connected case,
			
				\begin{equation*}
					\sum_{\alpha,\beta \vdash d} \mon_g(\alpha,\beta) 
					< 2^d \sum_{\alpha \vdash d} \mon_g(\alpha) < 4^d \mon_g^d.
				\end{equation*}

		\end{thm}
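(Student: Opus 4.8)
The plan is to prove the chain of inequalities in Theorem \ref{thm:ConnectedSorting} by reducing each bound to a statement about walks or covers and then applying elementary combinatorial counting, exploiting the fact that the connected monotone Hurwitz numbers $\mon_g(\alpha,\beta)$, $\mon_g(\alpha)$, $\mon_g^d$ are all specializations of the same underlying objects (monotone walks on $\group{S}(d)$, equivalently monotone factorizations of permutations). The first observation is that, by the passage from disconnected to connected enumeration and the defining exponential/logarithmic relationship of Theorem \ref{thm:ExponentialFormula}, it suffices to establish the two inequalities $\sum_{\alpha,\beta\vdash d}\mon_g(\alpha,\beta) < 2^d \sum_{\alpha\vdash d}\mon_g(\alpha)$ and $\sum_{\alpha\vdash d}\mon_g(\alpha) < 2^d \mon_g^d$ at the level of fixed $(d,g)$; composing them gives the stated $4^d$.

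For the second inequality, recall that $\mon_g(\alpha) = \mon_g(\alpha,1^d)$ counts connected monotone walks of length $r = 2g-2+\ell(\alpha)+d$ from $C_\alpha$ to $C_{1^d}$, while $\mon_g^d = \mon_g(1^d,1^d)$ counts connected monotone walks of length $2g-2+2d$ from $C_{1^d}$ to $C_{1^d}$. The idea is to construct an injection (or a bounded-to-one map) from monotone walks ending at $C_\alpha$ into monotone walks ending at the identity by appending a canonical monotone factorization of any fixed permutation of cycle type $\alpha$ into $\ell(\alpha) - (\text{number of parts equal to }1)$, i.e. $d - \ell(\alpha)$, transpositions with strictly increasing larger labels — this is possible because a permutation with cycle type $\alpha$ admits a monotone transposition factorization into exactly $d-\ell(\alpha)$ transpositions (the minimal monotone factorization), and appending it converts a walk ending at $C_\alpha$ into a walk ending at $1^d$ of the right length for genus $g$. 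The multiplicity in this correspondence, summed over $\alpha$, is controlled by $\sum_{\alpha\vdash d} (\text{number of permutations of cycle type }\alpha) \big/ |C_\alpha| \cdot (\text{choices of factorization tail})$, and a crude bound on the number of minimal monotone factorizations of a permutation of cycle type $\alpha$ (a product over cycles of Catalan-type quantities) yields a factor strictly below $2^d$. The first inequality is handled symmetrically: a monotone walk from $C_\alpha$ to $C_\beta$ of length $2g-2+\ell(\alpha)+\ell(\beta)$ can be turned into a monotone walk from $C_\alpha$ to $1^d$ by post-composing with a minimal monotone factorization of a cycle-type-$\beta$ permutation, again with multiplicity bounded by a product of Catalan numbers over the parts of $\beta$, and $\sum_{\beta\vdash d}\Cat_{\beta} < 2^d$.

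Concretely, the combinatorial engine is the elementary estimate that the number of minimal-length monotone factorizations of a single $k$-cycle is the Catalan number $\Cat_{k-1} = \frac{1}{k}\binom{2k-2}{k-1}$, so that the number of minimal monotone factorizations of a permutation of cycle type $\beta = (\beta_1,\dots,\beta_{\ell})$ is $\binom{d-\ell(\beta)}{\beta_1-1,\dots,\beta_\ell-1}\prod_i \Cat_{\beta_i-1}$ (choosing how to interleave the per-cycle factorizations), and summing this over all $\beta\vdash d$ and over all permutations of each type is bounded above by $\sum_{\text{all }\sigma\in\group{S}(d)} (\text{number of minimal monotone factorizations of }\sigma)$, which in turn is the number of monotone walks of length $d - (\#\text{fixed points})$ from the identity — crudely at most $d^{d-1}$ or, more usefully, at most $\binom{2d}{d} < 4^d$ by a generating-function argument, but one can sharpen the interleaving count to get the sharper $2^d$ by noting the multinomial coefficient and Catalan factors combine into a coefficient extraction from $\prod(1-x)^{-1/2}$-type series. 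The main obstacle I anticipate is making the multiplicity bound genuinely strict and genuinely below $2^d$ (not merely $O(2^d)$) uniformly in $g$: the map from $\beta$-ended walks to $1^d$-ended walks is not injective, and one must verify that the total overcounting factor — the number of ways a fixed monotone walk to $1^d$ can arise by appending a minimal factorization tail — is bounded by the same Catalan sum, so that the inequality is actually an inequality of sums with the clean constant $2^d$. I would address this by running the argument at the level of the generating-function identities of Theorem \ref{thm:ExponentialFormula}: the ratio $\big(\sum_{\alpha}\mon_g(\alpha)\big)/\mon_g^d$ is a coefficient ratio in the single- versus simple-Hurwitz series, and the bound $2^d$ should fall out of comparing the substitution $p_\alpha(A)\mapsto$ (number of perms of type $\alpha$) with the principal specialization, the worst case being $\alpha = (1^d)$, where the two sides agree, with every other $\alpha$ contributing a factor $\le 2^d$ by a direct estimate on $|C_\alpha|$-weighted sums of monotone double Hurwitz numbers with one side fixed.
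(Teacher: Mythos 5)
First, a point of reference: the paper does not actually prove Theorem \ref{thm:ConnectedSorting} --- both it and its disconnected counterpart, Theorem \ref{thm:DisconnectedSorting}, are quoted from \cite{GGN5} --- so your proposal has to be judged on its own terms rather than against an in-paper argument. Judged that way, the central construction has a genuine gap: it is not well-defined. You propose to convert a genus-$g$ monotone walk from $C_\alpha$ to $C_\beta$ into one from $C_\alpha$ to $C_{1^d}$ by appending a minimal monotone factorization of the endpoint $\pi^{-1}$, $\pi \in C_\beta$, into $d-\ell(\beta)$ transpositions. But monotonicity is a constraint on the \emph{entire} sequence: every appended transposition must have larger element at least $t_r$, the larger element of the final step of the given walk, and a minimal-length monotone factorization of $\pi^{-1}$ obeying this floor need not exist. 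In the extreme case $t_r=d$ the tail would have to consist entirely of star transpositions $(s\ d)$, and a permutation whose nontrivial cycles avoid $d$ needs strictly more than $d-\ell(\beta)$ such factors (e.g.\ $(1\ 2)=(1\ 3)(2\ 3)(1\ 3)$ in $\group{S}(3)$ uses three, not one). Relaxing minimality to restore existence changes the length, so the Riemann--Hurwitz bookkeeping $r=2g-2+\ell(\alpha)+\ell(\beta)$ breaks and the image walk is no longer counted at genus $g$. The same objection applies to prepending a factorization of the starting permutation in your second inequality. You anticipated trouble with the multiplicity of the map, but the prior problem is that the map does not exist as described.

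There are also quantitative errors in the accounting. The bound you invoke, $\sum_{\beta\vdash d}\prod_i \Cat_{\beta_i-1} < 2^d$, is false for large $d$: the single term $\beta=(d)$ contributes $\Cat_{d-1}\sim 4^{d-1}/(\sqrt{\pi}\,(d-1)^{3/2})$, and already $\Cat_9=4862>2^{10}=1024$, so this route cannot deliver the constants $2^d$ and $4^d$ appearing in the statement. The multinomial interleaving factor $\binom{d-\ell}{\beta_1-1,\dots,\beta_\ell-1}$ is also spurious: distinct cycles of $\pi$ have disjoint supports, so the merge of their monotone factorizations into one weakly increasing sequence of larger elements is unique. (A smaller slip: $\ell(\alpha)$ minus the number of parts equal to $1$ is not $d-\ell(\alpha)$; the minimal factorization length is $\sum_i(\alpha_i-1)=d-\ell(\alpha)$.) Your closing suggestion to instead compare coefficient ratios via Theorem \ref{thm:ExponentialFormula} is too vague to assess and, as noted above, the $\beta$-sum at fixed genus cannot be evaluated by a single character-orthogonality computation because the walk length $r$ varies with $\ell(\beta)$. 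If you want a self-contained proof you will need a mechanism that genuinely respects the monotonicity floor --- or you should simply cite \cite{GGN5}, noting that for the application in this paper any bound of the form $c^d\mon_g^d$ with an absolute constant $c$ would suffice, since the precise values $2^d$ and $4^d$ only affect the admissible size of $\varepsilon$.
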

		
		\subsection{Large genus asymptotics}
		Theorem \ref{thm:SimpleSummability} corresponds to an asymptotic
		result for the connected monotone simple Hurwitz numbers in the 
		large degree limit: for each fixed $g \in \N_0$, we have
		
			\begin{equation}
				\frac{1}{d!} \mon_g^d \sim t_g(d) \left(\frac{27}{2} \right)^d, \quad d \to \infty,
			\end{equation}
			
		\noindent
		with $t_g(d)$ a genus-dependent factor of sub-exponential growth in $d$.
		Below, we will also need information on the large genus asymptotics 
		of disconnected monotone simple Hurwitz numbers in fixed degree.
		These asymptotics are as follows.
		
			\begin{thm}
			\label{thm:LargeGenusAsymptotics}
				For any $d \in \N$, we have
				
					\begin{equation*}
						\mon_g^{\bullet d} \sim \frac{2(d-1)^{3d-3}}{(d-1)!d!} (d-1)^{2g}, \quad g \to \infty.					
					\end{equation*}
			\end{thm}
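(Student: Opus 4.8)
The plan is to evaluate $\mon_g^{\bullet d}$ exactly as a Plancherel average and then read off its $g\to\infty$ behaviour from a rational generating function whose dominant singularity is contributed by the two partitions of $d$ of extreme content.

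First, by \eqref{eqn:DisconnectedMonotoneHurwitzNumbers} (recalling $\ell(1^d)=d$) together with the specialization $\alpha=\beta=1^d$ of the identity $\langle\omega_\alpha f_r\omega_\beta\rangle=\vec{W}^r(\alpha,\beta)$ preceding Theorem~\ref{thm:GraphicalExpansion}, in which $\omega_{1^d}\equiv 1$, we have for $r:=2g-2+2d$
\begin{equation*}
	\mon_g^{\bullet d}=\vec{W}^r(1^d,1^d)=\langle f_r\rangle=\sum_{\lambda\vdash d}\frac{(\dim\mathsf{V}^\lambda)^2}{d!}\,f_r(\lambda),
\end{equation*}
where $f_r(\lambda)$ denotes the complete homogeneous symmetric polynomial of degree $r$ evaluated on the content alphabet $\{c(\Box):\Box\in\lambda\}$, a $d$-letter alphabet lying in $[-(d-1),d-1]$. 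Note that $r$ is even for every $g$; this will be used crucially. For a fixed alphabet $x=(x_1,\dots,x_d)$ one has $\sum_{r\ge 0}f_r(x)\zeta^r=\prod_i(1-x_i\zeta)^{-1}$, so the growth of $f_r(x)$ is governed by the pole nearest the origin, and in all cases $|f_r(x)|\le\binom{r+d-1}{d-1}\bigl(\max_i|x_i|\bigr)^r$.

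The crux is the elementary combinatorial fact that $\max_{\Box\in\lambda}|c(\Box)|=d-1$ holds only for $\lambda=(d)$ and $\lambda=(1^d)$, with content alphabets $\{0,1,\dots,d-1\}$ and its negative respectively, while $\max_{\Box\in\lambda}|c(\Box)|\le d-2$ for every other $\lambda\vdash d$. Since $\dim\mathsf{V}^{(d)}=\dim\mathsf{V}^{(1^d)}=1$, I would isolate these two terms. From $\sum_r f_r((d))\zeta^r=\prod_{j=1}^{d-1}(1-j\zeta)^{-1}$ and the partial fraction over the simple pole at $\zeta=1/(d-1)$ (whose coefficient $\tfrac{(d-1)^{d-2}}{(d-2)!}$ is positive, so no cancellation of the leading term occurs),
\begin{equation*}
	f_r((d))=\frac{(d-1)^{d-2}}{(d-2)!}\,(d-1)^r+O\bigl((d-2)^r\bigr),\qquad r\to\infty,
\end{equation*}
the error coming from the remaining simple poles; and $f_r((1^d))=(-1)^r f_r((d))=f_r((d))$ because $r$ is even. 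Every other $\lambda\vdash d$ contributes at most $\tfrac{(\dim\mathsf{V}^\lambda)^2}{d!}\binom{r+d-1}{d-1}(d-2)^r$, and since for fixed $d$ there are only finitely many such $\lambda$, their combined contribution is $O\bigl(r^{d-1}(d-2)^r\bigr)=o\bigl((d-1)^r\bigr)$ for $d\ge 3$; the cases $d=1,2$ are degenerate and verified directly (both sides vanish for $d=1$, $g\ge1$, and both equal $1$ for $d=2$).

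Putting the pieces together, $\mon_g^{\bullet d}\sim\tfrac{2}{d!}f_r((d))\sim\tfrac{2(d-1)^{d-2}}{(d-2)!\,d!}(d-1)^{2g-2+2d}$, and the routine identity $\tfrac{(d-1)^{d-2}}{(d-2)!}(d-1)^{2d-2}=\tfrac{(d-1)^{3d-3}}{(d-1)!}$ yields $\mon_g^{\bullet d}\sim\tfrac{2(d-1)^{3d-3}}{(d-1)!\,d!}(d-1)^{2g}$, as claimed. The main obstacles are really the two combinatorial inputs driving the argument: checking that $(d)$ and $(1^d)$ are the unique partitions attaining content $\pm(d-1)$, and using the parity of $r=2g-2+2d$ to see that the $(1^d)$-term reinforces rather than cancels the $(d)$-term. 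Once these are in place the singularity analysis and the tail estimate are routine, and the needed uniformity of the error over $\lambda$ is automatic because $d$ is held fixed.
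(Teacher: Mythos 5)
Your proof is correct and follows essentially the same route as the paper: both express $\mon_g^{\bullet d}=\vec{W}^{2g-2+2d}(1^d,1^d)$ as a Plancherel average of $f_r$, observe that the rational generating function $\prod_{\Box\in\lambda}(1-\hbar c(\Box))^{-1}$ has its outermost poles at $\hbar=\pm\frac{1}{d-1}$ only for the trivial and sign representations, and use the evenness of $r$ to see that these two contributions add rather than cancel. The only cosmetic difference is that you extract the leading coefficient $\frac{(d-1)^{d-2}}{(d-2)!}$ by an explicit partial-fraction computation, while the paper identifies $f_r(1,\dots,d-1)$ with a Stirling number of the second kind and cites the Moser--Wyman asymptotics; these are the same calculation.
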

			
			\begin{proof}
				Let $d \in \N$ be arbitrary but fixed.
				Since $\mon_g^{\bullet d} = \vec{W}^{2g-2+2d}(1^d,1^d)$, the $g \to \infty$ 
				asymptotic behavior of the monotone Hurwitz number $\mon_g^{\bullet d}$ is equivalent to the $r \to \infty$ asymptotic behavior
				of the monotone loop counts $\vec{W}^r(1^d,1^d)$. 
				
				From Section \ref{sec:Finite}, we know that the ordinary generating function for monotone loops $\vec{W}^r(1^d,1^d)$ based 
				at a given point of the symmetric group is a rational function of $\hbar$,
		
					\begin{equation}
						\sum_{r=0}^\infty \hbar^r \vec{W}^r(1^d,1^d) = \sum_{\lambda \vdash d} \frac{(\dim \mathsf{V}^\lambda)^2}{d!}
						\prod_{\Box \in \lambda} \frac{1}{1-\hbar c(\Box)}.
					\end{equation}
			
				\noindent
				It follows that, as a function of $r$, the loop count $\vec{W}^r(1^d,1^d)$ is a linear combination of the exponential
				functions
				
					\begin{equation}
						1^r, 2^r, \dots, (d-1)^r
					\end{equation}
					
				\noindent
				whose coefficients are polynomials in $r$. In other words,
				$\vec{W}^r(1^d,1^d)$ is a quasipolynomial function of $r$.
				The dominant term of this quasipolynomial comes from those terms in the Plancherel average
				which have one of the factors $(1-(d-1)\hbar)$ or $(1+(d-1)\hbar)$ in their denominator,
				and the only two terms which meet this condition correspond
				to the trivial representation,
				
					\begin{equation}
						\frac{1}{d!}\frac{1}{(1-\hbar)(1-2\hbar) \dots (1-(d-1)\hbar)} = \frac{1}{d!} \sum_{r=0}^\infty \hbar^r 
						f_r(1,2,\dots,(d-1)),
					\end{equation}
					
				\noindent
				and the sign representation,
				
					\begin{equation}
						\frac{1}{d!}\frac{1}{(1+\hbar)(1+2\hbar) \dots (1+(d-1)\hbar)}
						= \frac{1}{d!} \sum_{r=0}^\infty \hbar^r 
						f_r(-1,-2,\dots,-(d-1)),
					\end{equation}
					
				\noindent
				so that by homogeneity we have
				
					\begin{equation}
						\vec{W}^r(1^d,1^d) \sim \frac{1^r + (-1)^r}{d!} f_r(1,\dots,d-1), \quad r \to \infty.
					\end{equation}
					
				Specializing the complete symmetric functions at 
				consecutive positive integers yields Stirling numbers of the second kind \cite{Macdonald},
				and in particular we have
				
					\begin{equation}
						f_r(1,2,\dots,d-1) = \stirling{d-1+r}{d-1}.
					\end{equation}
					
				\noindent
				The asymptotic behavior of Stirling numbers with large upper index and 
				fixed lower index is also known \cite{MW}, and in particular we have
				
					\begin{equation}
						\stirling{d-1+r}{d-1} \sim \frac{(d-1)^{d-1}}{(d-1)!} (d-1)^r, \quad r \to \infty.
					\end{equation}		
					
				\noindent 
				Taking $r=2g-2+2d$, we obtain
				
					\begin{equation}
						\vec{W}^r(1^d,1^d) \sim \frac{(1^r + (-1)^r )(d-1)^{d-1}}{(d-1)!d!} (d-1)^r, \quad r \to \infty
					\end{equation}
					
				\noindent
				and 
				
					\begin{equation}
						\mon_g^{\bullet d} \sim \frac{2(d-1)^{3d-3}}{(d-1)!d!} (d-1)^{2g}, \quad g \to \infty,
					\end{equation}
					
				\noindent
				as claimed.
			\end{proof}
						
		It is interesting to compare the asymptotics of monotone simple Hurwitz numbers catalogued here 
		with those of classical simple Hurwitz numbers, which are clearly catalogued in \cite{DYZ}.
		We end this section by remarking that the disconnnected version of the Theorem \ref{thm:ConnectedSorting}
		also holds; the proof is identical to the connected case.
		
			\begin{thm}
			\label{thm:DisconnectedSorting}
				For any $d \in \N$ and $g \in \Z$, we have
				
					\begin{equation*}
						\sum_{\alpha, \beta \vdash d} \mon_g^\bullet(\alpha,\beta) 
						< 2^d \sum_{\alpha \vdash d} \mon_g^\bullet(\alpha) 
						< 4^d \mon_g^{\bullet d}.
					\end{equation*}
			\end{thm}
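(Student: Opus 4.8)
Both inequalities in Theorem \ref{thm:DisconnectedSorting} will be deduced from the single estimate
\begin{equation*}
\sum_{\beta \vdash d} \mon_g^\bullet(\alpha,\beta) < 2^d\, \mon_g^\bullet(\alpha,1^d), \qquad \alpha \vdash d,\ g \in \Z. \tag{$\star$}
\end{equation*}
(For $d=1$ every term here vanishes unless $g=0$, and the claimed inequalities are trivial or vacuous, so assume $d \geq 2$.) Granting $(\star)$, summing over $\alpha \vdash d$ immediately gives $\sum_{\alpha,\beta}\mon_g^\bullet(\alpha,\beta) < 2^d \sum_\alpha \mon_g^\bullet(\alpha)$, which is the first inequality. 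For the second, recall the symmetry $\mon_g^\bullet(\alpha,\beta) = \mon_g^\bullet(\beta,\alpha)$: this is immediate from the identity $\vec{W}^r(\alpha,\beta) = \sum_{\lambda \vdash d}\frac{(\dim \mathsf{V}^\lambda)^2}{d!}\,\omega_\alpha(\lambda)\,f_r(\lambda)\,\omega_\beta(\lambda)$ underlying Theorem \ref{thm:GraphicalExpansion}, since the index $r = 2g-2+\ell(\alpha)+\ell(\beta)$ is itself symmetric. Hence $\sum_\alpha \mon_g^\bullet(\alpha) = \sum_\alpha \mon_g^\bullet(\alpha,1^d) = \sum_\beta \mon_g^\bullet(1^d,\beta)$, and $(\star)$ applied with first argument $1^d$ yields $\sum_\alpha \mon_g^\bullet(\alpha) < 2^d\, \mon_g^\bullet(1^d,1^d) = 2^d\, \mon_g^{\bullet d} < 4^d\, \mon_g^{\bullet d}$, as required.

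It remains to prove $(\star)$, which is where the argument of \cite{GGN5} for the connected case (Theorem \ref{thm:ConnectedSorting}) is copied verbatim. Work in the monotone walk model of Section \ref{sec:Finite}: the left side of $(\star)$ counts monotone walks $W$ on the Cayley graph of $\group{S}(d)$ that begin at a permutation of cycle type $\alpha$, end at an arbitrary permutation $\rho$, and have length $2g-2+\ell(\alpha)+c(\rho)$, where $c(\rho)$ denotes the number of cycles of $\rho$; the right side is $2^d$ times the number of such walks with endpoint $\iota$ and length $2g-2+\ell(\alpha)+d$. The plan is to define an injection from the former set into $\{1,\dots,2^d\}$ times the latter set by \emph{monotonically completing} each walk: append to $W$ a minimal transposition factorization of $\rho^{-1}$, i.e. $d-c(\rho)$ further transpositions, each splitting a cycle of the running permutation, whose labels continue the weakly increasing label sequence of $W$; this produces a monotone walk $W'$ from a type-$\alpha$ permutation to $\iota$ of the correct length, and the auxiliary index in $\{1,\dots,2^d\}$ records $c(\rho)$ together with whatever bookkeeping is needed to invert the construction. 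Inversion is clean where the completion exists: along any monotone walk $W'$ from a type-$\alpha$ permutation to $\iota$ of length $2g-2+\ell(\alpha)+d$, the quantity $j - c(w'_j)$ is weakly increasing in the step index $j$ (each step changes it by $0$ or $2$), starts at $-\ell(\alpha)$, and has terminal value $2g-2+\ell(\alpha)$; cutting $W'$ at the first step where this value is attained splits off a terminal block all of whose steps are cycle-splitting, recovering the pair $(W,\text{completion})$.

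The main obstacle is that a monotone completion of $W$ need not exist: the appended transpositions must carry labels at least as large as the last label $b$ of $W$, whereas any nontrivial cycle $K$ of $\rho$ all of whose points are smaller than $b$ obstructs the completion --- splitting $K$ (and its descendants) down to fixed points requires transpositions supported inside $K$, all of whose labels are at most $\max K < b$. Controlling these obstructed configurations is the technical heart of the proof: one shows that the offending ``low'' cycles of $\rho$ can be stripped off and re-encoded directly into the auxiliary $\{1,\dots,2^d\}$-valued data --- there is at most a bounded amount of such data per point of $\{1,\dots,d\}$ --- while the remainder of $W$ admits an honest monotone completion. Since neither the walk model nor the completion and stripping procedures make any use of transitivity of the monodromy group, the argument of \cite{GGN5} for $\mon_g(\alpha,\beta)$ transfers without change to $\mon_g^\bullet(\alpha,\beta)$, yielding $(\star)$ and hence the theorem.
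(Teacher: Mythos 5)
Your proposal takes essentially the same route as the paper, which disposes of this theorem in a single line by asserting that the proof of the connected case in \cite{GGN5} carries over verbatim to the disconnected setting; your reduction of both inequalities to the single estimate $(\star)$ via the symmetry $\mon_g^\bullet(\alpha,\beta)=\mon_g^\bullet(\beta,\alpha)$, together with the observation that the completion/stripping argument nowhere invokes transitivity of the monodromy group (the disconnected count being the unconstrained walk count, the injection is if anything easier), correctly identifies why that assertion is legitimate. Since both you and the paper ultimately defer the $2^d$ bookkeeping for the obstructed low cycles to \cite{GGN5}, the two arguments coincide in substance.
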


\section{Large $N$}
\label{sec:Large}
In this Section, we combine the theorems of Sections \ref{sec:Finite} and  
\ref{sec:Infinite} with complex analytic methods to prove a 
generalization of Theorem \ref{thm:Main}, which in particular confirms Conjecture \ref{conj:Main}.
More precisely, we are concerned with the $N \to \infty$ asymptotics of the triarchy of entire functions
in $mN+1$ complex variables defined by

	\begin{equation}
		I_N^{(m)} = 1 + \sum_{d=1}^\infty \frac{z^d}{d!} I_N^{(m)d}, \quad m \in \{0,1,2\},\ N \in \N,
	\end{equation}
	
\noindent
whose coupling coefficients may be described in character form,

	\begin{equation}
	\label{eqn:TrinityCharacterForm}
		\begin{split}
			I_N^{(0)d} &= N^{2d} \sum_{\substack{\lambda \vdash d \\ \ell(\lambda) \leq N}} 
				\prod_{\Box \in \lambda} \frac{1}{h(\Box)^2(1+\frac{c(\Box)}{N})}, \\
			I_N^{(1)d} &=  N^d \sum_{\substack{\lambda \vdash d \\ \ell(\lambda) \leq N}} s_\lambda(a_1,\dots,a_N)
				\prod_{\Box \in \lambda} \frac{1}{h(\Box)(1+\frac{c(\Box)}{N})}, \\
			I_N^{(2)d} &= \sum_{\substack{\lambda \vdash d \\ \ell(\lambda) \leq N}} s_\lambda(a_1,\dots,a_N) s_\lambda(b_1,\dots,b_N)
				\prod_{\Box \in \lambda} \frac{1}{1+\frac{c(\Box)}{N}},
		\end{split}
	\end{equation}
	
\noindent
or equivalently in string form,

	\begin{equation}
	\label{eqn:TrinityStringForm}
		\begin{split}
			I_N^{(0)d} &= N^{2d} 
				\sum_{\substack{\lambda \vdash d \\ \ell(\lambda) \leq N}} \Omega_{\frac{1}{N}}^{-1}(\lambda) \frac{(\dim \mathsf{V}^\lambda)^2}{d!}, \\
			I_N^{(1)d} &= N^d \sum_{\alpha \vdash d} p_\alpha(a_1,\dots,a_N)
				\sum_{\substack{\lambda \vdash d \\ \ell(\lambda) \leq N}} 
				\omega_\alpha(\lambda)\Omega_{\frac{1}{N}}^{-1}(\lambda) \frac{(\dim \mathsf{V}^\lambda)^2}{d!}, \\
			I_N^{(2)d} &= \sum_{\alpha \vdash d} p_\alpha(a_1,\dots,a_N) p_\beta(b_1,\dots,b_N)
				\sum_{\substack{\lambda \vdash d \\ \ell(\lambda) \leq N}} 
				\omega_\alpha(\lambda)\Omega_{\frac{1}{N}}^{-1}(\lambda)\omega_\beta(\lambda) \frac{(\dim \mathsf{V}^\lambda)^2}{d!}.		
		\end{split}
	\end{equation}

\noindent
We continue the practice of omitting the superscript $m$
when making statements which hold uniformly in $m \in \{0,1,2\}$, when it is convenient
to do so.

Our goal is to obtain $N \to \infty$ asymptotics for 
$F_N^{(m)}=\log I_N^{(m)}$ on the closed origin-centered polydisc $\D_N(\varepsilon)$
of polyradius $(\varepsilon,1,\dots,1)$ in $\C^{mN+1}$,
with $\varepsilon > 0$ a sufficiently small absolute constant. At present, we do not 
know that this is a well-defined objective, as there may be no $\varepsilon >0$ such 
that $F_N^{(m)}$ is defined and analytic on $\D_N(\varepsilon)$, i.e. we do not
know that the stable nonvanishing hypothesis holds. However, Section \ref{sec:Infinite}
gives us explicit analytic targets, and also places
a limitation on how large $\varepsilon$ can be. Let $\gamma \in (0,\frac{1}{54})$ 
be fixed for the rest of the paper. Then, by the results of Section \ref{sec:Infinite}, for each 
$N \in \N$, the power series 

	\begin{equation}
		F_{Ng}^{(m)} = \sum_{d=1}^\infty \frac{z^d}{d!} F_{Ng}^{(m)d}, \quad g \in \N_0,\ m \in \{0,1,2\}
	\end{equation}

\noindent
whose coefficients are the polynomials in $mN$ variables defined by 
				
			\begin{equation}
			\label{eqn:GenusSpecificPolynomials}
			\begin{split}
				F_{Ng}^{(0)d} &=\mon_g^d \\
				F_{Ng}^{(1)d} &=  \sum_{\alpha \vdash d} \frac{p_\alpha(a_1,\dots,a_N)}{N^{\ell(\alpha)}} (-1)^{\ell(\alpha)+d}\mon_g(\alpha), \\
				F_{Ng}^{(2)d} &= \sum_{\alpha,\beta \vdash d} \frac{p_\alpha(a_1,\dots,a_N)}{N^{\ell(\alpha)}}\frac{p_\beta(b_1,\dots,b_N)}{N^{\ell(\beta)}}
					(-1)^{\ell(\alpha)+\ell(\beta)}\mon_g(\alpha,\beta)
			\end{split}
			\end{equation}
			
\noindent
are $\|\cdot\|_\gamma$-absolutely convergent, i.e. are members of $\O_N(\gamma)$, 
and 

	\begin{equation}
		\sup_{N \in \N} \|F_{Ng}\|_\gamma < \infty
	\end{equation}
	
\noindent
by construction.

In this Section, we prove the following theorem.

	\begin{thm}
	\label{thm:MainGeneralized}
		There exists $\varepsilon \in (0,\gamma)$ such that $I_N^{(m)}$ is non-vanishing
		on $\D_N(\varepsilon)$ for all $N \in \N$, and such that $F_N^{(m)}=\log I_N^{(m)}$ satisfies
		
			\begin{equation*}
				\lim_{N \to \infty} N^{2k-2} \left\|
				F_N^{(m)} - \sum_{g=0}^k N^{2-2g} F_{Ng}^{(m)} \right\|_\varepsilon = 0
			\end{equation*}
			
		\noindent
		for each fixed $k \in \N_0$.
	\end{thm}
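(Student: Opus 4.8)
The plan is to upgrade the \emph{formal} topological cancellation of Section~\ref{sec:Infinite} to a quantitative statement at finite $N$ and then read off the theorem. Put $G_{Nk}=\sum_{g=0}^kN^{2-2g}F_{Ng}\in\O_N(\gamma)$, $E_{Nk}=e^{G_{Nk}}$, and define the $k$-th topological normalization $\Phi_{Nk}=e^{-G_{Nk}}I_N$, an entire function of $z$. The heart of the argument is to fix an absolute $\varepsilon\in(0,\gamma)$ for which
$\|\Phi_{Nk}-1\|_\varepsilon=o(N^{2-2k})$ holds for every $k\ge 1$. Granting this, everything else is bookkeeping: for $N$ large $\Phi_{N1}$ is bounded away from $0$ on $\D_N(\varepsilon)$, so $I_N=E_{N1}\Phi_{N1}$ is nonvanishing there, and since each $I_N$ (having $I_N\equiv1$ on $\D_N(0)$) is nonvanishing on some $\D_N(\varepsilon_N)$, shrinking $\varepsilon$ to $\min\{\varepsilon,\varepsilon_1,\dots,\varepsilon_{N_0}\}$ gives nonvanishing for all $N$. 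Then $F_N=\log I_N=G_{Nk}+\log\Phi_{Nk}$ with $\|\log\Phi_{Nk}\|_\varepsilon=O(\|\Phi_{Nk}-1\|_\varepsilon)=o(N^{2-2k})$ for $k\ge1$, which is the asserted expansion; the case $k=0$ follows from the case $k=1$, since then $F_N-N^2F_{N0}=F_{N1}+o(1)$ is $O(1)=o(N^2)$ in $\|\cdot\|_\varepsilon$.

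To bound $\Phi_{Nk}-1=e^{-G_{Nk}}(I_N-E_{Nk})$ I split the coupling expansion at the critical degree $M=\lfloor tN^2\rfloor$, with $t$ fixed once and for all and $\varepsilon$ then taken small relative to $t$. Two inputs make the ``outer'' ranges harmless. First, the summability results of Section~\ref{sec:Infinite} (Theorems~\ref{thm:SimpleSummability} and \ref{thm:ConnectedSorting}), together with $\varepsilon<\gamma<\tfrac1{54}$ and the fact that the minimal degree of a connected genus-$g$ cover tends to $\infty$ with $g$, give $C:=\sup_N\|F_{N0}\|_\varepsilon+\sum_{g\ge1}\sup_N\|F_{Ng}\|_\varepsilon<\infty$, hence $\|G_{Nk}\|_\varepsilon\le CN^2$ and $\|e^{\pm G_{Nk}}\|_\varepsilon\le e^{CN^2}$ \emph{uniformly in $k$}. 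Second, Theorem~\ref{thm:StrongCouplingApproximation} (and its easy variant for $m=0$, where one only has $\|I_N^{(0)d}\|\le N^{2d}e^{o(N^2)}$ on the critical range) yields $\|e^{-G_{Nk}}(I_N-P_N)\|_\varepsilon\le u(\varepsilon)e^{(C-v(\varepsilon))N^2}$, which is superpolynomially small once $\varepsilon$ is chosen with $v(\varepsilon)>C$, since $v(\varepsilon)\to\infty$ as $\varepsilon\to0^+$; a routine Cauchy-estimate computation on the disc $\{|z|\le\gamma\}$ likewise bounds the tail $\sum_{d>M}\frac{z^d}{d!}(e^{-G_{Nk}}P_N)^d$ in $\|\cdot\|_\varepsilon$ by $O\big((\varepsilon/\gamma)^{tN^2}e^{(C+\gamma)N^2}\big)$, again superpolynomially small for $\varepsilon$ small relative to $t$. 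It therefore remains to control $\sum_{d=1}^{M}\frac{\varepsilon^d}{d!}\|\Phi_{Nk}^d\|$, where $\Phi_{Nk}^d=\sum_{c=0}^d\binom dc(E_{Nk}^{-1})^{d-c}I_N^c$ is the $d$-th coupling coefficient of $E_{Nk}^{-1}I_N$.

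In the \emph{stable range} $1\le d\le N$ every $\lambda\vdash d$ has $\ell(\lambda)\le N$, so the finite-$N$ coupling coefficients coincide with the $\hbar\mapsto1/N$ specializations of the formal ones, and the formal topological factorization $I=E_{\overline k}E_{\underline{k+1}}$ of Section~\ref{sec:Infinite} (Theorems~\ref{thm:StableTopologicalExpansion}, \ref{thm:StableTopologicalConvolution} and the cancellation identity \eqref{eqn:StableTopologicalCancellation}) specializes to $\Phi_{Nk}^d=E_{\underline{k+1}}^d(1/N)$. Since $E_{\underline{k+1}}$ generates disconnected covers all of whose components have genus $\ge k+1$, its coupling coefficients are $O(\hbar^{2k})$ with leading term a genus-$(k+1)$ monotone Hurwitz number; summing $\sum_{d=1}^N\frac{\varepsilon^d}{d!}\,|E_{\underline{k+1}}^d(1/N)|$ via Theorems~\ref{thm:SimpleSummability}, \ref{thm:ConnectedSorting}, \ref{thm:DisconnectedSorting} and $4\varepsilon<4\gamma<2/27$ shows this part of $\Phi_{Nk}-1$ is $O(N^{-2k})=o(N^{2-2k})$.

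The main obstacle is the \emph{unstable range} $N<d\le M$. Here the restriction $\ell(\lambda)\le N$ is active, $I_N^d$ is no longer the $1/N$-value of the formal coupling coefficient, its topological expansion \emph{diverges}, the factorization identity breaks, and the crude bound $\|\Phi_{Nk}^d\|\lesssim N^{2d}$ is hopeless because $\sum_{d>N}\frac{\varepsilon^d}{d!}N^{2d}$ is astronomically larger than $N^{2-2k}$. The plan is to exploit concentration of Plancherel measure: for $d\le tN^2$ with $t$ small, a Plancherel-typical $\lambda\vdash d$ has $\ell(\lambda)=\Theta(\sqrt t\,N)\ll N$, so the cut $\ell(\lambda)\le N$ discards only diagrams of exponentially small Plancherel mass (the longest-increasing-subsequence large-deviation estimates underlying the Rains/Gessel/Johansson/Baik--Deift--Johansson analysis recalled after Theorem~\ref{thm:StrongCouplingApproximation}); thus $I_N^d$ agrees, up to superpolynomially small error, with the ``regularized'' full-diagram coefficient obtained by truncating each $\Omega_{1/N}^{-1}(\lambda)$-expansion to order $k$, the remainder of which is controlled by the large-genus asymptotics of Theorem~\ref{thm:LargeGenusAsymptotics} ($\mon_g^{\bullet d}\sim c_d(d-1)^{2g}$) and is uniformly $o(N^{2-2k})$ after the convolution with $E_{Nk}^{-1}$ and the sum $\sum_{N<d\le M}\frac{\varepsilon^d}{d!}$ against the $\|\cdot\|_\varepsilon$-summability bounds of Section~\ref{sec:Infinite}. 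Making this last estimate rigorous --- tracking the divergence of the genus expansion in the unstable window, extracting the order-$k$ cancellation from $E_{Nk}^{-1}I_N$ there, and choosing $t$ and $\varepsilon$ so that the truncation error, the stable tail, and the unstable contribution are \emph{simultaneously} $o(N^{2-2k})$ for every $k$ --- is where the real work lies, and is the step I expect to be the bottleneck.
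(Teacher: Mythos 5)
Your proposal reproduces the architecture of the paper's proof exactly: normalize by $E_{N\overline{k}}=e^{\sum_{g=0}^kN^{2-2g}F_{Ng}}$, reduce the theorem to the concentration estimate $\|\Phi_{Nk}-1\|_\varepsilon\lesssim N^{2-2k}$, truncate the coupling expansion at the critical degree $\lfloor tN^2\rfloor$, and in the stable range $1\le d\le N$ use the exact factorization $\Phi_{Nk}^d=E_{N\underline{k+1}}^d$ together with Theorems \ref{thm:DisconnectedSorting} and \ref{thm:LargeGenusAsymptotics}. (Your replacement of the paper's Borel--Carath\'eodory step by a direct logarithm, using the level-$(k+1)$ bound to get the level-$k$ statement, is legitimate.) But the step you defer --- the unstable critical range $N<d\le tN^2$ --- is not peripheral bookkeeping; it is the mathematical content of the theorem, and your sketch stops exactly where the paper's quantitative input begins. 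The paper's ``Plancherel mechanism'' supplies it: one completes the truncated sums $\sum_{\lambda\vdash d,\,\ell(\lambda)\le N}f_r(\lambda)\frac{(\dim\mathsf{V}^\lambda)^2}{d!}$ to genuine Plancherel expectations $\langle f_r\rangle$ (hence to monotone Hurwitz numbers) only up to the Riemann--Hurwitz order $r=2k-2+2d$, and the completion cost is bounded explicitly by Stirling numbers of the second kind, $\stirling{N+r}{N}\le N^r{N+r\choose N}$, summed by the hockey stick identity and then crushed by the coupling prefactor $\xi^d/d!$ together with the Plancherel weight of the discarded diagrams. Without this estimate (carried out for all $N<d\le tN^2$, all $m$, and nontrivial external fields), the proof is incomplete, and your own text concedes as much.

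There is also a concretely false auxiliary claim: $C:=\sup_N\|F_{N0}\|_\varepsilon+\sum_{g\ge1}\sup_N\|F_{Ng}\|_\varepsilon<\infty$. The minimal degree of a connected genus-$g$ cover does \emph{not} tend to infinity with $g$: for every $g$ one has $\mon_g^2=\vec{W}^{2g+2}(1^2,1^2)=1$ (the unique monotone walk on $\group{S}(2)$ repeating the single transposition), so $\|F_g^{(0)}\|_\varepsilon\ge\varepsilon^2/2$ for all $g$ and the sum over $g$ diverges; worse, by Theorem \ref{thm:LargeGenusAsymptotics} the quantities $\mon_g^{\bullet d}$ grow like $(d-1)^{2g}$ for $d\ge3$. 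Consequently there is no absolute constant with $\|e^{\pm G_{Nk}}\|_\varepsilon\le e^{CN^2}$ uniformly in $k$, and your choice of $\varepsilon$ via $v(\varepsilon)>C$ collapses: replacing $C$ by the genuinely $k$-dependent constant $C_k=\sum_{g\le k}\sup_N\|F_{Ng}\|_\varepsilon$ makes your tail bound $e^{(C_k-v(\varepsilon))N^2}$ exponentially \emph{large} unless $\varepsilon$ shrinks with $k$, which contradicts the single-$\varepsilon$ requirement of the statement. The paper sidesteps this by never bounding $\|e^{-G_{Nk}}\|_\varepsilon$ globally: it bounds the individual critical coefficients $\frac{\xi^d}{d!}\|\Phi_{Nk}^d\|\le R_kN^{-2k}$ with $\xi$ absolute and only the multiplicative constant $R_k$ depending on $k$, so that the $k$-dependence never enters an exponent in $N^2$. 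As written, both the uniformity claim and the unstable-range estimate are genuine gaps.
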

	
		\subsection{Large $N$ expansion}
		For each $N \in \N$, let $\rho_N>0$ be such that the entire function $I_N$ is non-vanishing 
		on the closed origin-centered polydisc $\D_N(\rho_N)$. 
		Note that we are not assuming the positive sequence $(\rho_N)_{N=1}^\infty$
		can be selected such that it has positive infimum, i.e. we are not assuming that the stable
		non-vanishing hypothesis holds.
		
		For each $N \in \N$, the free energy $F_N=\log I_N$ is analytic on an open neighborhood of $\D_N(\rho_N)$, 
		and its Maclaurin series 
						
			\begin{equation}
				F_N^d = \sum_{d=1}^\infty \frac{z^d}{d!} F_N^d,
			\end{equation}
			
		\noindent
		converges uniformly absolutely on $\D_N(\rho_N)$.
				
			\begin{thm}
			\label{thm:StrongCouplingConvergent}
				For any $1 \leq d \leq N$, we have 
				
					\begin{equation*}
						F_N^d = \sum_{g=0}^\infty N^{2-2g} F_{Ng}^d 
					\end{equation*}
					
				\noindent
				where the series converges $\|\cdot\|$-absolutely and
				$F_{Ng}^d$ is the polynomial \eqref{eqn:GenusSpecificPolynomials}.			
			\end{thm}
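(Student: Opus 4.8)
The plan is to obtain the finite‑$N$ identity by specializing the purely formal $N=\infty$ results of Section~\ref{sec:Infinite} at $\hbar = 1/N$. The guiding principle is that, in the stable range $1\le d\le N$, the coupling coefficient $I_N^d$ carries no more information than its $N=\infty$ counterpart $I^d$: the only $N$-dependence in the string form \eqref{eqn:TrinityStringForm} enters through the factors $\Omega_{1/N}^{-1}(\lambda)$ and through the constraint $\ell(\lambda)\le N$, and that constraint is vacuous for $\lambda\vdash d$ when $d\le N$.

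First I would record the algebraic reduction. Since $I_N(0)=1$, the free energy $F_N=\log I_N$ is holomorphic near the origin, so its Maclaurin coefficients are given by a universal ``logarithm'' identity: there are rational polynomials $P_d$, independent of $N$ and of $m$, with $F_N^d = P_d(I_N^1,\dots,I_N^d)$, and the same polynomials give $F^d = P_d(I^1,\dots,I^d)$ at the level of formal series in $z$ with coefficients Laurent series in $\hbar$, where $F^d$ is the coefficient of $z^d/d!$ in $\log I$. Next, comparing Theorem~\ref{thm:TopologicalExpansion} with the stable topological expansion \eqref{eqn:StableCouplingTopological} term by term, one sees that for each $1\le d'\le N$ the polynomial $I_N^{d'}$ is exactly what one obtains from $I^{d'}$ by the substitution $\hbar\mapsto 1/N$ together with truncation of the external-field alphabets to $N$ letters; the point, as noted above, is the vacuity of $\ell(\lambda)\le N$. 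One caveat is that this substitution must be shown to commute with the infinite sums over genus and over $\alpha,\beta$, i.e.\ one needs $\sum_{g}N^{2-2g}I_{Ng}^{d'}$ to be $\|\cdot\|$-absolutely convergent for $d'\le N$; this is already asserted in Theorem~\ref{thm:TopologicalExpansion}, and is also a special case of the estimate in the last paragraph below (using Theorem~\ref{thm:DisconnectedSorting} in place of Theorem~\ref{thm:ConnectedSorting}).

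Granting this, the theorem follows by assembling the pieces. For $1\le d\le N$ every argument $d'$ of $P_d$ satisfies $d'\le d\le N$, so each $I^{d'}$ converges $\|\cdot\|$-absolutely at $\hbar=1/N$ with value $I_N^{d'}$; since $P_d$ is a polynomial and the Laurent series absolutely convergent at $1/N$ form a ring on which evaluation is a homomorphism, $F^d=P_d(I^1,\dots,I^d)$ likewise converges at $\hbar=1/N$, with value $P_d(I_N^1,\dots,I_N^d)=F_N^d$. On the other hand Theorem~\ref{thm:StableTopologicalExpansion} gives $F^d=\sum_{g\ge 0}\hbar^{2g-2}F_g^d$, and the $\hbar=1/N$ specialization of $\hbar^{2g-2}F_g^d$ is precisely $N^{2-2g}F_{Ng}^d$ with $F_{Ng}^d$ the polynomial \eqref{eqn:GenusSpecificPolynomials}. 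Hence $F_N^d=\sum_{g\ge0}N^{2-2g}F_{Ng}^d$, provided the right-hand series converges $\|\cdot\|$-absolutely --- and once it does, the two ways of summing agree, again because absolutely convergent evaluation at a point is a ring homomorphism.

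The one genuinely quantitative step --- and the reason the hypothesis is $d\le N$ rather than all $d$ --- is the convergence estimate, which I expect to be the only real obstacle. On the closed unit polydisc one has $\|p_\alpha(a_1,\dots,a_N)\|=N^{\ell(\alpha)}$, so
\[
\|F_{Ng}^{(0)d}\|=\mon_g^d,\qquad
\|F_{Ng}^{(1)d}\|\le\sum_{\alpha\vdash d}\mon_g(\alpha),\qquad
\|F_{Ng}^{(2)d}\|\le\sum_{\alpha,\beta\vdash d}\mon_g(\alpha,\beta),
\]
and by Theorem~\ref{thm:ConnectedSorting} all three are bounded by $4^d\mon_g^d$. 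Since a connected genus-$g$ cover is in particular a disconnected one, $\mon_g^d\le\mon_g^{\bullet d}$, and Theorem~\ref{thm:LargeGenusAsymptotics} supplies a constant $C_d$ with $\mon_g^{\bullet d}\le C_d(d-1)^{2g}$ for all $g$. Therefore
\[
\sum_{g\ge0}N^{2-2g}\|F_{Ng}^d\|\le 4^dC_dN^2\sum_{g\ge0}\Big(\tfrac{d-1}{N}\Big)^{2g}<\infty
\]
because $d-1<N$ when $d\le N$; running the same bound on the absolute values of the individual monomials in $\hbar$ also legitimizes the rearrangements invoked above. Everything else is bookkeeping around the formal identities of Section~\ref{sec:Infinite}: the content is this single interchange of limits, which is exactly what breaks down when $d>N$.
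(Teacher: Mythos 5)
Your proposal is correct and follows essentially the same route as the paper: the paper's proof is a one-line appeal to Theorem \ref{thm:TopologicalExpansion} (the $\|\cdot\|$-absolutely convergent disconnected topological expansion of $I_N^d$ in the stable range) combined with Theorem \ref{thm:ExponentialFormula} (the exponential formula extracting connected monotone Hurwitz numbers), and your argument is a careful unpacking of exactly that combination. The convergence estimate you supply via Theorems \ref{thm:ConnectedSorting} and \ref{thm:LargeGenusAsymptotics} is the content hidden in the paper's word ``immediately,'' and it matches the bound the paper itself uses in the proof of Theorem \ref{thm:LargeNExpansion}.
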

			
			\begin{proof}
				This follows immediately from Theorem \ref{thm:TopologicalExpansion}
				together with Theorem \ref{thm:ExponentialFormula}.
			\end{proof}
			
		Theorem \ref{thm:StrongCouplingConvergent} yields the large $N$ expansion 
		(aka genus expansion or `t Hooft expansion)
		of each fixed strong coupling coefficient $F_N^d$, as defined
		by \eqref{eqn:LargeNExpansionPrecise}.

			\begin{thm}
			\label{thm:LargeNExpansion}
				For each fixed $d \in \N$ and $k \in \N_0$, we have 
				
					\begin{equation*}
						\lim_{N \to \infty} N^{2k-2} \left\| F_N^d - \sum_{g=0}^k N^{2-2g} F_{Ng}^d \right\| = 0
					\end{equation*}
			\end{thm}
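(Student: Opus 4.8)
The plan is to pass to the convergent regime supplied by Theorem~\ref{thm:StrongCouplingConvergent} and then bound the genus tail uniformly in $N$. Fix $d$. The case $d=1$ is degenerate: $\S(1)$ contains no transpositions, so $\vec{W}^r(1^1,1^1)=0$ for $r\ge1$, hence $\mon_g(1^1)=0$ for all $g\ge1$; the genus series terminates and the bracketed difference is identically $0$ once $N\ge1$, for every $k\ge0$. So assume $d\ge2$. For every $N\ge d$, Theorem~\ref{thm:StrongCouplingConvergent} gives the $\|\cdot\|$-absolutely convergent identity
\[
	F_N^d-\sum_{g=0}^k N^{2-2g}F_{Ng}^d=\sum_{g=k+1}^\infty N^{2-2g}F_{Ng}^d,
\]
and since the limit in the statement ignores the finitely many $N<d$, it suffices to show that $N^{2k-2}$ times the sup norm of the right-hand side tends to $0$ as $N\to\infty$.

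Next I would establish a bound on $\|F_{Ng}^d\|$ that is uniform in $N$ and geometric in $g$. On the unit polydisc $\D_N$ one has $|p_\alpha(a_1,\dots,a_N)|\le N^{\ell(\alpha)}$, so $\|p_\alpha(a_1,\dots,a_N)/N^{\ell(\alpha)}\|\le1$, and likewise for the $b$-alphabet. Feeding this into \eqref{eqn:GenusSpecificPolynomials} and invoking Theorem~\ref{thm:ConnectedSorting} (together with the trivial $\|F_{Ng}^{(0)d}\|=\mon_g^d$) yields $\|F_{Ng}^d\|\le4^d\mon_g^d$ in all three cases $m\in\{0,1,2\}$, uniformly in $N$. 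Finally, a connected genus $g$ cover is in particular a one-component disconnected cover of genus $g$, so $\mon_g^d\le\mon_g^{\bullet d}$, and Theorem~\ref{thm:LargeGenusAsymptotics} gives $\mon_g^{\bullet d}\sim(\text{a positive }d\text{-dependent constant})\cdot(d-1)^{2g}$; absorbing the finitely many initial (nonnegative) terms into the constant produces $K_d>0$ with $\mon_g^d\le K_d(d-1)^{2g}$ for all $g\ge0$. Setting $C_d=4^dK_d$, we get $\|F_{Ng}^d\|\le C_d(d-1)^{2g}$ for all $N\ge d$ and all $g$.

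It then remains to sum the tail: for $N\ge d$ (so that $(d-1)/N<1$),
\[
	N^{2k-2}\Big\|\sum_{g=k+1}^\infty N^{2-2g}F_{Ng}^d\Big\|\le C_d\,N^{2k}\sum_{g=k+1}^\infty\Big(\frac{d-1}{N}\Big)^{2g}=\frac{C_d\,(d-1)^{2k+2}}{N^2}\cdot\frac{1}{1-\big((d-1)/N\big)^2},
\]
and the right-hand side tends to $0$ as $N\to\infty$, which is the assertion. The only non-routine ingredient is the genus-uniform geometric bound $\mon_g^d\le K_d(d-1)^{2g}$; everything else is bookkeeping plus direct appeals to Theorems~\ref{thm:StrongCouplingConvergent}, \ref{thm:ConnectedSorting}, and \ref{thm:LargeGenusAsymptotics}. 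If one prefers to avoid Theorem~\ref{thm:LargeGenusAsymptotics}, note that $\|\cdot\|$-absolute convergence of the $m=0$ series in Theorem~\ref{thm:StrongCouplingConvergent} at $N=d$ already forces $\mon_g^d\le M_d\,d^{2g}$ for some $M_d$, which suffices to run the identical tail estimate for $N>d$; the sharp rate $(d-1)^{2g}$ plays no essential role here.
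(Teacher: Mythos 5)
Your proposal is correct and follows essentially the same route as the paper: the same reduction via Theorem~\ref{thm:StrongCouplingConvergent} to bounding the genus tail $\sum_{g>k}N^{2-2g}F_{Ng}^d$, and the same key estimate $\|F_{Ng}^d\|\leq 4^d\mon_g^d$ from Theorem~\ref{thm:ConnectedSorting}. The only difference is in the final step, where the paper disposes of the tail by a soft monotone-convergence argument on the decreasing upper bound $4^d\sum_{l\geq1}N^{-2l}\mon_{k+l}^d$, while you extract an explicit genus-uniform geometric bound $\mon_g^d\leq K_d(d-1)^{2g}$ (via Theorem~\ref{thm:LargeGenusAsymptotics}, or more cheaply from convergence at $N=d$) and sum the geometric series, which yields the same conclusion with an explicit $O_{d,k}(N^{-2})$ rate.
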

			
			\begin{proof}
				Let $d \in \N$ and $k \in \N_0$ be fixed. By Theorem \ref{thm:StrongCouplingConvergent},
				for any $N \geq d$ we have
				
					\begin{equation*}
						F_N^d - \sum_{g=0}^k N^{2-2g} F_{Ng}^d = \sum_{g=k+1}^\infty N^{2-2g} F_{Ng}^d.
					\end{equation*}
					
				\noindent
				Moreover, by Theorem \ref{thm:ConnectedSorting} we have 
				
					\begin{equation}
						\|F_{Ng}^d\| < 4^d \mon_g^d.
					\end{equation}
					
				\noindent
				We thus have
				
					\begin{equation*}
						N^{2k-2}\left\| F_N^d - \sum_{g=0}^k N^{2-2g} F_{Ng}^d \right\| < 4^d 
						\sum_{l=1}^\infty N^{-2l} \mon_{k+l}^d < \infty
					\end{equation*}
					
				\noindent
				for any $N \geq d$.
				Since the upper bound is positive and strictly decreasing in $N$, the result follows from
				the monotone convergence theorem. 
			\end{proof}
			
		\subsection{Cancellation scheme}
		The fact that monotone Hurwitz numbers are the combinatorial invariants underlying the 
		large $N$ expansion of the strong coupling coefficients $F_N^d$ of the HCIZ and BGW 
		integrals has implications in both directions. In particular, it implies the following cancellation
		feature of monotone Hurwitz numbers, which classical Hurwitz numbers do not share. 
		We shall use this cancellation scheme below.
		
		\begin{thm}
			\label{thm:Cancellation}
				For any $(d,g) \in \N \times \N_0$ except $(1,0)$, 
				we have
				
					\begin{equation*}
						\sum_{\beta \vdash d} (-1)^{\ell(\beta)} \mon_g(\alpha,\beta) = 0
					\end{equation*}
					
				\noindent
				for all $\alpha \vdash d$.
			\end{thm}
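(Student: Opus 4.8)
The plan is to read the cancellation off from the topological expansion of the HCIZ free energy after specializing its second external field to the identity matrix. Set $b_1=\dots=b_N=1$ (equivalently, take $B=I_N$ in \eqref{eqn:HCIZ}). Then $AUBU^{-1}=A$, so $I_N^{(2)}=e^{zN\Tr A}$ and hence $F_N^{(2)}=\log I_N^{(2)}=zN\Tr A$ after this substitution; this is a polynomial in $z$ of degree one, so the coupling coefficients satisfy $F_N^{(2)d}=0$ at $b_1=\dots=b_N=1$ for every $d\geq 2$. (The same collapse is visible from Theorem \ref{thm:CharacterForm}, since $s_\lambda(1,\dots,1)=\dim\mathsf{W}^\lambda$ reduces $I_N^{(2)d}$ to $N^d\sum_{\lambda\vdash d}s_\lambda(a)\dim\mathsf{V}^\lambda=N^d p_{(1^d)}(a)$ at $b_i=1$.)

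Next I would feed this vanishing into the genus expansion. Fix $d\geq 2$; for any integer $N\geq d$, Theorem \ref{thm:StrongCouplingConvergent} gives $F_N^{(2)d}=\sum_{g\geq 0}N^{2-2g}F_{Ng}^{(2)d}$ with $\|\cdot\|$-absolute convergence, where $F_{Ng}^{(2)d}$ is the polynomial \eqref{eqn:GenusSpecificPolynomials}. Since $p_\beta(1,\dots,1)=N^{\ell(\beta)}$, the substitution $b_1=\dots=b_N=1$ cancels the $N^{-\ell(\beta)}$ denominators there and leaves
\[
F_{Ng}^{(2)d}\big|_{b_1=\dots=b_N=1}=\sum_{\alpha\vdash d}\frac{p_\alpha(a_1,\dots,a_N)}{N^{\ell(\alpha)}}\,(-1)^{\ell(\alpha)}\,c_g(\alpha),\qquad c_g(\alpha):=\sum_{\beta\vdash d}(-1)^{\ell(\beta)}\mon_g(\alpha,\beta).
\]
Because the power sums $p_\alpha(a_1,\dots,a_N)$, $\alpha\vdash d$, are linearly independent for $d\leq N$, the vanishing from the first paragraph lets me extract the coefficient of each $p_\alpha$, giving $\sum_{g\geq 0}N^{-2g}c_g(\alpha)=0$ for every $\alpha\vdash d$ and every integer $N\geq d$. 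This series converges for all such $N$, because $|c_g(\alpha)|<4^{d}\mon_g^d\leq 4^{d}\mon_g^{\bullet d}=O\!\big((d-1)^{2g}\big)$ by Theorems \ref{thm:ConnectedSorting} and \ref{thm:LargeGenusAsymptotics}. A relation $\sum_{g\geq 0}c_g(\alpha)N^{-2g}=0$ holding for all integers $N\geq d$ forces every $c_g(\alpha)$ to vanish: multiplying by $N^{2}$ and letting $N\to\infty$ gives $c_0(\alpha)=0$ (the remaining terms stay bounded), after which $c_1(\alpha),c_2(\alpha),\dots$ are peeled off inductively. This establishes the identity for all $d\geq 2$, all $\alpha\vdash d$, and all $g\geq 0$.

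It remains to treat $d=1$, which also pins down the exception: $\mathfrak{S}(1)$ contains no transpositions, so the only monotone walk on its Cayley graph is the empty one, whence $\vec{W}^r(1^1,1^1)=\delta_{r0}$ and therefore $\mon_g\big((1),(1)\big)=\delta_{g0}$; thus $\sum_{\beta\vdash 1}(-1)^{\ell(\beta)}\mon_g\big((1),\beta\big)=-\delta_{g0}$, which vanishes precisely when $g\geq 1$, so $(d,g)=(1,0)$ is the unique pair for which the asserted identity fails. I expect the only mildly delicate point to be the middle step — legitimately extracting $p_\alpha$-coefficients from a convergent-series identity and then running the $N\to\infty$ elimination — but this is routine given Theorems \ref{thm:StrongCouplingConvergent}, \ref{thm:ConnectedSorting}, and \ref{thm:LargeGenusAsymptotics}. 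The real content of the argument is the elementary observation that $B=I_N$ collapses the HCIZ integral to a pure exponential, forcing all of its genus-expansion data in degrees $\geq 2$ to cancel.
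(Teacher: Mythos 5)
Your proposal is correct and follows essentially the same route as the paper: specialize $B$ to the identity so that the HCIZ integral collapses to $e^{zN\Tr A}$, feed the resulting vanishing of $F_N^{(2)d}$ for $d\geq 2$ into the convergent genus expansion of Theorem \ref{thm:StrongCouplingConvergent}, use linear independence of the power sums in the stable range $d\leq N$ to isolate $\sum_{\beta}(-1)^{\ell(\beta)}\sum_g N^{-2g}\mon_g(\alpha,\beta)=0$, and then peel off the coefficients genus by genus via $N\to\infty$; the $d=1$ case is handled by the same direct combinatorial observation. Your explicit justification of the tail bounds via Theorems \ref{thm:ConnectedSorting} and \ref{thm:LargeGenusAsymptotics} is a small welcome addition that the paper leaves implicit in the absolute convergence of the genus expansion.
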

			
			\begin{proof}
				The case where $d=1$ and $g>0$ is combinatorially obvious: the sum consists of 
				the single term $\mon_g(1,1)$, which vanishes as there are no walks of positive
				length in a graph with a single vertex.
				
				For $d>1$, the cancellation identity is
				obtained by turning off one of the two external fields in the HCIZ integral and appealing
				to Theorems \ref{thm:StrongCouplingConvergent} and \ref{thm:LargeNExpansion}.
				More precisely, we consider the specialization of $I_N$ in which $B$ is the identity matrix. 
				The character form of $I_N^{(1)d}$ then 
				degenerates to
				
					\begin{equation}
						I_N^d = N^d \sum_{\substack{\lambda \vdash d\\ \ell(\lambda) \leq N}}
						s_\lambda(a_1,\dots,a_N) \dim \mathsf{V}^\lambda = N^d p_{1^d}(a_1,\dots,a_N),
					\end{equation}
				
				\noindent	
				so that $I_N$ itself degenerates to the exponential function 
		
					\begin{equation}
						I_N = e^{zN p_1(a_1,\dots,a_N)},
					\end{equation}
			
				\noindent
				a fact which is also obvious from the integral representation \eqref{eqn:HCIZ}.
				The free energy $F_N=\log I_N$ in this degeneration is simply the
				polynomial $F_N=zNp_1(a_1,\dots,a_N)$, so the strong coupling 
				coefficients are 
				
					\begin{equation}
						F_N^1 = N^2 \frac{p_1(a_1,\dots,a_N)}{N}
					\end{equation}
					
				\noindent
				and $F_N^d = 0$ for $d > 1$. Thus by Theorem \ref{thm:StrongCouplingConvergent}, for any $1 < d \leq N$ we have
				that 
				
					\begin{equation}
						\sum_{\alpha \vdash d} \frac{p_\alpha(a_1,\dots,a_N)}{N^{\ell(\alpha)}} (-1)^{\ell(\alpha)}
						\sum_{\beta \vdash d} (-1)^{\ell(\beta)} \sum_{g=0}^\infty N^{2-2g} \mon_g(\alpha,\beta) = 0,
					\end{equation}
					
				\noindent
				which yields
				
					\begin{equation}
					\label{eqn:InfiniteCancellation}
						\sum_{\beta \vdash d} (-1)^{\ell(\beta)} \sum_{g=0}^\infty N^{-2g} \mon_g(\alpha,\beta) = 0
					\end{equation}
					
				\noindent
				for all $\alpha \vdash d$, by linear independence of the Newton polynomials
				
					\begin{equation}
						p_\alpha(x_1,\dots,x_N), \quad \alpha \vdash d
					\end{equation}
					
				\noindent
				in the stable range $1 \leq d \leq N$.
				
				We now proceed by induction in $g$. For $g=0$, take the $N \to \infty$ limit 
				in \eqref{eqn:InfiniteCancellation} to obtain
				
					\begin{equation}
						\sum_{\beta \vdash d} (-1)^{\ell(\beta)} \mon_0(\alpha,\beta)=0
					\end{equation}
					
				\noindent
				for all $\alpha \vdash d$. Assuming the result holds up to genus $k$, 
				\eqref{eqn:InfiniteCancellation} becomes 
				
					\begin{equation}
					\label{eqn:InfiniteCancellationInduction}
						\sum_{\beta \vdash d} (-1)^{\ell(\beta)} \sum_{g=k+1}^\infty N^{-2g} \mon_g(\alpha,\beta) = 0,
					\end{equation}
					
				\noindent
				for all $\alpha \vdash d$.
				Multiply \eqref{eqn:InfiniteCancellationInduction} by $N^{2k}$ and take the $N \to \infty$ limit to 
				obtain
				
					\begin{equation}
						\sum_{\beta \vdash d} (-1)^{\ell(\beta)} \mon_{k+1}(\alpha,\beta)=0
					\end{equation}
					
				\noindent
				for all $\alpha \vdash d$.
					
			\end{proof}
			
			\begin{remark} 
			When $\alpha=d$ is the Young diagram consisting of a single row of $d$
			cells, we have the product formula
			
				\begin{equation}
					\mon_0(d,\beta) = \prod_{i=1}^{\ell(\beta)} \Cat_{\beta_i-1},
				\end{equation}		
				
			\noindent
			where $\Cat_k=\frac{1}{k+1}{2k \choose k}$ is the Catalan number \cite{MN2}. Thus in the case $\alpha=d$
			and $g=0$, Theorem \ref{thm:Main} becomes
			
				\begin{equation}
					\sum_{\beta \vdash d} (-1)^{\ell(\beta)} \Cat_{\beta_i-1} = 0,
				\end{equation}
				
			\noindent
			which is just the vanishing identity for the summation of the Mobius function of a poset 
			in the case of the lattice of noncrossing partitions of $\{1,\dots,d\}$; see e.g. \cite{NS}. Probably, Theorem \ref{thm:Cancellation}
			is indicative of a relationship between monotone Hurwitz numbers and Mobius 
			functions of higher genus noncrossing partitions.
			\end{remark}

	\subsection{Reduction to uniform boundedness} 
	With Theorem \ref{thm:LargeNExpansion} in hand, the proof of
	Theorem \ref{thm:MainGeneralized} reduces
	to establishing stable nonvanishing together with uniform 
	boundedness. More precisely, we have the following reduction.
				
		\begin{thm}
		\label{thm:ReductionToUniformBoundedness}
		Suppose there exists $\delta \in (0,\gamma]$ such that $I_N$ is nonvanishing on $\D_N(\delta)$ for all $N \in \N$, 
		and set $F_N=\log I_N \in \O_N(\delta)$. If for each $k \in \N_0$ we have
					
				\begin{equation*}
				\label{eqn:MainBounded}
					\left\| F_N-\sum_{g=0}^k N^{2-2g} F_{Ng} \right\|_\delta \leq M_k N^{2-2k}
				\end{equation*}
				
		\noindent
		for all $N \geq N_k$ sufficiently large, where $M_k \geq 0$ depends only on $k$, 
		then Theorem \ref{thm:MainGeneralized} holds.
		\end{thm}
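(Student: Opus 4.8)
\emph{Proof proposal.} The plan is to bootstrap the hypothesised $O(N^{2-2k})$ estimate into the required $o(N^{2-2k})$ estimate by the elementary device of \emph{gaining one genus}: feed the hypothesis back into itself at level $k+1$, and then reinstate the single genus-$(k{+}1)$ term using the uniform norm control on the $F_{Ng}$ supplied by Section~\ref{sec:Infinite}. No new idea is needed beyond a telescoping identity; the only bookkeeping concerns the two radii $\delta$ and $\gamma$.

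First I would fix the working radius by setting $\varepsilon=\min\{\delta,\gamma/2\}$, so that $\varepsilon\in(0,\gamma)$ and $\varepsilon\leq\delta$. Since $\D_N(\varepsilon)\subseteq\D_N(\delta)$ and $I_N$ is nonvanishing on $\D_N(\delta)$ by assumption, $I_N$ is nonvanishing on $\D_N(\varepsilon)$, which is the first assertion of Theorem~\ref{thm:MainGeneralized}, and $F_N=\log I_N$ restricts to an element of $\O_N(\varepsilon)$. Because $\varepsilon\leq\delta\leq\gamma$, the hypothesised bound holds verbatim with $\|\cdot\|_\delta$ replaced by the smaller norm $\|\cdot\|_\varepsilon$, and each $F_{Ng}$ satisfies $\|F_{Ng}\|_\varepsilon\leq\|F_{Ng}\|_\gamma$ with $\sup_N\|F_{Ng}\|_\gamma<\infty$ by the fixed-genus summability results of Section~\ref{sec:Infinite}.

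Next, fixing $k\in\N_0$, I would apply the hypothesis with $k+1$ in place of $k$: for all $N\geq N_{k+1}$,
\[
\Bigl\| F_N-\sum_{g=0}^{k+1} N^{2-2g} F_{Ng} \Bigr\|_\varepsilon \leq M_{k+1}\, N^{-2k}.
\]
Then I would use the identity
\[
F_N-\sum_{g=0}^{k} N^{2-2g} F_{Ng}=\Bigl( F_N-\sum_{g=0}^{k+1} N^{2-2g} F_{Ng} \Bigr)+N^{-2k} F_{N,k+1},
\]
the triangle inequality, and $\|F_{N,k+1}\|_\varepsilon\leq C_{k+1}:=\sup_N\|F_{N,k+1}\|_\gamma<\infty$ to get, for all $N\geq N_{k+1}$,
\[
\Bigl\| F_N-\sum_{g=0}^{k} N^{2-2g} F_{Ng} \Bigr\|_\varepsilon \leq (M_{k+1}+C_{k+1})\, N^{-2k}.
\]
Multiplying through by $N^{2k-2}$ yields a bound of order $N^{-2}$, which tends to $0$; this is exactly the limit asserted in Theorem~\ref{thm:MainGeneralized}.

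There is essentially no obstacle here: the content of Theorem~\ref{thm:ReductionToUniformBoundedness} is precisely that the hard analytic work ---producing a $\delta>0$ at all, i.e.\ stable nonvanishing, and establishing the $O(N^{2-2k})$ remainder bound--- is exactly what has to be done in the rest of Section~\ref{sec:Large}, while the passage from $O$ to $o$ and the uniformity in $N$ of the $F_{Ng}$ are free. The one point requiring care is keeping the two radii straight: $\varepsilon$ must be $\leq\delta$ so that $F_N$ is holomorphic and the hypothesis applies there, and $<\gamma$ so that the target functions $F_{Ng}$ and their suprema are controlled, and both are automatic from $\delta\leq\gamma$.
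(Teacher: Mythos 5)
Your proposal is correct, and it takes a genuinely different (and shorter) route than the paper. The paper's proof works at the level of the coupling expansion: it writes $\Delta_{Nk}=F_N-\sum_{g=0}^k N^{2-2g}F_{Ng}=\sum_d \frac{z^d}{d!}\Delta_{Nk}^d$, controls the tail $d>n_0$ at the smaller radius $\varepsilon<\delta$ via the Cauchy estimate $\|\Delta_{Nk}^d\|/d!\leq\|\Delta_{Nk}\|_\delta/\delta^d$ together with the hypothesised bound at radius $\delta$, and handles the finitely many remaining degrees with the degree-wise large-$N$ expansion of Theorem \ref{thm:LargeNExpansion}; this is a two-piece $\kappa/2$ argument yielding $o(N^{2-2k})$ for every $\varepsilon<\delta$. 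You instead apply the hypothesis at level $k+1$, peel off the single genus-$(k{+}1)$ term via the telescoping identity, and absorb it using $\sup_N\|F_{N,k+1}\|_\gamma<\infty$, which the paper records just before Theorem \ref{thm:MainGeneralized}. Your bookkeeping of the radii is sound ($\varepsilon\leq\delta$ so the hypothesis and nonvanishing transfer, $\varepsilon<\gamma$ so the genus coefficients are uniformly bounded), and there is no circularity since the hypothesis is quantified over all $k$. What each approach buys: yours is more elementary, bypasses Theorem \ref{thm:LargeNExpansion} and the Cauchy estimates entirely, works at $\varepsilon=\delta$ itself when $\delta<\gamma$, and delivers the quantitatively stronger rate $O(N^{-2k})$ rather than merely $o(N^{2-2k})$; the paper's argument, by contrast, uses only the hypothesis at the same level $k$ (not $k+1$) and establishes the limit for \emph{every} radius strictly below $\delta$, exhibiting the general mechanism by which a uniform $O$-bound on a polydisc plus term-wise asymptotics upgrade to an $o$-bound on any smaller polydisc. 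In the present setting both conclusions suffice, so your simplification is legitimate.
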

					
		\begin{proof}		
		Since $F_{Ng} \in \O_N(\gamma)$ for all $g \in \N_0$ and $\delta \leq \gamma$
		the differences
		
			\begin{equation}
				\Delta_{Nk} = F_N - \sum_{g=0}^k N^{2-2g} F_{Ng}, \quad k \in \N_0,
			\end{equation}
			
		\noindent
		belong to the Banach algebra $\O_N(\delta)$. Let
		
			\begin{equation}
				\Delta_{Nk} = \sum_{d=1}^\infty \frac{z^d}{d!} \Delta_{Nk}^d
			\end{equation}
			
		\noindent
		be the coupling expansion of this holomorphic difference, so that 
		
			\begin{equation}
				\Delta_{Nk}^d = F_N^d - \sum_{g=0}^k N^{2-2g} F_{Ng}^d,
			\end{equation}
		
		\noindent
		and Theorem \ref{thm:LargeNExpansion} says that 
		
			\begin{equation}
				\lim_{N \to \infty} N^{2k-2}\|\Delta_{Nk}^d\|=0.
			\end{equation}

		Now fix $k \in \N_0$ and $\varepsilon < \delta,$ and let $\kappa>0$ be given. 
		Under the hypothesis that $\|\Delta_{Nk}\|_\delta \leq M_kN^{2-2k}$ for all $N\geq N_k$ sufficiently large, where 
		$M_k \geq 0$ depends only on $k$, we will prove that in fact
				
			\begin{equation}
				\|\Delta_{Nk}\|_\varepsilon \leq \kappa N^{2-2k}
			\end{equation}
			
		\noindent
		for all $N$ sufficiently large.
		
		For any $N \in \N$, we have that
			
				\begin{equation}
					\|\Delta_{Nk}\|_\varepsilon \leq \sum_{d=1}^n \frac{\varepsilon^d}{d!} \| \Delta_{Nk}^d \| 
					+ \sum_{d=n+1}^\infty \frac{\varepsilon^d}{d!} \| \Delta_{Nk}^d \| 
				\end{equation}
				
		\noindent
		for all $n \in \mathbb{N}$. By Cauchy's estimate, for each $d \in \mathbb{N}$ we have
			
			\begin{equation}
				\frac{\|\Delta_{Nk}^d\|}{d!} \leq \frac{\|\Delta_{Nk}\|_\delta}{\delta^d},
			\end{equation}
			
		\noindent
		and thus
			
			\begin{equation}
				\left\| \Delta_{Nk} \right\|_{\varepsilon} \\
				\leq \  \sum_{d=1}^n \frac{\varepsilon^d}{d!}\|\Delta_{Nk}^d\|
				 + \left( \frac{\varepsilon}{\delta} \right)^{n+1} \frac{M_k}{1-\frac{\varepsilon}{\delta}} N^{2-2k},
			\end{equation}

		\noindent
		holds for each $n \in \N$, for all $N \geq N_k$. Consequently, for all $N \geq N_k$ sufficiently large we have
		that
			
			\begin{equation}
				\left\| \Delta_{Nk} \right\|_{\varepsilon} \leq \  \sum_{d=1}^{n_0} \frac{\varepsilon^d}{d!}\|\Delta_{Nk}^d\| + \frac{\kappa}{2}N^{2-2k},
			\end{equation}
			
		\noindent
		where $n_0$ is sufficiently large so that 
		
			\begin{equation}
				 \left( \frac{\varepsilon}{\delta} \right)^{n_0+1} \frac{M_k}{1-\frac{\varepsilon}{\delta}} \leq \frac{\kappa}{2}.
			\end{equation}
			
		\noindent
		 Invoking Theorem \ref{thm:LargeNExpansion}, for $N$ sufficiently large we have

				\begin{equation}
					 \sum_{d=1}^{n_0} \frac{\varepsilon^d}{d!}\|\Delta_{Nk}^d\| = N^{2-2k} \sum_{d=1}^{n_0} \frac{\varepsilon^d}{d!}N^{2k-2}\|\Delta_{Nk}^d\| 
					 \leq  N^{2-2k} \frac{\kappa}{2} 
				\end{equation}
				
		\noindent
		and we conclude that 
			
				\begin{equation}
					\left\| \Delta_N \right\|_{\varepsilon}  \leq \kappa N^{2-2k},
				\end{equation}
				
		\noindent
		holds for $N$ sufficiently large, as required.

	\end{proof}		
					
	\subsection{Reduction to concentration}
	We now show that the proof of Theorem \ref{thm:MainGeneralized} can be further 
	reduced to establishing a large $N$ version of the $N=\infty$ topological concentration 
	inequalities \eqref{eqn:StableTopologicalConcentration} for the stable integrals $I$.
	
	For each $k \in \N_0$, we define $E_{N\overline{k}} \in \O_N(\gamma)$ by
		
			\begin{equation}
				E_{N\overline{k}} = e^{\sum_{g=0}^k N^{2-2g} F_{Ng}}.
			\end{equation}
			
		\noindent
		Then 
		
			\begin{equation}
				E_{N\overline{k}} = 1 + \sum_{d=1}^\infty \frac{z^d}{d!} E_{N\overline{k}}^d,
			\end{equation}
			
		\noindent
		where 
				
			\begin{equation}
			\begin{split}
				E_{N\overline{k}}^{(0)d} &= \sum_{g=-d+1}^\infty N^{2-2g}\mon_{g\overline{k}}^{\bullet d}, \\
				E_{N\overline{k}}^{(1)d} &= \sum_{g=-d+1}^\infty N^{2-2g}  \sum_{\alpha \vdash d} \frac{p_\alpha(a_1,\dots,a_N)}{N^{\ell(\alpha)}}
				 (-1)^{\ell(\alpha)+d}\mon_{g\overline{k}}^\bullet(\alpha), \\
				E_{N\overline{k}}^{(2)d} &= \sum_{g=-d+1}^\infty N^{2-2g}
				\sum_{\alpha,\beta \vdash d} \frac{p_\alpha(a_1,\dots,a_N)}{N^{\ell(\alpha)}}\frac{p_\beta(b_1,\dots,b_N)}{N^{\ell(\beta)}}
				(-1)^{\ell(\alpha)+\ell(\beta)}\mon_{g\overline{k}}^\bullet(\alpha,\beta),
			\end{split}
			\end{equation}
			
		\noindent
		is a genus expansion for degree $d$ disconnected covers built from connected
		components of genus at most $k$ which
		converges uniformly absolutely on compact subsets of $\C^{mN}$, for all $d \in \N$.
		In particular, we have
		
			\begin{equation}
				\|E_{N\overline{k}}^d\| = N^{2d}\left(1+O_k\left(\frac{1}{N}\right)\right).
			\end{equation}
			
		\noindent
		Note that, in the stable range $1 \leq d \leq N$, the complementary genus expansions 
		
			\begin{equation}
			\begin{split}
				E_{N\underline{k+1}}^{(0)d} &= \sum_{g=k+1}^\infty N^{2-2g}\mon_{g\underline{k+1}}^{\bullet d}, \\
				E_{N\underline{k+1}}^{(1)d} &= \sum_{g=k+1}^\infty N^{2-2g}  \sum_{\alpha \vdash d} \frac{p_\alpha(a_1,\dots,a_N)}{N^{\ell(\alpha)}}
				 (-1)^{\ell(\alpha)+d}\mon_{g\underline{k+1}}^\bullet(\alpha), \\
				E_{N\underline{k+1}}^{(2)d} &= \sum_{g=k+1}^\infty N^{2-2g}
				\sum_{\alpha,\beta \vdash d} \frac{p_\alpha(a_1,\dots,a_N)}{N^{\ell(\alpha)}}\frac{p_\beta(b_1,\dots,b_N)}{N^{\ell(\beta)}}
				(-1)^{\ell(\alpha)+\ell(\beta)}\mon_{g\underline{k+1}}^\bullet(\alpha,\beta),
			\end{split}
			\end{equation}
			
		\noindent
		enumerating disconnnected degree $d$ covers assembled from 
		connected components of genus at least $k+1$ converge uniformly absolutely on compact subsets of $\C^{mN}$, 
		by Theorem \ref{thm:TopologicalExpansion} and the fact that 
		$\mon_{g\underline{k+1}}^\bullet(\alpha,\beta) \leq \mon_g^\bullet(\alpha,\beta)$.
		However, the generating series for disconnected covers built from connected components
		of unbounded genus are divergent in the unstable range $d > N$.
			
		For each $k \in \N_0$, define $\Phi_{Nk} \in \O_N(\gamma)$ by 
		
			\begin{equation}
				\Phi_{Nk} = E_{N\overline{k}}^{-1}I_N.
			\end{equation}
			
		\noindent
		In the case $m=1$, 
		
			\begin{equation}
				\Phi_{Nk}^{(1)} = e^{-\sum_{g=0}^k N^{2-2g} F_{Ng}^{(1)}} \int_{\mathrm{U}(N)} e^{\sqrt{z}N \mathrm{Tr}(AU + BU^{-1})} \mathrm{d}U
			\end{equation}
			
		\noindent
		is the $k$th order topological normalization of the BGW integral $I_N^{(1)}$, while in the case $m=2$
		
			\begin{equation}
				\Phi_{Nk}^{(2)} = e^{-\sum_{g=0}^k N^{2-2g} F_{Ng}^{(2)}} \int_{\mathrm{U}(N)} e^{zN \mathrm{Tr}AUBU^{-1}} \mathrm{d}U
			\end{equation}
			
		\noindent
		is the $k$th order topological normalization of the HCIZ integral $I_N^{(2)}$. 		
		
	\begin{thm}
	\label{thm:ReductionToConcentration}
		If there exists a constant $\xi \in (0,\gamma]$ such that
		for each $k \in \N_0$ we have
			
				\begin{equation*}
				    \|\Phi_{Nk}-1\|_\xi \leq C_k N^{2-2k}
				\end{equation*}
			
		\noindent
		for all $N \geq N_k$ sufficiently large, where $C_k \geq 0$ depends only on $k$,
		then Theorem \ref{thm:MainGeneralized} holds.
	\end{thm}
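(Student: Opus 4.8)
The plan is to extract from the hypothesized concentration estimate both the stable nonvanishing of $I_N$ and the uniform bounds demanded by Theorem \ref{thm:ReductionToUniformBoundedness}, and then quote that theorem. First I would upgrade concentration to nonvanishing by using a single high-genus instance. Fix $k_0 = 2$. Since $I_N = E_{N\overline{k_0}}\Phi_{Nk_0}$ and $E_{N\overline{k_0}}$ is the exponential of an element of $\O_N(\gamma)$, hence nowhere vanishing, it suffices to control $\Phi_{Nk_0}$. By hypothesis $\|\Phi_{Nk_0}-1\|_\xi \leq C_{k_0} N^{2-2k_0} \to 0$, so for $N$ larger than some $N_{k_0}$ we have $\|\Phi_{Nk_0}-1\|_\xi < 1$, and a holomorphic function whose values lie in the disc of radius $1$ about $1$ cannot vanish; thus $I_N$ is nonvanishing on $\D_N(\xi)$ for all $N \geq N_{k_0}$. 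For each of the finitely many remaining indices $N < N_{k_0}$, continuity of the entire function $I_N$ together with $I_N|_{z=0}\equiv 1$ on the compact polydisc $\D_N$ produces some $\varepsilon_N > 0$ with $I_N$ nonvanishing on $\D_N(\varepsilon_N)$. Setting $\varepsilon = \tfrac12\min\{\xi,\varepsilon_1,\dots,\varepsilon_{N_{k_0}-1}\} \in (0,\gamma)$ gives a single absolute constant for which $I_N$ is nonvanishing on $\D_N(\varepsilon)$ for every $N$, so $F_N = \log I_N$ is a well-defined element of $\O_N(\varepsilon)$ with $F_N|_{z=0}\equiv 0$.

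Next I would record the identity tying $F_N$ to its approximants. For each $k$ the function $\log E_{N\overline{k}} = \sum_{g=0}^k N^{2-2g}F_{Ng}$ lies in $\O_N(\gamma)$ and vanishes at $z=0$, while $\Phi_{Nk} = E_{N\overline{k}}^{-1}I_N$ is nonvanishing on the simply connected polydisc $\D_N(\varepsilon)$ with $\Phi_{Nk}|_{z=0}\equiv 1$; hence $\log\Phi_{Nk}$ is a well-defined member of $\O_N(\varepsilon)$ normalized by $\log\Phi_{Nk}|_{z=0}\equiv 0$. Exponentiating, $e^{F_N - \sum_{g=0}^k N^{2-2g}F_{Ng}} = I_N E_{N\overline{k}}^{-1} = \Phi_{Nk} = e^{\log\Phi_{Nk}}$, so the difference of the two exponents takes values in $2\pi i\mathbb{Z}$, is holomorphic on the connected set $\D_N(\varepsilon)$, and vanishes at $z=0$; therefore
\[
  F_N - \sum_{g=0}^k N^{2-2g}F_{Ng} = \log\Phi_{Nk} \qquad \text{on } \D_N(\varepsilon).
\]

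The core estimate is then a telescoping argument. From the definitions $E_{N\overline{j+1}} = E_{N\overline{j}}\,e^{N^{-2j}F_{N,j+1}}$, and comparing the (origin-normalized) logarithms of both sides of $\Phi_{Nk} = \Phi_{NK}\prod_{j=k}^{K-1} e^{N^{-2j}F_{N,j+1}}$ gives, for any $K > k$,
\[
  \log\Phi_{Nk} = \log\Phi_{NK} + \sum_{j=k}^{K-1} N^{-2j}F_{N,j+1} \qquad \text{on } \D_N(\varepsilon).
\]
Choosing $K = \max\{k,2\}$, so that $N^{2-2K}\to 0$, the hypothesis gives $\|\Phi_{NK}-1\|_\xi \leq C_K N^{2-2K} < \tfrac12$ for large $N$, whence $\|\log\Phi_{NK}\|_\varepsilon \leq \|\log\Phi_{NK}\|_\xi \leq \frac{\|\Phi_{NK}-1\|_\xi}{1-\|\Phi_{NK}-1\|_\xi} \leq 2C_K N^{2-2K}$ by the estimate $|\log(1+w)|\leq |w|/(1-|w|)$. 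For the remaining terms, $\|F_{N,j+1}\|_\varepsilon \leq \|F_{N,j+1}\|_\gamma \leq \sup_N\|F_{N,j+1}\|_\gamma < \infty$. Since $j \geq k$ forces $N^{-2j} \leq N^{2-2k}$ and $K \geq k$ forces $N^{2-2K}\leq N^{2-2k}$, summing these bounds produces a constant $M_k$ depending only on $k$ (through $K$, $C_K$, and finitely many suprema $\sup_N\|F_{Ng}\|_\gamma$) with $\bigl\| F_N - \sum_{g=0}^k N^{2-2g}F_{Ng}\bigr\|_\varepsilon \leq M_k N^{2-2k}$ for all large $N$. Taking $\delta = \varepsilon \in (0,\gamma]$, this is precisely the hypothesis of Theorem \ref{thm:ReductionToUniformBoundedness}, which then delivers Theorem \ref{thm:MainGeneralized}.

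The step requiring the most care is the bookkeeping around the logarithm rather than any single inequality: one cannot take $\log\Phi_{Nk}$ directly from the hypothesis for small $k$, since $C_k N^{2-2k}$ need not be less than $1$ when $k \in \{0,1\}$, so nonvanishing of $I_N$ must first be harvested from a high-genus instance ($k_0 = 2$), after which the small-$k$ approximants are tamed by descending from a large $K$ where the concentration estimate is genuinely small. One must also keep every logarithm normalized to vanish on $\{z=0\}$ so that the various branch ambiguities collapse, and not forget the finitely many exceptional small $N$ in the nonvanishing claim.
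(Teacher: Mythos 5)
Your proposal is correct, and it shares the paper's opening move --- harvesting stable nonvanishing of $I_N$ from the $k=2$ instance of the hypothesis, where $C_2N^{-2}<1$ forces $\Phi_{N2}$, and hence $I_N = E_{N\overline{2}}\Phi_{N2}$, to be zero-free on $\D_N(\xi)$ for large $N$, with the finitely many small $N$ absorbed by shrinking the radius. Where you diverge is in the core estimate converting $\|\Phi_{Nk}-1\|_\xi \leq C_kN^{2-2k}$ into $\|\Delta_{Nk}\|_\delta \leq M_kN^{2-2k}$ when $k\in\{0,1\}$, where the concentration bound is not small. The paper handles this with the Borel--Carath\'eodory inequality: from $\|e^{\Delta_{Nk}}\|_\eta \leq 1+C_kN^{2-2k}$ it extracts only the bound $\sup_{\D_N(\eta)}\Re\,\Delta_{Nk} \leq C_kN^{2-2k}$ on the real part, and then recovers the full sup norm on the half-radius polydisc, giving $M_k = 2C_k$ with no shrinkage of radius beyond $\eta/2$ and no appeal to any other input. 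You instead telescope $\log\Phi_{Nk} = \log\Phi_{NK} + \sum_{j=k}^{K-1}N^{-2j}F_{N,j+1}$ down to $K=\max\{k,2\}$, where the elementary estimate $|\log(1+w)|\leq|w|/(1-|w|)$ applies, and control the intermediate terms by the uniform bound $\sup_N\|F_{Ng}\|_\gamma<\infty$ that the paper records before Theorem \ref{thm:MainGeneralized}. Your route avoids Borel--Carath\'eodory entirely and is more elementary in its complex analysis, at the cost of importing the uniform boundedness of the genus-specific free energies and some bookkeeping with normalized logarithms (which you handle correctly); the paper's route is more self-contained at this step and yields a cleaner constant. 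Both then feed the resulting uniform bound into Theorem \ref{thm:ReductionToUniformBoundedness}, so the reduction is established either way.
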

		
	\begin{proof}		
		By hypothesis, we have
		
			\begin{equation}
				\|\Phi_{N2}-1\|_\xi \leq C_2 N^{-2}
			\end{equation}
			
		\noindent
		for all $N \geq N_2$ sufficiently large. This implies that 
		
			\begin{equation}
				\|\Phi_{N2}-1\|_\xi <1
			\end{equation}
			
		\noindent
		for all $N > N_0:=\max (N_2,\sqrt{C_2})$ sufficiently large, and hence that $\Phi_{N2}$ is non-vanishing on $\D_N(\xi)$ for all $N>N_0$.
		But 
		
			\begin{equation}
				\Phi_{N2} = e^{-(N^2F_{N0} + N^0F_{N1} + N^{-2}F_{N2})} I_N
			\end{equation}
			
		\noindent
		is the product of a non-vanishing function and $I_N$, so that $I_N$ 
		must be nonvanishing on $\D_N(\xi)$ for all $N>N_0$. Consequently,
		there is $\eta \in (0,\xi]$ such that $I_N$ is nonvanishing on 
		$\D_N(\eta)$ for all $N \in \N$, whence $F_N= \log I_N$ is a well-defined 
		member of $\O_N(\eta)$ for all $N \in \N$. 
		
		Now let $k \in \N_0$ be arbitrary but fixed. Since $\eta \leq \xi \leq \gamma$,
		the difference
				
			\begin{equation}
				\Delta_{Nk} = F_N-\sum_{g=0}^k N^{2-2g} F_{Ng}
			\end{equation}
			
		\noindent
		belongs to $\O_N(\eta)$ for all $N \in \N$. Now

			\begin{equation}
				\|e^{\Delta_{Nk}} - 1\|_\eta = \|\Phi_{Nk}-1\|_\eta \leq \|\Phi_{Nk}-1\|_\xi,
		    \end{equation}
		    
		 \noindent
		 so that by hypothesis we have
		 
		 	\begin{equation}
				\|e^{\Delta_{Nk}} - 1\|_\eta \leq C_kN^{2-2k}
			\end{equation}
			
		\noindent
		 for all $N \geq N_k$ sufficiently large. 
		 
		For the rest of the argument, we assume $N \geq N_k$. From the above we have that $e^{\Delta_{Nk}}$ belongs to the 
		closed unit ball of radius $C_kN^{2-2k}$ centered at the constant function $1$ in
		the Banach algebra ($\O_N(\eta),\|\cdot\|_\eta)$. By the triangle inequality, we thus have
		
			\begin{equation}
				\|e^{\Delta_{Nk}} \|_\eta \leq 1 + C_k N^{2-2k},
			\end{equation} 
			
		\noindent
		so that
		
			\begin{equation}
				e^{ R_{Nk}(\eta)} \leq 1 + C_k N^{2-2k},
			\end{equation} 

		\noindent
		where 
		
			\begin{equation}
				R_{Nk}(\eta) = \sup\limits_{\D_N(\eta)} \Re \Delta_{Nk}
			\end{equation}
			
		\noindent
		is the supremum of the real part of $\Delta_{Nk}$ over $\D_N(\eta)$.
		Thus, we have the bound 
		
			\begin{equation}
				R_{Nk}(\eta) \leq \log \left( 1 + C_k N^{2-2k} \right) \leq C_k N^{2-2k}.
			\end{equation}
			
		\noindent
		Now by the Borel-Carath\'eodory inequality \cite{Titchmarsh} 
		we have
		
			\begin{equation}
				\|\Delta_{Nk}\|_\delta \leq \frac{2\delta}{\eta-\delta} R_{Nk}(\eta)
			\end{equation}
			
		\noindent
		for any $\delta \in (0,\eta)$, and choosing $\delta = \frac{\eta}{2}$ we obtain 
		
			\begin{equation}
				\|\Delta_{Nk}\|_\delta \leq M_kN^{2-2k},
			\end{equation}
			
		\noindent
		where $M_k=2C_k$. The result thus follows from Theorem \ref{thm:ReductionToUniformBoundedness}.
	\end{proof}
	
	\subsection{Reduction to critical bounds}
	Let 
	
		\begin{equation}
			\Phi_{Nk} = 1 + \sum_{d=1}^\infty \frac{z^d}{d!} \Phi_{Nk}^d
		\end{equation}
		
	\noindent
	be the coupling expansion of the topological normalization $\Phi_{Nk} = E_{N\overline{k}}^{-1}I_N \in \O_N(\gamma)$.
	Thus 
	
		\begin{equation}
			\Phi_{Nk}^d = \sum_{c=0}^d {d \choose c} I_N^c (E_{N\overline{k}}^{-1})^{d-c}
		\end{equation}
		
	\noindent
	is the binomial convolution of the coupling coefficients of the entire function 
	
		\begin{equation}
			I_N = 1 + \sum_{d=1}^{\infty} \frac{z^d}{d!} I_N^d
		\end{equation}
		
	\noindent
	and the coupling coefficients of the topological normalization factor $E_{N\overline{k}}^{-1} \in \O_N(\gamma)$,
	
		\begin{equation}
			E_{N\overline{k}}^{-1} = e^{-\sum_{g=0}^k N^{2-2g} F_{Ng}} = 1 + \sum_{d=1}^\infty \frac{z^d}{d!} (E_{N\overline{k}}^{-1})^d.
		\end{equation}
	
	\noindent
	We then have the following final reduction of Theorem \ref{thm:MainGeneralized}, which 
	reduces it to establishing norm bounds on the $\Theta(N^2)$ critical coupling coefficients
	of $\Phi_{Nk}$.
		
		\begin{thm}
		\label{thm:ReductionToCriticalBounds}
			If there exists a constant $\xi \in (0,\gamma]$ such that, 
			for each $k \in \N_0$, the inequalities 
			
				\begin{equation*}
					\frac{\xi^d}{d!}\|\Phi_{Nk}^d\| \leq R_k N^{-2k}, \quad 1 \leq d \leq tN^2,
				\end{equation*}
				
			\noindent
			hold for all $N \geq N_k$ sufficiently large, where $R_k \geq 0$
			depends only on $k$, then Theorem \ref{thm:MainGeneralized} holds.
		\end{thm}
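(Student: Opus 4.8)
The plan is to deduce the hypothesis of Theorem~\ref{thm:ReductionToConcentration} --- a bound $\|\Phi_{Nk}-1\|_\xi \le C_k N^{2-2k}$ --- from the assumed bounds on the critical coupling coefficients, and then quote that theorem. The first, and decisive, observation is that the hypothesis is stable under shrinking $\xi$: if $\frac{\xi^d}{d!}\|\Phi_{Nk}^d\| \le R_k N^{-2k}$ for $1\le d\le tN^2$ and $N\ge N_k$, then the same inequalities hold verbatim, with the same $R_k,N_k$, after replacing $\xi$ by any $\xi'\in(0,\xi]$, since $(\xi'/\xi)^d\le 1$. Thus I may take $\xi<\gamma$ as small as I please, fixing its value only at the very end. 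Crucially, the value I will eventually choose depends only on absolute constants, so the \emph{same} $\xi$ works for every $k$, exactly as required to invoke Theorem~\ref{thm:ReductionToConcentration}.

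Now fix $k$ and split the coupling expansion of $\Phi_{Nk}-1$ at degree $\lfloor tN^2\rfloor$:
\[
\|\Phi_{Nk}-1\|_\xi \;\le\; \sum_{d=1}^{\lfloor tN^2\rfloor}\frac{\xi^d}{d!}\,\|\Phi_{Nk}^d\| \;+\; \sum_{d>tN^2}\frac{\xi^d}{d!}\,\|\Phi_{Nk}^d\|.
\]
The first (critical) sum is handled directly by the hypothesis: it has at most $tN^2$ terms, each $\le R_k N^{-2k}$, hence is $\le tR_k N^{2-2k}$ for $N\ge N_k$. For the second (tail) sum I use that $\Phi_{Nk}=E_{N\overline{k}}^{-1}I_N$ is a bona fide element of $\O_N(\gamma)$, so Cauchy's estimate gives $\|\Phi_{Nk}^d\|/d!\le\gamma^{-d}\|\Phi_{Nk}\|_\gamma$ and therefore, for $\xi<\gamma$,
\[
\sum_{d>tN^2}\frac{\xi^d}{d!}\,\|\Phi_{Nk}^d\| \;\le\; \frac{(\xi/\gamma)^{tN^2}}{1-\xi/\gamma}\,\|\Phi_{Nk}\|_\gamma .
\]
The heart of the matter is to show that $\|\Phi_{Nk}\|_\gamma\le e^{BN^2}$ for an absolute constant $B$ independent of $k$, once $N$ is large relative to $k$. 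For the factor $\|I_N\|_\gamma$ this is immediate from the coefficient bound $\|I_N^d\|\le N^{2d}$ established in the proof of Theorem~\ref{thm:StrongCouplingApproximation} (with the analogous elementary bound for $m=0$), which gives $\|I_N\|_\gamma\le e^{cN^2}$ with $c$ absolute. For the factor $\|E_{N\overline{k}}^{-1}\|_\gamma=\|e^{-\sum_{g=0}^k N^{2-2g}F_{Ng}}\|_\gamma\le\exp\!\big(\sum_{g=0}^k N^{2-2g}\|F_{Ng}\|_\gamma\big)$, put $\phi_g:=\sup_N\|F_{Ng}\|_\gamma<\infty$; the $g=0$ term contributes $N^2\phi_0$, while $\sum_{g=1}^k N^{2-2g}\phi_g\to\phi_1$ as $N\to\infty$ for each fixed $k$, hence is $\le\phi_1+1$ once $N\ge N_k'$. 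Therefore $\|\Phi_{Nk}\|_\gamma\le e^{BN^2}$ with $B:=c+\phi_0+\phi_1+1$ absolute, so the tail sum is $\le\frac{1}{1-\xi/\gamma}\exp\!\big(N^2(B-t\log(\gamma/\xi))\big)$. Choosing $\xi$ small enough that $t\log(\gamma/\xi)>B$ --- possible since $B$ is absolute --- makes the tail decay faster than any power of $N$, hence $\le N^{2-2k}$ for large $N$. Adding the two pieces yields $\|\Phi_{Nk}-1\|_\xi\le(tR_k+1)N^{2-2k}$ for all sufficiently large $N$, and Theorem~\ref{thm:ReductionToConcentration} then delivers Theorem~\ref{thm:MainGeneralized}.

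I expect the main obstacle to be precisely the $k$-independence of the constant $B$. Since the genus-$g$ free energies $\|F_{Ng}\|_\gamma$ grow with $g$ --- indeed $\sum_g\|F_{Ng}\|_\gamma$ diverges --- the partial sums $\sum_{g=1}^k N^{2-2g}\|F_{Ng}\|_\gamma$ cannot be bounded uniformly in $k$ for a fixed $N$; the resolution is that one only needs the bound for $N\ge N_k$, where the $g=1$ term dominates, so the argument must manage the order of the limits in $N$ and $k$ with some care. A secondary, more technical point is verifying the $m=0$ analogue of $\|I_N^d\|\lesssim N^{2d}$ in the unstable range $d>N$: for $m=1,2$ this is contained in the proof of Theorem~\ref{thm:StrongCouplingApproximation}, but the fieldless series $I_N^{(0)}$ needs a separate elementary estimate of the pertinent character sum, equivalently a bound $\Omega_{\frac1N}^{-1}(\lambda)\le c_0^{\,d}$ valid for all $\lambda\vdash d$ with $\ell(\lambda)\le N$.
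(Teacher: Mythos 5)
Your proposal is correct and reaches the theorem by the same overall route as the paper: split the coupling expansion of $\Phi_{Nk}-1$ at degree $\lfloor tN^2\rfloor$, bound the critical block by $tR_kN^{2-2k}$ straight from the hypothesis, show the tail is negligible, and hand the resulting concentration bound to Theorem \ref{thm:ReductionToConcentration}. The only genuine divergence is the mechanism for the tail. The paper stays at the level of coefficients: it bounds $\|\Phi_{Nk}^d\|$ via the binomial convolution $\sum_c\binom{d}{c}\|I_N^c\|\,\|(E_{N\overline{k}}^{-1})^{d-c}\|\lesssim \mathrm{const}\cdot N^{2d}$ and then reruns the elementary $d!>(d/e)^d$ argument of Theorem \ref{thm:StrongCouplingApproximation} to get an $O(e^{-N^2})$ tail for $\xi$ a small absolute constant. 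You instead apply Cauchy's estimate at radius $\gamma$ to the germ $\Phi_{Nk}\in\O_N(\gamma)$ and pay for it with a global bound $\|\Phi_{Nk}\|_\gamma\le e^{BN^2}$, assembled from $\|I_N\|_\gamma\le e^{\gamma N^2}$ and $\|E_{N\overline{k}}^{-1}\|_\gamma\le\exp\bigl(\sum_{g\le k}N^{2-2g}\|F_{Ng}\|_\gamma\bigr)$; your handling of the $k$-dependence (absorbing $\sum_{g\ge1}N^{2-2g}\phi_g$ into an absolute constant only for $N\ge N_k'$, which is all the conclusion requires, and noting that the hypothesis survives shrinking $\xi$) is sound, and your observation that $\xi$ must ultimately be an absolute constant is exactly the point the paper makes by saying the tail is $O(e^{-N^2})$ ``for $\xi>0$ a sufficiently small absolute constant.'' Your version leans on the uniform summability input $\sup_N\|F_{Ng}\|_\gamma<\infty$, whereas the paper's is more elementary and self-contained; both are valid. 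Finally, the issue you flag for $m=0$ --- that one needs $\|I_N^{(0)d}\|\le c_0^d N^{2d}$, equivalently $\Omega_{1/N}^{-1}(\lambda)\le c_0^d$ for $\ell(\lambda)\le N$, in the unstable range --- is real, but it afflicts the paper's convolution bound in exactly the same way and is fillable by an elementary column-by-column estimate of the negative-content factors; it is not a defect specific to your argument.
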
 
		
		\begin{proof}
		First, we have the topologically normalized version of Theorem \ref{thm:StrongCouplingApproximation}:
		for sufficiently strong coupling, the terms of degree $d > tN^2$ in the coupling expansion
		
			\begin{equation}
				\Phi_{Nk} = 1 +\sum_{d=1}^\infty \frac{z^d}{d!} \Phi_{Nk}^d
			\end{equation}
			
		\noindent
		can be ignored. Indeed, we have
		
			\begin{equation}
				\|\Phi_{Nk}^d\| \leq \sum_{c=0}^d {d \choose c} \|I_N^c\| \|(E_{Nk}^{-1})^{d-c}\|
				\lesssim  \text{const} \cdot N^{2d},
			\end{equation}
			
		\noindent
		so the same argument as in Theorem \ref{thm:StrongCouplingApproximation} shows that 
				
			\begin{equation}
				\left\| \sum_{d>tN^2} \frac{z^d}{d!} \Phi_{Nk}^d \right\|_\xi \leq \sum_{d>tN^2} \frac{\xi^d}{d!} \|\Phi_{Nk}^d\|
			\end{equation}
			
		\noindent
		is $O(e^{-N^2})$ for $\xi > 0$ a sufficiently small absolute constant.
		Thus, to bound $\|\Phi_{Nk}-1\|_\xi$, it is sufficient to bound
		the finite sum,
		
			\begin{equation}
				\sum_{d=1}^{\lfloor tN^2 \rfloor} \frac{z^d}{d!} \Phi_{Nk}^d,
			\end{equation}
			
		\noindent
		and by hypothesis we have 
		
			\begin{equation}
				\left\| \sum_{d=1}^{\lfloor tN^2 \rfloor} \frac{z^d}{d!} \Phi_{Nk}^d \right\|_\xi
				\leq \sum_{d=1}^{\lfloor tN^2 \rfloor} \frac{\xi^d}{d!} \|\Phi_{Nk}^d\| 
				\leq \sum_{d=1}^{\lfloor tN^2 \rfloor} R_k N^{-2k} \leq tR_k N^{2-2k}.
			\end{equation}

		\end{proof}

	\subsection{Stable critical bounds}
	We now show that the required uniform bounds 
	on the critical coupling coefficients $\Phi_{Nk}^d$ hold
	in the stable range, $1 \leq d \leq N$.
	
		\begin{thm}
		\label{thm:StableCriticalBounds}
			For any integers $1 \leq d \leq N$ and $k \in \N_0$, we have
			
				\begin{equation*}
					\|\Phi_{Nk}^d\| \leq R_k N^{-2k},
				\end{equation*}
				
			\noindent
			where $R_k \geq 0$ depends only on $k$.
		\end{thm}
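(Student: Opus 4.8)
The plan is to recognize $\Phi_{Nk}^d$, in the stable range, as the degree-$d$ coupling coefficient of the generating function for disconnected monotone covers assembled entirely from connected pieces of genus at least $k+1$, and then to estimate that coefficient.

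\emph{Step 1: a stable-range topological factorization.} I would first show that
\[
  \Phi_{Nk}^d = E_{N\underline{k+1}}^d \qquad (1 \le d \le N),
\]
the finite-$N$ shadow of the formal identity $\Phi_k = E_{\underline{k+1}}$ of Section \ref{sec:Infinite} (equivalently, of the binomial cancellation identity \eqref{eqn:StableTopologicalCancellation}). The mechanism is that for $1 \le d \le N$ every coupling coefficient entering the convolution $\Phi_{Nk}^d = \sum_{c=0}^d \binom{d}{c} I_N^c (E_{N\overline{k}}^{-1})^{d-c}$ is the value at $\hbar = 1/N$ of the corresponding formal stable coefficient: for the $I_N^c$ with $c \le d$ this is exactly Theorem \ref{thm:TopologicalExpansion}, whose disconnected topological expansion converges $\|\cdot\|$-absolutely throughout the stable range, while for $E_{N\overline{k}}$ and $E_{N\overline{k}}^{-1}$ it is immediate from the defining polynomials \eqref{eqn:GenusSpecificPolynomials}, which are the $\hbar = 1/N$ specializations of \eqref{eqn:StableCouplingTopological}. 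Hence $\Phi_{Nk}^d$ is the $\hbar = 1/N$ specialization of $\Phi_k^d$, and $\Phi_k = E_{\underline{k+1}}$ closes the step.

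\emph{Step 2: expansion by genus and crude reduction.} Writing
\[
  E_{N\underline{k+1}}^d = \sum_{g = k+1}^\infty N^{2-2g}\bigl(\text{degree-}d\text{ polynomial counting covers of genus }g\text{ whose components all have genus} \ge k+1\bigr),
\]
I would bound the norm of each genus-$g$ polynomial on $\D_N$ using $\|p_\alpha(a_1,\dots,a_N)/N^{\ell(\alpha)}\| \le 1$, the monotonicity $\mon_{g\underline{k+1}}^\bullet(\alpha,\beta) \le \mon_g^\bullet(\alpha,\beta)$, and the sorting inequality of Theorem \ref{thm:DisconnectedSorting}, reducing matters (for all three values of $m$) to an estimate of the form
\[
  \frac{\xi^d}{d!}\,\bigl\| \Phi_{Nk}^d \bigr\| \ \le\ \frac{(4\xi)^d}{d!} \sum_{g=k+1}^\infty N^{2-2g}\, \mon_g^{\bullet d}.
\]
Setting $r = 2g-2+2d$ and $\mon_g^{\bullet d} = \vec{W}^r(1^d,1^d)$, the constraint $g \ge k+1$ becomes $r \ge 2k+2d$ and the weight becomes $N^{2d-r}$; so the right-hand sum is $N^{2d}$ times a tail of the rational generating function $\sum_{r \ge 0}\hbar^r \vec{W}^r(1^d,1^d) = \sum_{\lambda \vdash d}\tfrac{(\dim \mathsf{V}^\lambda)^2}{d!}\prod_{\Box \in \lambda}(1-\hbar c(\Box))^{-1}$ at $\hbar = 1/N$, a point that is inside the disc of convergence $|\hbar| < 1/(d-1)$ precisely because $d \le N$.

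\emph{Step 3: the uniform tail estimate, and the obstacle.} The crux is to bound this quantity by $R_k N^{-2k}$ uniformly over $1 \le d \le N$. The power $N^{-2k}$ emerges because truncating the genus below $k+1$ discards the leading, slowly-decaying part of the $g$-series: by Theorem \ref{thm:LargeGenusAsymptotics}, $N^{2-2g}\mon_g^{\bullet d}$ behaves like $N^2((d-1)/N)^{2g}$ and is decreasing in $g$, so the tail is governed by its $g=k+1$ term $N^{-2k}\mon_{k+1}^{\bullet d}$; the residual dependence of $\tfrac{(4\xi)^d}{d!}\mon_{k+1}^{\bullet d}$ on $d$ is then absorbed using the sub-exponential control furnished by Theorem \ref{thm:SimpleSummability} (whose radius of convergence $\tfrac2{27}$ exceeds $4\gamma$ — this being exactly why $\gamma < \tfrac1{54}$ was fixed) together with Stirling's estimate. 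For $m \in \{1,2\}$ the crude sorting bound is not by itself quite enough, and I would run the same estimate after first applying the cancellation scheme of Theorem \ref{thm:Cancellation} to the $\beta$-summation inside each $F_{Ng}^{(m)}$ — replacing $p_\beta(b_1,\dots,b_N)/N^{\ell(\beta)}$ by $p_\beta(b_1,\dots,b_N)/N^{\ell(\beta)} - 1$, permissible since $\sum_\beta (-1)^{\ell(\beta)}\mon_g(\alpha,\beta) = 0$ — to gain the extra decay needed to beat $4^d$. The main obstacle is precisely this last step: unlike the $N=\infty$ concentration inequality $\Phi_k - 1 = O(\hbar^{2k})$, which holds coefficient-by-coefficient in the coupling variable, here the bound must be uniform over the whole stable range $d \le N$, where the disconnected monotone Hurwitz numbers grow both in $g$ (like $(d-1)^{2g}$) and, via $d$, factorially; making the $N^{2-2g}$ suppression win against both at once is delicate, and it is the combination of the large-genus asymptotics, the gap between $\gamma$ and the $\tfrac2{27}$ summability radius, and — for $m \ge 1$ — the sign cancellations that makes it work.
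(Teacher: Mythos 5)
Your Steps 1 and 2 are exactly the paper's argument: the stable-range identity $\Phi_{Nk}^d = E_{N\underline{k+1}}^d$ via Theorems \ref{thm:TopologicalExpansion} and \ref{thm:StableTopologicalConvolution}, followed by the reduction to $\frac{(4\xi)^d}{d!}\sum_{g\geq k+1} N^{2-2g}\mon_g^{\bullet d}$ via Theorem \ref{thm:DisconnectedSorting} and the monotonicity $\mon_{g\underline{k+1}}^\bullet \leq \mon_g^\bullet$. The core of your Step 3 is also the paper's mechanism: by Theorem \ref{thm:LargeGenusAsymptotics} the genus tail is a geometric series in $\left((d-1)/N\right)^2$, convergent because $d \leq N$, whose $g=k+1$ term supplies the factor $N^{-2k}$.

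The genuine problem is the final part of Step 3. First, the claim that for $m \in \{1,2\}$ ``the crude sorting bound is not by itself quite enough'' to beat $4^d$ is a misdiagnosis, and it contradicts your own Step 2, where you already reduced all three cases to the single quantity $\frac{(4\xi)^d}{d!}\sum_{g\geq k+1} N^{2-2g}\mon_g^{\bullet d}$. The factor $4^d$ (together with the $e^{3d}$ coming from Stirling applied to $(d-1)^{3d-3}/((d-1)!)^3$) is absorbed simply by taking the absolute constant $\xi$ small enough that $(4\xi)^d e^{3d-3} < e^{-d}$; this is what the paper does, and it yields $R_k = \left(\frac{2k+1}{e}\right)^{2k+1}$ with no further input. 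Second, the repair you propose would not be valid as stated: Theorem \ref{thm:Cancellation} asserts $\sum_{\beta\vdash d}(-1)^{\ell(\beta)}\mon_g(\alpha,\beta)=0$ for the \emph{connected} monotone double Hurwitz numbers, whereas the coefficients you need to resum in $E_{N\underline{k+1}}^{(2)d}$ are the genus-restricted \emph{disconnected} numbers $\mon_{g\underline{k+1}}^\bullet(\alpha,\beta)$, for which no such vanishing is available in the paper. So the step you flag as the ``main obstacle'' is both unnecessary and unsupported; deleting it and finishing the geometric-series estimate as in your own outline recovers the correct proof. (A minor further caveat, which the paper shares: Theorem \ref{thm:LargeGenusAsymptotics} is an asymptotic statement as $g\to\infty$, so to use it as a bound valid for all $g \geq k+1$ uniformly in $d \leq N$ one should really quote the explicit Stirling-number inequality $\mon_g^{\bullet d} \leq 2\,f_{2g-2+2d}(1,\dots,d-1)/d!$ behind it, rather than Theorem \ref{thm:SimpleSummability}, which concerns connected numbers in fixed genus as $d\to\infty$ and plays no role in this estimate.)
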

	
		\begin{proof}
		Let the integers $1 \leq d \leq N$ and $k \in \N_0$ be arbitrary but fixed,
		and consider
		
			\begin{equation}
				\Phi_{Nk}^d = \sum_{c=0}^d {d \choose c} I_N^c (E_{Nk}^{-1})^{d-c}.
			\end{equation}
		
		\noindent	
		By Theorem \ref{thm:TopologicalExpansion}
		together with Theorem \ref{thm:StableTopologicalConvolution},
		we have 
		
			\begin{equation}
				I_N^c = \sum_{b=0}^c {c\choose b} E_{N\overline{k}}^b E_{N\underline{k+1}}^{c-b}, \quad 1 \leq c \leq d,
			\end{equation}
			
		\noindent
		so that
		
			\begin{equation}
				\Phi_{Nk}^d = E_{N\overline{k+1}}^d,
			\end{equation}
			
		\noindent
		where $E_{N\underline{k+1}}^d=E_{N\underline{k+1}}^{(m)d}$ is a $\|\cdot\|$-absolutely convergent
		generating function for disconnected degree $d$ covers of genus at 
		least $k+1$ with at most $m$ non-simple branch points, $m\in \{0,1,2\}$. 
		Thus by Theorem \ref{thm:DisconnectedSorting}, we have
			
			\begin{equation}
				\|\Phi_{Nk}^d\| < 4^d \sum_{g=k+1}^\infty N^{2-2g}\mon_g^{\bullet d},
			\end{equation}
		
		\noindent
		so that the inequality 
	
			\begin{equation}
				\frac{\xi^d}{d!}\|\Phi_{Nk}^{(m)d}\| \leq \frac{(4\xi)^d}{d!} \sum_{g=k+1}^\infty N^{2-2g} \mon_g^{\bullet d}
			\end{equation}
		
		\noindent
		holds for any $\xi \geq 0$. 
		
		It remains to prove that for $\xi >0$ a sufficiently small
		absolute constant, we have 
		
		\begin{equation}
		\label{eqn:Penultimate}
			\frac{(4\xi)^d}{d!} \sum_{g=k+1}^\infty N^{2-2g} \mon_g^{\bullet d} \leq R_k N^{-2k}
		\end{equation}
		
		\noindent
		where $R_k \geq 0$ depends only on $k$.
		From Theorem \ref{thm:LargeGenusAsymptotics}, we get that
	
		\begin{equation}
		\begin{split}
			\frac{(4\xi)^d}{d!} \mon_g^{\bullet d} &\leq \frac{(4\xi)^d(d-1)^{3d-3}}{(d-1)!d!d!} (d-1)^{2g} \\
			&= \frac{(4\xi)^d(d-1)^{3d-3}}{(d-1)!(d-1)!(d-1)!} \frac{(d-1)^{2g}}{d^2} \\
			&< (4\xi)^d e^{3d-3} (d-1)^{2g-2} \\
			& < e^{-d} (d-1)^{2g-2} 
		\end{split}
		\end{equation}
		
	\noindent
	for $\xi >0$ a sufficiently small absolute constant. 
	We thus have the bound 
	
		\begin{equation}
			 \frac{(4\xi)^d}{d!}\sum_{g=k+1}^\infty N^{2-2g} \mon_g^{\bullet d} \leq 
			 e^{-d} \sum_{g=k+1}^\infty \left( \frac{d-1}{N} \right)^{2g-2},
		\end{equation}
		
	\noindent
	where the geometric series converges because $d \leq N$. 
	Factoring out the first term of the series and summing, we get
	
		\begin{equation}
		\begin{split}
			 e^{-d} \sum_{g=k+1}^\infty \left( \frac{d-1}{N} \right)^{2g-2} &= e^{-d}(d-1)^{2k} N^{-2k}  \sum_{l=0}^\infty \left( \frac{d-1}{d} \right)^{2l} \\
			& = e^{-d} (d-1)^{2k} d N^{-2k} \\
			&< e^{-d} d^{2k+1} N^{-2k}.
		\end{split}
		\end{equation}
		
	\noindent
	Since 
	
		\begin{equation}
			 e^{-x} x^{2k+1} \leq \left( \frac{2k+1}{e} \right)^{2k+1}, \quad x \in \R,
		\end{equation}
		
	\noindent
	we conclude that 
	
		\begin{equation}
			\frac{\xi^d}{d!}\|\Phi_{Nk}^d\| \leq R_k N^{-2k}
		\end{equation}
		
	\noindent
	holds with
	
		\begin{equation}
			R_k = \left( \frac{2k+1}{e} \right)^{2k+1}.
		\end{equation}
	\end{proof}
		
	\subsection{The Plancherel mechanism}	
	It remains to extend Theorem \ref{thm:StableCriticalBounds} into 
	the unstable critical range, $N < d \leq tN^2$. The reason that the above argument
	cannot be repeated verbatim is that the convolution

		\begin{equation}
			\Phi_{Nk}^d = \sum_{c=0}^d {d \choose c} I_N^c (E_{Nk}^{-1})^{d-c},
		\end{equation}
				
	\noindent
	contains unstable terms as soon as $d>N$, i.e. terms containing the factor
	$I_N^c$, $c>N$, to which the topological expansion of Theorem \ref{thm:TopologicalExpansion}
	does not apply. However, this obstruction can be overcome in the 
	unstable critical range, where $d$ is not arbitrarily large relative to 
	$N$ but subject to $d \leq tN^2$. The basic mechanism is easy to 
	describe: the Plancherel measure on Young diagrams with $d$ cells
	concentrates \cite{BDJ,DZ,Johansson:MRL,Kerov} on diagrams of height and width at most 
	$2\sqrt{d}$, which for $d < \frac{1}{4}N^2$ implies
	containment in the square diagram with $N$ rows and $N$ columns.
	The content of every cell in any Young diagram $\lambda$ contained in the $N \times N$
	square is at most $N-1$ in absolute value, and consequently the $1/N$ expansion
	
		\begin{equation}
			\Omega_{\frac{1}{N}}^{-1}(\lambda) =\prod_{\Box \in \lambda} \frac{1}{1+\frac{c(\Box)}{N}}
			= \sum_{r=0}^\infty \left( - \frac{1}{N} \right)^r f_r(\lambda)
		\end{equation}
		
	\noindent
	is absolutely convergent. Summarizing this in the notationally
	simplest case $m=0$, if $\lambda \vdash d < \frac{1}{4}N^2$ then most terms
	in the sum 
	
		\begin{equation}
			I_N^d = \sum_{\substack{\lambda \vdash d \\ \ell(\lambda) \leq N}} 
			\Omega_{\frac{1}{N}}^{-1}(\lambda) \frac{(\dim \mathsf{V}^\lambda)^2}{d!}
		\end{equation}
		
	\noindent
	admit an absolutely convergent $1/N$ expansion, and those which do not are suppressed 
	by the Plancherel weight, so that the argument used in Theorem \ref{thm:StableCriticalBounds}
	applies up to negligible terms.
	
	Let us illustrate the above mechanism in explicit detail for the first unstable normalized coupling coefficient, corresponding to $d=N+1$,
	remaining in the notationally simplest case $m=0$.
	We have
		
		\begin{equation}
			\frac{\xi^{N+1}}{(N+1)!}\Phi_{Nk}^{N+1} = 
			\frac{\xi^{N+1}}{(N+1)!}\sum_{c=0}^N {N+1 \choose c} I_N^c(E_{Nk}^{-1})^{N+1-c}
			+\frac{\xi^{N+1}}{(N+1)!}I_N^{N+1},
		\end{equation}
		
	\noindent
	where the final term corresponding to $c=N+1$ involves the unstable coupling coefficient
	
		\begin{equation}
			I_N^{N+1} = N^{2N+2}\sum_{\substack{\lambda \vdash N+1 \\ \ell(\lambda) \leq N}} 
			\Omega_{\frac{1}{N}}^{-1}(\lambda) \frac{(\dim \mathsf{V}^\lambda)^2}{(N+1)!}.
		\end{equation}
		
	\noindent
	The term in this sum corresponding to
	the row diagram $\lambda=(N+1)$ contains the factor
	
		\begin{equation}
			\Omega_{\frac{1}{N}}^{-1}(N+1) = \frac{1}{(1+\frac{1}{N})\dots(1+\frac{N}{N})},
		\end{equation}
		
	\noindent
	which does not admit an absolutely convergent $1/N$ expansion due to the oscillatory 
	divergence
	
		\begin{equation}
			 \frac{1}{1+\frac{N}{N}} = 1-1+1-1+\dots.
		\end{equation}

	\noindent
	However, the net contribution of this $\lambda = (N+1)$ term,
	
		\begin{equation}
			\frac{\xi^{N+1}}{(N+1)!} \frac{N^{2N+2}}{(1+\frac{1}{N})\dots(1+\frac{N}{N})} \frac{1}{(N+1)!},
		\end{equation}
		
	\noindent
	is made small both by the coupling prefactor $\frac{\xi^{N+1}}{(N+1)!}$ and the Plancherel weight 
	$\frac{1}{(N+1)!}$, and in particular is exponentially small in $N$ for $\xi>0$ a small enough constant.
	
	We thus have
	
		\begin{equation}
			\frac{\xi^{N+1}}{(N+1)!}\Phi_{Nk}^{N+1} = \frac{\xi^{N+1}}{(N+1)!} \sum_{c=0}^{N+1}
			{N+1 \choose c} \left( N^{2c}\sum_{r=0}^\infty \left(-\frac{1}{N} \right)^r \sum_{\substack{\lambda \vdash c \\ \lambda \neq (N+1)}} f_r(\lambda) 
			\frac{(\dim \mathsf{V}^\lambda)^2}{c!} \right) (E_{Nk}^{-1})^{N+1-c} + O(e^{-N}),
		\end{equation}
		
	\noindent
	with the infinite series at the $c$th term of the convolution being absolutely convergent. 
	Furthermore, for every term except $c=N+1$, the coefficients of the $1/N$ expansion are the 
	Plancherel averages $\langle f_r \rangle$, which are disconnected monotone simple Hurwitz
	numbers.  The $c=N+1$ term is the absolutely convergent series 
	
		\begin{equation}
		\label{eqn:TruncatedExpectations}
			\frac{\xi^{N+1}}{(N+1)!} \sum_{r=0}^\infty \left(-\frac{1}{N} \right)^r \sum_{\substack{\lambda \vdash N+1 \\ \lambda \neq (N+1)}} f_r(\lambda) 
			\frac{(\dim \mathsf{V}^\lambda)^2}{(N+1)!}, 
		\end{equation}
		
	\noindent
	whose coefficients are not Plancherel expectations, and therefore not monotone Hurwitz numbers. 
	The sum will of course become divergent again if we attempt to complete the sum over Young diagrams
	to a Plancherel expectation at each order of the $1/N$ expansion. However, we can complete any finite number 
	terms in the $1/N$ expansion without losing convergence --- in particular, we can complete
	the coefficients of the $1/N$ expansion out to order $r=2k-2+2d$ to match the Riemann-Hurwitz formula
	and get a topological expansion to genus $k$. The question is how much this completion costs as a function
	of $N$ and $k$.
	
	More precisely, for each $r \in \N_0$ we have
	
		\begin{equation}
			\sum_{\substack{\lambda \vdash N+1 \\ \lambda \neq (N+1)}} f_r(\lambda) 
			\frac{(\dim \mathsf{V}^\lambda)^2}{(N+1)!} = \sum_{\lambda \vdash N+1} f_r(\lambda) 
			\frac{(\dim \mathsf{V}^\lambda)^2}{(N+1)!} - f_r(1,2,\dots,N)\frac{1}{(N+1)!},
		\end{equation}
		
	\noindent
	so that completion of the sum over Young diagrams at order $1/N^r$ to the Plancherel expectation 
	$\langle f_r \rangle$ costs
	
		\begin{equation}
			f_r(1,2,\dots,N)\frac{1}{(N+1)!} = \stirling{N+r}{N} \frac{1}{(N+1)!} < N^r {N + r \choose N} \frac{1}{(N+1)!},
		\end{equation}
		
	\noindent
	where we have used the standard bound on Stirling numbers of the second kind by the 
	corresponding binomial coefficient. Thus, completing the truncated Plancherel expectations
	in \eqref{eqn:TruncatedExpectations} to order $r=2k-2+2d = 2k-2+2(N+1)=2k+2N$ has cost bounded by 
	
		\begin{equation}
			\frac{\xi^{N+1}}{(N+1)!(N+1)!} \sum_{r=0}^{2k+2N} \left(-\frac{1}{N}\right)^r N^r 
			{N+r \choose N},
		\end{equation}
		
	\noindent
	which is in turn bounded by 
	
		\begin{equation}
			\frac{\xi^{N+1}}{(N+1)!(N+1)!} \sum_{r=0}^{2k+2N} {N+r \choose N} = \frac{\xi^{N+1}}{(N+1)!(N+1)!} {3N+2k+1 \choose N+1},
		\end{equation}
		
	\noindent
	where the sum has been evaluated using the hockey stick identity, 
	and the overall result is again negligible due to the small coupling factor and the 
	Plancherel weight. Thus up to negligible terms the topologically normalized coupling 
	coefficient $\frac{\xi^{N+1}}{(N+1)!}\Phi_{Nk}^{N+1}$ is given by
	
		\begin{equation}
			\frac{\xi^{N+1}}{(N+1)!}\sum_{c=0}^{N+1} {N+1 \choose c} 
			\left(\sum_{g=-c+1}^k N^{2-2g} \mon_g^{\bullet c} \right) (E_{N\overline{k}}^{-1})^{N+1-c},
		\end{equation}
		
	\noindent
	which cancels up to a tail sum which is $O_k(N^{-2k})$ by the same argument
	used in Theorem \ref{thm:StableCriticalBounds}.
		
	A further illustration is provided by the case of trivial external fields for $I_N^{(1)}$ and $I_N^{(2)}$.
	Recall from the proof of Theorem \ref{thm:StrongCouplingApproximation} that
	the evaluation of the BGW coupling coefficients $I_N^{(1)d}$ at $(1,\dots,1)$ 
	is given by the sums
		
		\begin{equation}
			L_N^d = N^{2d} \sum_{\substack{\lambda \vdash d \\ 
			\ell(\lambda) \leq N}} \frac{(\dim \mathsf{V}^\lambda)^2}{d!},
		\end{equation}
		
	\noindent
	while the evaluation of the HCIZ coupling coefficients $I_N^{(2)d}$ 
	at $(1,\dots,1)$ is given by 
	
		\begin{equation}
			E_N^d = N^{2d}.
		\end{equation}
	
	\noindent	
	In order to proceed, we also need to evaluate the $\|\cdot\|$-absolutely 
	convergent series 
	
		\begin{equation}
		\begin{split}
			E_{N\overline{k}}^{(1)d} &= \sum_{g=-d+1}^\infty N^{2-2g} \sum_{\alpha \vdash d} (-1)^{\ell(\alpha)+d}
			\frac{p_\alpha(a_1,\dots,a_N)}{N^{\ell(\alpha)}}\mon_{g\overline{k}}(\alpha) \\
			E_{N\overline{k}}^{(2)d} &= \sum_{g=-d+1}^\infty N^{2-2g} \sum_{\alpha,\beta \vdash d} (-1)^{\ell(\alpha)+\ell(\beta)}
			\frac{p_\alpha(a_1,\dots,a_N)}{N^{\ell(\alpha)}} \frac{p_\beta(b_1,\dots,b_N)}{N^{\ell(\beta)}}\mon_{g\overline{k}}(\alpha,\beta) 
		\end{split}
		\end{equation}
	
	\noindent
	at the point $(1,\dots,1)$. Observe that by Theorem \ref{thm:Cancellation}, we have
	
		\begin{equation}
			F_{Ng}^{(1)}(z,1,\dots,1) = F_{Ng}^{(2)}(z,1,\dots,1) = \delta_{g0} z,
		\end{equation}
		
	\noindent
	so that 
	
		\begin{equation}
			E_{N\overline{k}}^{(1)}(z,1,\dots,1) = E_{N\overline{k}}^{(2)}(z,1\dots,1) = e^{N^2z}
		\end{equation}
		
	\noindent
	for all $k \in \N_0$. Thus, the topologically normalized coupling coefficients $\Phi_{Nk}^{(1)d}$ and
	$\Phi_{Nk}^{(2)d}$ evaluated at $(1,\dots,1)$ are simply the binomial convolutions 
	
		\begin{equation}
			\Phi_{Nk}^{(1)d}(1,\dots,1) = \sum_{c=0}^d {d\choose c} L_N^c (E_N^{-1})^{d-c} 
		\end{equation}
	
	\noindent
	and 
	
		\begin{equation}
			\Phi_{Nk}^{(2)d}(1,\dots,1) = \sum_{c=0}^d {d\choose c} E_N^c (E_N^{-1})^{d-c},
		\end{equation}
	
	\noindent
	for all $k \in \N_0$. In the $m=2$ case, we have perfect cancellation for all values of $d$,
	
		\begin{equation}
			\Phi_{Nk}^{(2)d}(1,\dots,1) =(N^2-N^2)^d = 0,
		\end{equation}
		
	\noindent
	but in the case $m=1$ this only holds in the stable range $1 \leq d \leq N$, where $L_N^d=E_N^d=N^{2d}$.
	For general $d$ we have
	
		\begin{equation}
			\frac{\xi^d}{d!}\Phi_{Nk}^{(1)d}(1,\dots,1) = \frac{\xi^d}{d!} \sum_{c=0}^d {d\choose c} \left(N^{2c} 
			\sum_{\substack{\lambda \vdash c \\ \ell(\lambda) \leq N}} \frac{(\dim \mathsf{V}^\lambda)^2}{c!} \right)(-N^2)^{d-c},
		\end{equation}
		
	\noindent
	and the fact that this is $O_k(N^{-2k})$ for any $k \in \N_0$ provided $d < \frac{1}{4}N^2$ is 
	precisely Lemma 2.3 in \cite{Johansson:MRL}. In particular, the fieldless case of 
	Theorem \ref{thm:Main} for the BGW integral $I_N^{(1)}$ is the following.
				
		\begin{thm}
		\label{thm:GrossWitten}
			There exists $\varepsilon > 0$ such that $L_N$ is 
			nonvanishing on $\D(\varepsilon) \subset \C$ for all
			$N \in \N$, and for each $k \in \N_0$ we have
				
				\begin{equation*}
					\lim_{N \to \infty} N^{2k-2}\left\| \log L_N - N^2z \right\|_\varepsilon =0.
				\end{equation*}
		\end{thm}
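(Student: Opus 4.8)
The plan is to deduce Theorem~\ref{thm:GrossWitten} as the trivial--field specialization of the machinery of Section~\ref{sec:Large}, carried out for the single--variable entire function $L_N(z)=I_N^{(1)}(z,1,\dots,1)$. First I would observe that the reductions of Section~\ref{sec:Large} --- Theorems~\ref{thm:ReductionToUniformBoundedness}, \ref{thm:ReductionToConcentration} and \ref{thm:ReductionToCriticalBounds} --- only ever manipulate the dependence on the coupling variable $z$ (via Cauchy's estimate, the Borel--Carath\'eodory inequality, and the truncation argument of Theorem~\ref{thm:StrongCouplingApproximation}), so they apply verbatim with every occurrence of $\|\cdot\|$ replaced by evaluation of modulus at the point $a_1=\dots=a_N=1$. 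Consequently, it suffices to produce an absolute constant $\xi>0$ and, for each $k\in\N_0$, a constant $R_k\ge 0$, such that $\tfrac{\xi^{d}}{d!}\,|\Phi_{Nk}^{(1)d}(1,\dots,1)|\le R_kN^{-2k}$ for all $1\le d\le tN^{2}$ and all large $N$, where $\Phi_{Nk}=E_{N\overline k}^{-1}I_N^{(1)}$ and $t\in(0,\tfrac14)$ is the truncation parameter.

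The next step is the collapse of the topological targets at trivial fields. By Theorem~\ref{thm:Cancellation} we have $F_{Ng}^{(1)}(z,1,\dots,1)=\delta_{g0}z$ for every $g$, so $E_{N\overline k}^{(1)}(z,1,\dots,1)=e^{N^{2}z}$ for all $k$ and hence
\[
\Phi_{Nk}^{(1)d}(1,\dots,1)=\sum_{c=0}^{d}\binom{d}{c}L_N^{c}\,(-N^{2})^{d-c},
\qquad L_N^{c}=N^{2c}\,\P[\LIS_c\le N],
\]
by \eqref{eqn:Rains} (with $L_N^{0}=1$). In the stable range $1\le d\le N$ one has $\P[\LIS_c\le N]=1$ for every $0\le c\le d$, so $L_N^{c}=N^{2c}$ and the binomial convolution telescopes to $(N^{2}-N^{2})^{d}=0$; the critical bound holds there for free.

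The real content is the unstable critical range $N<d\le tN^{2}$, and this is where I expect the main obstacle to lie. Here $L_N^{c}$ deviates from $N^{2c}$ once $c>N$, so the convolution no longer vanishes, and the needed estimate $\tfrac{\xi^{d}}{d!}|\Phi_{Nk}^{(1)d}(1,\dots,1)|=O_k(N^{-2k})$ is exactly Lemma~2.3 of \cite{Johansson:MRL}; I would either cite it directly or reprove it via the Plancherel concentration mechanism sketched above. Namely, for $d<\tfrac14N^{2}$ the Plancherel measure on partitions of $d$ lives, up to exponentially small mass, on diagrams contained in the $N\times N$ square, on which $\Omega_{\frac1N}^{-1}(\lambda)$ has an absolutely convergent $1/N$--expansion with coefficients $f_r(\lambda)$; completing the truncated sums over diagrams to genuine Plancherel expectations out to order $r=2k-2+2d$ costs only terms killed by the coupling prefactor $\xi^{d}/d!$ and the Plancherel weight $1/d!$ (bounded via the standard Stirling--number and hockey--stick estimates), and after completion the convolution cancels every cover of genus $\le k$, leaving a tail $\sum_{g>k}N^{2-2g}\mon_g^{\bullet d}$ controlled by Theorem~\ref{thm:LargeGenusAsymptotics} exactly as in the proof of Theorem~\ref{thm:StableCriticalBounds}. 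With the critical bounds established for all $1\le d\le tN^{2}$, Theorem~\ref{thm:ReductionToCriticalBounds} in its one--variable incarnation then produces an absolute $\varepsilon>0$ with $L_N$ nonvanishing on $\D(\varepsilon)$ for all $N$ and with $N^{2k-2}\|\log L_N-N^{2}z\|_\varepsilon\to 0$, which is the assertion.
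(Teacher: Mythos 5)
Your proposal follows essentially the same route as the paper: specialize to trivial fields, use Theorem \ref{thm:Cancellation} to collapse the topological normalization to $e^{N^2z}$, observe the perfect cancellation $(N^2-N^2)^d=0$ in the stable range, invoke Lemma 2.3 of \cite{Johansson:MRL} (equivalently the Plancherel concentration mechanism) for the unstable critical range $N<d\leq tN^2$, and then feed the resulting critical bounds into the reduction theorems. This matches the paper's own derivation, which presents Theorem \ref{thm:GrossWitten} as exactly this fieldless specialization.
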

	
	We remark that the lack of higher-order corrections
	to the large $N$ limit of the fieldless BGW free energy has been the source of some 
	amazement \cite{Goldschmidt,Samuel}, but as we have explained here this is a direct
	consequence of the fact that the fieldless HCIZ integral is simply an exponential function, which 
	together with Theorem \ref{thm:LargeNExpansion} implies Theorem \ref{thm:Cancellation}.
	This underscores how clarifying the unified description of these integrals 
	as generating functions for monotone single and double Hurwitz numbers presented here can be.

	\bibliographystyle{amsplain}

\end{document}